\documentclass[12pt]{article}       
\usepackage{color}
\usepackage[utf8]{inputenc}
\usepackage[T1]{fontenc}
\usepackage{babel}
\usepackage{mathtools}
\usepackage{amsfonts}
\usepackage{amssymb}
\usepackage{graphicx}
\usepackage{dsfont}
\usepackage{epsfig}
\usepackage{epstopdf}
\usepackage{ae}
\usepackage{amsmath}
\usepackage{amsthm}
\usepackage{thmtools}
\usepackage{endnotes}
\usepackage{setspace}
\usepackage{amsthm}
\usepackage{appendix}
\usepackage{float}
\usepackage{multirow}
\usepackage{chngpage} 
\usepackage{bbm}
\usepackage{subcaption}
\usepackage[round]{natbib}
\usepackage{enumerate}
\usepackage{xcolor}
\usepackage{url}
\usepackage{booktabs}

\newcommand{\F}{\mathcal{F}}

\newcommand{\R}{\mathbb{R}}
\newcommand{\Q}{\mathbb{Q}}
\renewcommand{\P}{\mathbb{P}}
\newcommand{\D}{\mathcal{D}}

\renewcommand{\[}{\left[}

\renewcommand{\tilde}{\widetilde}
\newcommand{\1}{\mathds{1}}

\newtheorem{theo}{Theorem}[section]

\newtheorem{lem}{Lemma}[section]
\newtheorem{co}{Corollary}[section]
\newtheorem{assumption}{Assumption}[section]
\theoremstyle{definition}
\newtheorem{rem}{Remark}[section]

\renewcommand{\geq}{\geqslant}
\renewcommand{\leq}{\leqslant}
\renewcommand{\ge}{\geqslant}
\renewcommand{\le}{\leqslant}
\renewcommand{\bar}{\overline}

\renewcommand\thmcontinues[1]{Continued}

\renewcommand{\d}{\textrm{d}}
\newcommand{\Var}{\text{Var}}

\DeclarePairedDelimiter{\norm}{\lVert}{\rVert}
\DeclareMathOperator{\Cov}{Cov}
\DeclareMathOperator{\E}{\mathbb{E}}
\DeclareMathOperator{\arginf}{\arg\inf}

\usepackage{natbib}

\begin{document}
	
	\author{ 
    \textsc{Nicole B\"{a}uerle}\footnote{Department of Mathematics, Karlsruhe Institute of Technology, Germany, nicole.baeuerle@kit.edu} \footnote{The author gratefully acknowledges financial support from the Deutsche Forschungsgemeinschaft (DFG), Project-ID 509303834, FOR 5583 ``Asset Allocation and Asset Pricing under Regulatory Uncertainty''}
    \and
    \textsc{Anne MacKay}\footnote{Department of Mathematics, Universit\'{e} de Sherbrooke, Canada, anne.mackay@usherbrooke.ca} \footnote{The author gratefully acknowledges financial support from NSERC Discovery Grant number RGPIN-2024-05794 and from the International Excellence Fellowship, supported by the University of Excellence concept of the Karlsruhe Institute of Technology (KIT).}
	}
	\title{Mean-Variance Optimization in Ambiguous Financial Markets with Learning}
	\date{\today}
	
	\maketitle

    \begin{abstract}
        We consider a continuous time investment problem in a multi-asset Black-Scholes market with the following features: The assets' drifts are not known and constitute a source of model ambiguity. However, there is a prior distribution (knowledge)  on the possible drifts. Our investor is ambiguity averse and wants to maximize a mean-variance criterion for the terminal wealth where ambiguity aversion is incorporated in a smooth way. We consider here the criterion introduced in \cite{maccheroni_alpha_2013} where the variance is decomposed and each part is weighted differently to account for different levels of market risk and model ambiguity aversion. We use a novel approach to find the optimal dynamic investment strategy within the class of all adapted strategies which allow for learning. We also present a number of numerical results which help to understand how the model parameters affect the optimal investment strategy. In general it turns out that ambiguity averse investors invest less in the risky assets. 
    \end{abstract}

\vspace{0.3cm}
\begin{minipage}{14cm}
{\small
\begin{description}
\item[\rm \textsc{ Key words}]
{\small Portfolio Optimization, Bayesian Problem, Mean Variance,  Smooth Ambiguity, Duality Theory, Fredholm Integral equation}
\item[\rm \textsc{MSC classifications}] {\small 91G10, 93E11, 93E20, 45B05  }
\end{description}
}
\end{minipage}
\section{Introduction}
The mean-variance criterion is commonly used to identify optimal asset allocation strategies. It can be traced back to \cite{MV52} and - less well-known - to de Finetti in 1940 (see \cite{Pressacco2007}). It provides a tractable framework for balancing expected return against risk and has first been treated in a static framework. It can also be seen as a good approximation to utility maximization, not only in the context of a Gaussian distribution, \cite{markowitz_meanvariance_2014,zakamouline2009generalisation} or as an approximation of the entropic risk measure as a risk-sensitive criterion, \cite{bielecki2003economic,bauerle_markov_2024}.
Enventually, the mean-variance criterion has also been applied in a continuous-time trading framework, in particular when the stock price evolution is given by the Black Scholes model, see \cite{korn1997optimal,bajeux1998dynamic} for some early references.

The main aim of this paper is to consider a continuous-time portfolio allocation problem with the mean-variance criterion under ambiguity, i.e.\ the investor is unsure about the correct probabilistic financial market model. More precisely, we consider the Black-Scholes financial market where the drift of the assets is not known. This is a realistic feature since the drift of the stocks is notoriously difficult to estimate. We take a Bayesian perspective and assume the existence of a prior distribution over the possible drifts. When the investor is ambiguity neutral and just integrates over all possible drifts w.r.t.\ their distribution, this is simply the well-known Bayesian case, which has been considered in the literature. We treat a similar setting but consider an ambiguity averse investor. To represent this situation, a suitable criterion has been introduced in \cite{maccheroni_alpha_2013} as a generalization of the classical Arrow-Pratt approximation of the certainty equivalent. In our case, it can also be seen as a generalization of the variance decomposition formula. Suppose that $X$ is a random wealth which depends on unknown (random) parameters $\Theta$. Applying the variance decomposition formula to the mean variance criterion with a penalty weight $\alpha<0$ yields
\begin{align}\nonumber
      \E[X] +\alpha \Var(X)= \E[X] +\alpha \E[\Var(X|\Theta)]+\alpha \Var(\E[X|\Theta]).
\end{align}
In an ambiguity averse setting, we weight the two parts of the variance differently. More precisely, let  $\alpha,\beta <0$.  We consider the objective function
\begin{align}\nonumber
      \E[X] +\alpha \Var(X)= \E[X] +\alpha \E[\Var(X|\Theta)]+\beta \Var(\E[X|\Theta]).
\end{align}
The part  $\Var(\E[X|\Theta])$ can be seen as an ambiguity premium and we thus assume that $\beta<\alpha<0.$ Simple static portfolio problems with this criterion have been solved in \cite{maccheroni_alpha_2013}. In this paper  we show --- for the first time --- how to solve the dynamic asset allocation problem with this criterion. We solve the problem over all admissible, adapted strategies that take statistical learning of the parameters into account. We use a novel approach to achieve this. In the case of a discrete prior, the solution hinges on the solution of a linear equation. In the case of a continuous prior, we have to solve an integral equation to obtain the solution. \\

{\em Related literature. }
The {\em classical continuous-time mean variance problem} has been solved in \cite{korn1997optimal,bajeux1998dynamic,zhou_continuous-time_2000,zhou2003markowitz}. The latter paper also includes a regime-switching feature of the model parameters. The papers differ in their precise formulation of the optimization criterion; the mean-variance problem can either be formulated as a constrained problem or with a weighted criterion. Both formulations are essentially equivalent, since the weighted criterion is the Lagrangian of the constrained problem, and both lead to a stochastic linear-quadratic (LQ) control problem. In \cite{li2002dynamic}, the authors consider 
a mean-variance portfolio allocation problems in continuous-time under the constraint that short-selling of stocks is prohibited. The problem is formulated as
a stochastic optimal LQ-problem with constraints. \cite{bielecki2005continuous}
study a continuous-time mean-variance portfolio selection problem  where all the market coefficients are random and the wealth process under any admissible trading strategy is not allowed to be below zero at any time.
More general financial market models are considered in \cite{xia2006markowitz}, where an incomplete semimartingale model is treated and solved with duality methods. Different notions of mean-variance optimality, in particular an instantaneous version, can be found in \cite{maclean_mean-variance_2011,pedersen2017optimal}. The problem of finding time-consistent optimal asset allocation strategies within restricted classes of policies is considered in \cite{basak2010dynamic,bjork2014mean}. Let us finally mention \cite{lim2004quadratic} who treats a closely related problem, namely a quadratic hedging problem and uses backward stochastic differential equations to solve the appearing LQ-problems.

A subset of the literature on continuous-time mean variance optimal strategies is concerned with the {\em Bayesian mean-variance problem}, meaning that some parameters of the model, e.g.\ the drifts, are unknown, and information on these unknown parameters is only given by a prior distribution. The probability measure of the Bayesian problem is given by the product measure of the prior and the measure for stock price movement, e.g. the law of the Wiener process. The general approach here is to derive the conditional distribution of the unknown parameters given market observations, also known as filter process, to be included in the state space. A solution in the setting of general stochastic parameters can be found in \cite{xiong_meanvariance_2007}, and in the case of an unknown drift, in \cite{xiong2021mean}. In \cite{xiong2020mean} the authors consider partially observed point processes for the stock price movement. The extension to a hidden Markovian regime-switching Black–Scholes model is treated in \cite{elliott2010mean,yang2015mean}. In \cite{liu2016optimal}, inflation is added to the model and influences the asset prices and can be observed directly or through a noisy observation.

Finally let us mention some related papers which deal with {\em portfolio optimization under ambiguity.} Ambiguity as an important aspect of decision making has already been described in \cite{ellsberg1961risk}. More precisely, people tend to be ambiguity averse when they can choose between a known model and an unknown model; they
prefer to invest in the known model even when the odds seem to be the same in both cases.
\cite{dimmock2015estimating} develop a method to estimate multiple prior models of decision-making under ambiguity. In a representative sample of the U.S. population, they measure ambiguity attitudes in the gain and loss domains.
In \cite{brenner2018asset}, the authors develop an empirical methodology for measuring the degree of ambiguity and for assessing attitudes toward ambiguity from market data. \cite{easley2009ambiguity} investigate the implications of ambiguity aversion for performance and regulation of markets. In their model, agents’ decision making may incorporate both risk and ambiguity, and the authors demonstrate that nonparticipation arises from the rational decision by some traders to avoid ambiguity. Keeping these findings in mind, several ways to mathematically model ambiguity aversion have been proposed. For example, in \cite{garlappi2007portfolio} the authors consider a static mean-variance problem where the drift is unknown and has to be estimated. Ambiguity aversion is modeled by a max-min optimization where the mean-variance objective is maximized under the minimal value for the drift which is within the estimation error. Such a robust approach is also treated in \cite{ismail2019robust} for a continuous-time investment problem where the model uncertainty affects the covariance matrix of multiple risky assets. The problem is formulated into a min–max mean‐variance problem over a set of nondominated probability measures that is solved by a McKean–Vlasov dynamic programming approach. In \cite{van_staden_surprising_2021}, the authors discuss the robustness of the optimal mean-variance portfolio to model misspecification. 

Since the robust approach is sometimes considered to be too pessimistic and because in that framework, the likelihood of the different models have no impact, \cite{klibanoff2005smooth} introduced the concept of smooth ambiguity. 
This idea has been used in \cite{maccheroni_alpha_2013} to deriving a tractable mean-variance model adjusted for ambiguity. The authors also solve a static asset allocation problem in this framework. The same criterion has also been investigated in \cite{zhang2024robust}, Sec.\ 3. However, the author restricts to solving this problem in the class of equilibrium strategies which implies that the unknown drift parameters cannot be learned, cp.\ also \cite{balter2021time} where the same is done for a  power utility investment problem under ambiguity. On the other hand, \cite{bauerle_optimal_2024} were the first to solve a smooth ambiguity asset allocation problem within a Black-Scholes market with unknown drift parameters, under the assumption that parameters can be learned. This work is performed under a pair of power utilities for market and model risk. Here, instead, we consider the mean-variance criterion which cannot be treated in the same way due to its non-linearity features.

Finally, it is also important to mention that the recent advancements in machine and reinforcement learning have brought a new perspective on the mean-variance portfolio optimization problem. 
Recently, \cite{huang2024mean} propose an adaptation of a policy gradient-based actor-critic algorithm to the mean-variance optimization problem. 
Their approach is data-driven and does not rely on a specific model, other than the market being described by diffusion processes.
In this sense, their method also includes model ambiguity; however, ambiguity aversion is not explicitly included in their objective function.
For an overview of the literature on reinforcement learning for the mean-variance optimization problem, we refer the reader to the introduction of \cite{huang2024mean}.

The paper is organized as follows. In the next section, we introduce the multi-asset Black-Scholes financial market model with unknown drift and recall important facts about filtering, which are used in the sequel. Throughout, we work with a complete financial market. In Section 3, we first recall the solution of the mean-variance problem in the classical Bayes case, which corresponds to an ambiguity neutral investor. We solve the problem using the duality or martingale approach, which means that we first determine the optimal terminal wealth and in a second step, derive its replicating strategy. In the second part of Section \ref{sec:main} we investigate the model with ambiguity. We first treat the case where the prior is concentrated on a finite number of points. We use again the duality approach and exploit the fact that the variance can be written as an optimization problem. The optimal terminal wealth can be derived in terms of the solution of a specific linear equation. The optimal investment strategy then follows in an established way. The same approach can be used in the case of a continuous prior where however, the linear equation is replaced by an integral equation, which is much more demanding from a theoretical point of view. We also derive the complete solution in case of a single stock normal prior for the drift. Finally, in Section \ref{sec:experiments}, we discuss and interpret numerical examples, and Section \ref{sec:conclusions} concludes.

\section{Financial model}\label{sec:financial_market}

We consider a filtered probability space $(\Omega,\F,(\F_t),\P)$ and fix a finite time horizon $T > 0$. We also let $W = (W_1(t),\ldots,W_d(t))_{t\geq 0}$ be a standard $d$-dimensional Wiener process and denote by $\F^W = (\F^W_t)_{t\geq 0}$ its augmented filtration.

\subsection{Financial market}

The financial market contains $d$ risky assets with price process $S=(S_1(t),\ldots,S_d(t))_{0\leq t \leq T}$ and a bond. For simplicity, we assume that the bond price is constant at 1. For each asset $i \in \{1,\ldots,d\}$ we assume that $S_i(0)=1$ and the dynamics of its price $S_i$ are given by
\begin{align*}
	\d S_i(t) &= S_i(t)\Big(\mu_i \, \d t + \sum_{j=1}^d \sigma_{ij} \d W_j(t)\Big),
\end{align*}
where $\sigma_{ij} \in \mathbb R_+$ for $j \in \{1,\ldots,d\}$ and $\sigma = (\sigma_{ij})_{i,j=1}^d$ is invertible.
In general, we will assume that the drift $\mu = (\mu_1,\ldots,\mu_d)^\top$ is unknown to the investor and therefore modeled as a random variable taking values in $\mathbb R^d$. Equivalently, we may assume that the market price of risk $\theta := \sigma^{-1}\mu$ is unknown. We denote the corresponding random variable by $\Theta$ and assume here that the (prior) distribution $\Q_0(A)= \P(\Theta \in A)$, $A \in \mathcal B(\R^d)$ is known and $\Theta$ is independent of the Wiener process $W$. More precisely our probability measure $\P$ may be written as $\P = \Q_0\otimes \P^W$ where $\Q_0$ is the prior distribution of $\Theta$ and $\P^W$ the distribution of the Wiener process. In what follows, we denote by $\D \subset \R^d$ the (measurable) support of the prior. Later we will write $\P_\theta(\cdot) = \P(\cdot |\Theta=\theta)$ for the conditional distribution. We assume that 
$$ \int_{\D} \|\theta\| \Q_0(d\theta)<\infty.$$

Let us define $Y(t) = W(t) + \Theta t$, so that for each $i \in \{1,\ldots,d\}$, 
\begin{equation}\label{eq:assetdynamics}
	\d S_i(t) = S_i(t) \left(\sum_{j=1}^d \sigma_{ij} \d Y_j(t)\right), \quad t \in [0,T].
\end{equation}
We further let $\mathcal F^Y = (\mathcal F^Y(t))_{0\leq t \leq T}$ denote the filtration generated by $Y,$ that is
\begin{align*}
	\mathcal F^Y (t) = \sigma(Y(s), 0 \leq s \leq t),
\end{align*}
and remark that this filtration coincides with the one which is generated by the stock price processes $S$. 

\subsection{Admissible investment strategies}
We let $\pi = (\pi_1,\ldots,\pi_d)$ be a $d$-dimensional stochastic process that describes the investment strategy; $\pi_i(t)$ represents the amount invested in the $i^{\text{th}}$ asset at time $t \in [0,T]$.
Since the price of the bond is constant, and because we only consider self-financing strategies, we can describe an admissible investment strategy by the positions taken in the risky assets.
We also restrict ourselves to strategies $\pi$ that are $\F^Y$-adapted; the investor adjusts her strategy after observing asset prices.
The wealth process resulting from the application of an investment strategy $\pi$ is denoted $X^\pi = (X^\pi(t))_{0\leq t \leq T}$ and given by
\begin{align}\label{eq:dX_pi}
	\d X^\pi(t) = \sum_{i=1}^d \pi_i(t) \frac{\d S_i(t)}{S_i(t)} = 
	\pi(t) \sigma \d Y(t),
\end{align}
with $X^\pi(0) = x_0$, the initial wealth invested in the portfolio.
We denote by $\Pi(x_0)$ the set of $\F^Y$-adapted, self-financing strategies with initial wealth $x_0 > 0$ that satisfy
$$ \int_0^T \E \|\pi(t)\|^2 dt <\infty.$$

\subsection{Facts about partially observable financial markets}
In what follows we summarize well-known, important facts about this partially observable financial market, see e.g.\ \cite{jouini_bayesian_2001}.
In order to solve the stochastic control problem we have to consider the filter process, i.e.\ the conditional probability of $\Theta$ being in specific states, given our observations. It holds that  the posterior distribution of $\Theta$ given the observations $\F^Y_t$ 
\begin{align}
	\Q_t(A) \coloneqq \P(\Theta \in A | \F^Y_t), \qquad A \in \mathcal B(\D),
\end{align} 
is given by
\begin{align}\label{eq:filter}
	\Q_t(A) =  \E[\1_A(\Theta)  | \F_t^Y]
			= \left\{ \begin{array}{ll}
				\Q_0(A), &t=0,\\[0.2cm]
				\frac{\int_A \exp(\theta^\top y - \frac 12 \norm{\theta}^2 t) \Q_0(\d \theta)}{\int_{\D} \exp(\theta^\top y - \frac 12 \norm{\theta}^2 t) \Q_0(\d \theta)} \mid_{y = Y(t)}, &t>0.
			\end{array}\right.
\end{align}
In what follows it is convenient to define
\begin{align*}
  & F(t,y) = \int_{\D} \exp\big(\theta^\top y - \frac 12 \norm{\theta}^2 t\big) \Q_0(\d \theta).
\end{align*}
Thus, the time-$t$ mean vector of this conditional distribution, denoted by $\hat \theta(t)$, is the Bayes estimator of $\theta$ on $[0, t]$ and is given by
\begin{align}
	\hat \theta(t) = \int_{\D} \theta \, \Q_t(\d \theta)
	= \begin{cases}
		\int_{\D} \theta \, \Q_0(\d \theta), &t=0,\\
		\frac{\nabla F}{F}(t,Y(t)), &t>0.
	\end{cases}
\end{align}

We further have to compute risk-neutral prices of claims. Thus, we  define the usual state-price density  $Z=(Z(t))_{t\geq 0}$  by
\begin{align}\label{eq:Z}
	Z(t) &= \exp\left(\Theta^\top W(t) + \frac 12 \norm{\Theta}^2 t\right)= \exp\left(\Theta^\top Y(t) - \frac 12 \norm{\Theta}^2 t\right)
\end{align}
and   $\Lambda=(\Lambda(t))_{t\geq 0}$ by $\Lambda(t) = Z^{-1}(t)$. 
Since $\Lambda$ is a $(\F,\P)$-martingale, we can define a risk-neutral measure $\tilde \P$ by
\begin{align*}
	\tilde \P (A) = \E[\Lambda(T)\1_A], \qquad  \, A \in \F_T.
\end{align*}
However, here we have to work with the smaller filtration $\F^Y.$
Thus, we define the $(\F^Y, \tilde \P)$-martingale $\hat Z = (\hat Z(t))_{t \geq 0}$ by
\begin{align}
	\hat Z(t) = \tilde \E[Z(T)| \F^Y_t]
	= \begin{cases}
		1, & t=0,\\
		F(t,Y(t)), &t > 0.
	\end{cases}
\end{align}

We can finally define the innovation process $(N(t))_{t\geq0}$ by
\begin{align}
	N(t) = Y(t) - \int_0^t \hat\theta(s) \,\d s,
\end{align}
which is a $(\F^Y,\P)$ Wiener process.

Further, when we introduce $(\hat \Lambda(t))_{t \geq 0}$, with $\hat \Lambda(t) = \hat Z^{-1}(t)$ for $t \geq 0$ and assume $\pi\in\Pi(x_0)$ is an admissible investment strategy with corresponding wealth $X^\pi$,  then,
	\begin{equation}
		d(\hat \Lambda (t) X^\pi(t)) = \hat \Lambda (t)
		\left[\sigma^\top \pi(t) - \hat\theta(t) X^\pi(t)\right]^\top \d N(t),
	\end{equation}
	so that $(\hat \Lambda (t) X^\pi(t))_{t \geq 0}$ is a  $(\F^Y,\P)$-local martingale.

Setting $\pi\equiv0$ it can be shown that 
	\begin{align*}
		\d\hat\Lambda(t) = -\hat\Lambda(t) \hat\theta^\top(t) \d N(t),
	\end{align*}
	so that $(\hat\Lambda(t))_{t \geq 0}$ is a $(\F^Y, \P)$-martingale.

\begin{lem}
Assume $\pi\in\Pi(x_0)$ is an admissible investment strategy, then we have the following $(\F^Y_t)$-adapted representation for the wealth portfolio
\begin{align*}
	dX^\pi(t) = \pi(t)(\sigma \hat \theta(t) \, \d t + \sigma \d N(t)), \qquad X(0)=x_0.
\end{align*}

\end{lem}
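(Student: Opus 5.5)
The plan is to substitute the definition of the innovation process directly into the wealth dynamics \eqref{eq:dX_pi}. By definition, $N(t) = Y(t) - \int_0^t \hat\theta(s)\,\d s$, so as semimartingales
\begin{align*}
\d Y(t) = \hat\theta(t)\,\d t + \d N(t).
\end{align*}
Inserting this into $\d X^\pi(t) = \pi(t)\sigma\,\d Y(t)$ gives
\begin{align*}
\d X^\pi(t) = \pi(t)\sigma\hat\theta(t)\,\d t + \pi(t)\sigma\,\d N(t),
\end{align*}
which is exactly the claimed representation. The initial condition is $X^\pi(0)=x_0$, as before.

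The content of the lemma is that every term is now $\F^Y$-adapted and driven by an $(\F^Y,\P)$ Wiener process. I would therefore check the following points in order.

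\emph{The filter.} The process $\hat\theta(t) = \frac{\nabla F}{F}(t,Y(t))$ is $\F^Y$-adapted and progressively measurable. It is a continuous function of $(t,Y(t))$ for $t>0$. The integrability $\int_\D\|\theta\|\Q_0(\d\theta)<\infty$ is what allows differentiation under the integral defining $F$.

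\emph{The innovation process.} As stated earlier, $N$ is an $(\F^Y,\P)$ Wiener process.

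\emph{The stochastic integrals.} Since $\pi\in\Pi(x_0)$, we have $\int_0^T\E\|\pi(t)\|^2\,\d t<\infty$. Hence $\int\pi\sigma\,\d N$ is a well-defined $(\F^Y,\P)$ square-integrable martingale.

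\emph{The $\d t$-term.} We need $\int_0^T|\pi(t)\sigma\hat\theta(t)|\,\d t<\infty$ almost surely. This follows from Cauchy–Schwarz, provided $\int_0^T\|\hat\theta(t)\|^2\,\d t<\infty$ a.s. That in turn holds because $\hat\theta$ is continuous on $(0,T]$ and $\|\hat\theta(t)\|\le\E[\|\Theta\|\mid\F^Y_t]$, which is a martingale with almost surely bounded paths.

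The main subtlety is that the integral $\int\pi\sigma\,\d Y$ was originally understood with respect to the semimartingale $Y$ in the larger filtration $\F$. I must justify that it coincides with the $\F^Y$-semimartingale integral against $\hat\theta\,\d t+\d N$. My plan is to note that $Y$ is an $\F^Y$-semimartingale, with decomposition $\int\hat\theta\,\d s + N$ by the innovation result. The integrand $\pi$ is $\F^Y$-predictable (after taking its predictable version). Stochastic integrals of integrands predictable in the smaller filtration against a semimartingale in both filtrations agree, by the standard filtration-shrinking property: approximate by simple integrands and pass to the limit in probability. Linearity of the integral with respect to the integrator then splits the result into the $\d t$ and $\d N$ parts.
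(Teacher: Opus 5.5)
Your proposal is correct and follows the same route the paper implicitly takes: the paper gives no proof and lists the lemma among standard facts (citing Jouini--Napp), where it follows by substituting $\d Y(t)=\hat\theta(t)\,\d t+\d N(t)$ into \eqref{eq:dX_pi}. Your integrability and filtration-consistency checks are extra rigor the paper omits, with one small correction: for $t>0$, differentiating under the integral in $F$ needs no moment condition because the factor $e^{-\frac12\|\theta\|^2 t}$ dominates, and the assumption $\int_{\D}\|\theta\|\,\Q_0(\d\theta)<\infty$ is really what you use for $\hat\theta(0)$ and for the bound $\|\hat\theta(t)\|\le\E[\|\Theta\|\mid\F^Y_t]$.
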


It is important to note that in this setting, the market is complete, in the sense that for any contingent claim $H \in \mathbb H \coloneqq L^2(\Omega, \F^Y_T,\P)$, there exists an $\F^Y$-adapted, square-integrable $d$-dimensional process $(\xi(t))_{t\geq 0}$ satisfying
\begin{equation*}
	H\hat\Lambda(T) = \E[H \hat \Lambda(T)] + \int_0^T \xi(t)^\top \, \d N(t). 
\end{equation*}
Thus, the problem with partial information can be cast into the classical complete market setting with adjusted drift process.

\section{Mean-variance optimization with uncertainty}\label{sec:main}

To study the optimal mean-variance portfolio with parameter uncertainty, we adopt the setting of \cite{maccheroni_alpha_2013}.
To do so, it helps to understand uncertainty as a two-stage random experiment: the outcome of the first experiment is $\theta$, and the second yields a path of the Wiener process from 0 to $T$. The first experiment can be considered as model ambiguity, the second as the 'usual' uncertainty. We want to use a mean-variance criterion for our portfolio optimization problem and at the same time distinguish between the two sources of randomness. Since the studies of \cite{ellsberg1961risk} we know that most people are risk-averse, but also ambiguity averse. Thus, it makes sense to model both separately and study their effect on optimal investment decisions.  In the framework of mean-variance such an approach has been proposed by \cite{maccheroni_alpha_2013}. The authors study in detail the properties and interpretations of the objective, but only solve a static portfolio problem in this setting. In our paper we solve an adapted portfolio problem with learning using the objective in \cite{maccheroni_alpha_2013}

In order to introduce the objective, we need some further notation. Recall that the underlying probability measure is given by $\P = \Q_0\otimes \P^W.$ Thus $\P_\theta(\cdot) = \P(\cdot |\Theta=\theta)$ can be seen as a transition kernel. Thus, for $X \in L^2(\Omega,\mathcal F^Y_T,\P)$, we let
\begin{align*}
  \E[X]=  \E_{ \P} [X] = \int\int X \, \d\P_\theta \d\Q_0,
    \qquad
    \E_{\Q_0} [X] =& \int X \,  \d\Q_0,
    \qquad \text{and} \qquad
    \E_\theta  [X] = \int X \,  \d\P_\theta.
\end{align*}
Now recall the classical variance decomposition formula which reads
\begin{align}\nonumber
    \Var(X) &= \E[\Var(X|\Theta)]+ \Var(\E[X|\Theta])
    = \E[\Var_\theta(X)]+ \Var(\E_\theta[X])
\end{align}
where we have
\begin{eqnarray*}
      \E[\Var_\theta(X)] &=& \int \left(\E_\theta [X^2] -(\E_\theta [X])^2\right) \Q_0(d\theta) = \E[X^2] - \int (\E_\theta [X])^2 \,\Q_0(d\theta) \\
      \Var(\E_\theta[X]) &=& \int (\E_\theta [X])^2 \,\Q_0(d\theta) -\Big( \int \E_\theta [X] \Q_0(d\theta) \Big)^2 = \int (\E_\theta [X])^2 \Q_0(d\theta) -\Big(  \E[X] \Big)^2 
\end{eqnarray*}


The objective function discussed in \cite{maccheroni_alpha_2013} is such that the two parts of the variance decomposition may have different weights. The motivation stems from a derivation of a certainty equivalent in a model with ambiguity. A part of $\Var(\E_\theta[X])$ can be seen as an ambiguity premium.
Thus, for $\alpha,\beta <0$, we consider the objective function

\begin{equation}
\E_{}[X] +\alpha \E[\Var(X|\Theta)]+\beta \Var(\E[X|\Theta])
\label{eq:CX3}
\end{equation}
which is a weighted criterion of expectation, expected variance of uncertainty and variance of expectation which can be seen as an ambiguity premium.

\subsection{The Bayesian case}\label{ssec:bayesian}

We first consider an ambiguity-neutral investor, that is, we want to solve the setting $\alpha=\beta<0.$ More precisely we consider
\begin{align}
	\label{eq:MV_bayesian}
    \sup_{\pi \in \Pi(x_0)} \E[X^\pi(T)]+\alpha \Var(X^\pi(T)).
\end{align}

Solving a Bayesian adaptive portfolio problem with a mean-variance objective function is similar to \cite{jouini_bayesian_2001}, who consider utility maximization instead. A solution (by different means) for the single stock setting can be found in \cite{xiong2021mean}.
In a more general setting, where $\mu$ and $\sigma$ can be adapted (and possibly unobserved) processes, the mean-variance optimization problem is solved in \cite{xiong_meanvariance_2007}. 
The ideas presented below constitute a new approach and summarize results from these works. 

First of all note the well-known fact that 
\begin{equation}\label{eq:variance_identity}
    \Var(X)= \inf_{b\in\R} \E[(X-b)^2].
\end{equation}
We use this to reformulate the objective function \eqref{eq:MV_bayesian} as follows  (cp. \cite{bauerle_time-consistency_2025}) where it is important to note that $\alpha<0$.
\begin{align}
	\label{eq:MV_bayesian_2}
    \sup_{\pi\in \Pi(x_0)} \sup_{b\in\R} \E\Big[X^\pi(T)+\alpha(X^\pi(T)-b)^2\Big].
\end{align}

Next we make use of the fact that we can always interchange $\sup$ and $\sup$.
Thus, the inner optimization problem  for a fixed $b\in\R$ is
\begin{align*}
	\sup_{\pi\in \Pi(x_0)}   \E\Big[X^\pi(T)+\alpha(X^\pi(T)-b)^2\Big],
	\qquad \text{s.t.} \quad 
	\E[\hat \Lambda(T) X^\pi(T)] = x_0.
\end{align*}
We solve this problem by finding first the optimal terminal wealth, which can in a second step be hedged by an investment strategy. This works, because the partial observation model can be reduced to a complete financial market model (see end of Section \ref{sec:financial_market}).
Finding the optimal terminal wealth can be done using the Lagrange function (in what follows we write $ \hat \Lambda$ instead of $ \hat \Lambda(T)$) with Lagrange multiplier $\lambda\in\R$, that is
\begin{align*}
   L(X,\lambda)&:=  \E\Big[X+\alpha(X-b)^2 +\lambda(\hat \Lambda X - x_0)\Big].
\end{align*}
Taking
the Fr\'echet derivative w.r.t. $X$ and setting it to zero yields
\begin{align*}
    1+2\alpha (X-b)+\lambda \hat \Lambda=0.
\end{align*}
Thus, we obtain
\begin{align*}
    X = b -\frac{1}{2\alpha}(1+\lambda \hat \Lambda).
\end{align*}
The Lagrange multiplier $\lambda$ and the parameter $b$ can be obtained from the two constraints 
\begin{itemize}
    \item[(i)] $\E[\hat\Lambda X]=x_0$, (budget constraint)
     \item[(ii)] $\E[ X]=b$. (expectation equality)
\end{itemize}
Note that (ii) immediately implies that $\lambda =-1.$ Plugging this into (i) (note that $\E[ \hat \Lambda]=1$) yields
\begin{align*}
b = x_0 - \frac{1}{2\alpha}\Big(\E[\hat\Lambda^2] -1\Big) \quad \mbox{and}\quad X= 
     x_0 - \frac{1}{2\alpha}\Big(\E[\hat\Lambda^2] -\hat \Lambda\Big)  \end{align*}
Finally, we know that the supremum of the outer optimization problem is attained at $b= \E[X].$ This equation can be solved for $b$ and yields the final solution for the optimal terminal wealth.
Below we summarize our findings, keeping in mind that $\hat \Lambda= F^{-1}(T,Y(T))$. The optimal investment strategy can be derived as in \cite{jouini_bayesian_2001}, Theorem 3.2. We summarize our results in the following theorem.

\begin{theo}\label{theo:Bayes}
    In the model of this section the optimal terminal wealth is given by
\begin{align*}
    X^B = x_0-\frac{1}{2\alpha}\Big(\E[\hat\Lambda^2(T)] - \hat\Lambda(T)\Big).
\end{align*}
It holds that
\begin{align*}
    \E[ X^B] = x_0 -\frac{1}{2\alpha} \Var(\hat\Lambda(T)), \quad \mbox{and} \quad \Var(X^B)= \frac{1}{4\alpha^2} \Var(\hat\Lambda(T)).
\end{align*}
The optimal investment strategy $\pi\in\Pi(x_0)$ is given by
\begin{align*}
    \pi^B(t) &= -(\sigma^\top)^{-1} \frac{1}{2\alpha} \int_{\R^d} \frac{\nabla F(T,Y(t)+z)}{F^2(T,Y(t)+z)} \varphi_{T-t}(z) dz,\quad t\in[0,T]
\end{align*}
where $\varphi_{T-t}(z)$ is the density of the  multivariate normal $\mathcal{N}_d(0,(T-t)I)$.
\end{theo}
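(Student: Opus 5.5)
The proof makes the heuristic Lagrange derivation rigorous, computes the moments of $X^B$, and identifies the hedging strategy. I would proceed in three steps.

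\textbf{Step 1: optimality of the terminal wealth.} Set $\hat\Lambda=\hat\Lambda(T)$ and assume for now that $\E[\hat\Lambda^2]<\infty$. For fixed $b$ and any attainable $X\in\mathbb H$ with $\E[\hat\Lambda X]=x_0$, concavity of $x\mapsto x+\alpha(x-b)^2$ (recall $\alpha<0$) gives the pointwise bound
\begin{align*}
X+\alpha(X-b)^2 \le X^*+\alpha(X^*-b)^2 + \big(1+2\alpha(X^*-b)\big)(X-X^*).
\end{align*}
Here $X^*=b-\frac{1}{2\alpha}(1+\lambda\hat\Lambda)$, so that $1+2\alpha(X^*-b)=-\lambda\hat\Lambda$. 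Taking expectations and using $\E[\hat\Lambda X]=\E[\hat\Lambda X^*]=x_0$, the linear term vanishes. Hence $X^*$ is optimal for the inner problem, with $\lambda$ fixed by the budget constraint, namely $\lambda=(2\alpha(b-x_0)-1)/\E[\hat\Lambda^2]$.

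The outer supremum over $b$ is then handled by exchanging the two suprema. For any fixed $X$ the optimal $b$ is $\E[X]$, and the joint maximizer satisfies $b=\E[X^*]$. This forces $\lambda=-1$ and yields $X^B$. Alternatively, I could verify directly that $X^B$ maximizes $\E[X]+\alpha\Var(X)$ over the budget set. The difference $\E[X]+\alpha\Var(X)-\E[X^B]-\alpha\Var(X^B)$ is a concave quadratic functional whose first variation at $X^B$ vanishes on the affine budget set, because the gradient is $1+2\alpha(X^B-\E X^B)=\hat\Lambda$, which is orthogonal to admissible perturbations.

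\textbf{Step 2: moments.} Since $\E[\hat\Lambda]=\E[\tilde\E\text{-density}]=1$ (as $\hat\Lambda$ is a $\P$-martingale started at $1$), the moments follow directly:
\begin{align*}
\E[X^B]=x_0-\frac{1}{2\alpha}\big(\E[\hat\Lambda^2]-1\big)=x_0-\frac{1}{2\alpha}\Var(\hat\Lambda),
\qquad
\Var(X^B)=\frac{1}{4\alpha^2}\Var(\hat\Lambda).
\end{align*}

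\textbf{Step 3: the strategy.} This is the main obstacle. By completeness, the wealth process is $X(t)=\hat Z(t)\,\E[\hat\Lambda X^B\mid\F^Y_t]$. I would compute it under $\tilde\P$, where $Y$ is a standard Brownian motion, so that $X(t)=\tilde\E[X^B\mid\F^Y_t]$. Since $\hat\Lambda=1/F(T,Y(T))$, this gives
\begin{align*}
X(t)=x_0-\frac{1}{2\alpha}\E[\hat\Lambda^2]+\frac{1}{2\alpha}\,g(t,Y(t)),
\qquad
g(t,y)=\int_{\R^d}\frac{\varphi_{T-t}(z)}{F(T,y+z)}\,dz .
\end{align*}
The function $g$ solves the backward heat equation, so Itô's formula gives $dX(t)=\frac{1}{2\alpha}\nabla_y g(t,Y(t))^\top dY(t)$. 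Comparing with $dX=\pi\sigma\,dY$ yields $\sigma^\top\pi^\top=\frac{1}{2\alpha}\nabla g$. Differentiating under the integral gives $\nabla g=-\int\nabla F/F^2\,\varphi_{T-t}\,dz$, which is exactly the stated $\pi^B$.

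The technical work lies in justifying the differentiation under the integral and the integrability needed for $\pi^B\in\Pi(x_0)$. Both rest on the square integrability of $\hat\Lambda$. For this I would use Jensen's inequality, $F(T,y)\ge\exp\big(\bar\theta^\top y-\frac12\int\|\theta\|^2\,\Q_0(d\theta)\big)$ with $\bar\theta$ the prior mean, which yields Gaussian-type bounds on $1/F$ and its derivatives. If the prior lacks a second moment, I would instead use local bounds combined with Gaussian tails, following \cite{jouini_bayesian_2001}, Theorem 3.2.
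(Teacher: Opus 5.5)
Your proposal is correct and follows essentially the paper's route. The variance is written as $\inf_b \E[(X-b)^2]$, the inner problem is solved via the Lagrangian, and $b=\E[X]$ forces $\lambda=-1$; the strategy comes from computing $\tilde\E[X^B\mid\F^Y_t]$ under $\tilde\P$ and applying It\^o's formula, as the paper does in the appendix for Theorem~\ref{theo:Ambiguity} (here it simply cites \cite{jouini_bayesian_2001}). Your concavity bound and the direct verification make rigorous the paper's heuristic first-order condition and its unproved claim that the outer supremum is attained at $b=\E[X]$. One small slip: the Jensen bound should read $F(T,y)\ge\exp\big(\bar\theta^\top y-\frac T2\int\|\theta\|^2\,\Q_0(d\theta)\big)$.
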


\begin{rem}
    In the case $d=1$ (only one risky asset), the optimal terminal wealth coincides with the previous findings in the literature, e.g. \cite{xiong2021mean}, Lemma 3.1.
\end{rem}

\subsection{The model with ambiguity}\label{ssec:main_ambiguity}
As motivated at the beginning of this section we now consider the objective 
\begin{equation}
\E_{}[X] +\alpha \E[\Var(X|\Theta)]+\beta \Var(\E[X|\Theta])
\label{eq:CX2}
\end{equation}
with $\alpha,\beta<0.$ Recall that for $\alpha=\beta$ we are back in the previous setting. \cite{maccheroni_alpha_2013} called this a 'robust mean-variance' functional and derived it as a local approximation of a KMM preference functional. Now we consider the investment problem
\begin{align}
	\label{eq:MV_bayesian_A}
    \sup_{\pi \in \Pi(x_0)} \E[X^\pi(T)]+\alpha  \E[\Var(X^\pi(T)|\Theta)]+\beta \Var(\E[X^\pi(T)|\Theta]).
\end{align}
In order to solve this problem we follow an approach similar as before, but with some important variations. First note that we can re-write the objective function as
\begin{align}\nonumber
   & \E[X^\pi(T)]+\alpha  \E[\Var(X^\pi(T)|\Theta)]+\beta \Var(\E[X^\pi(T)|\Theta]) \\  \nonumber =
    & \E[X^\pi(T)]+\alpha \Var(X^\pi(T)) +(\beta-\alpha) \Var(\E[X^\pi(T)|\Theta]) \\ \label{eq:MVobjective_ambiguity} =& 
   \E[X^\pi(T)]+\beta  \Var(X^\pi(T))+(\alpha-\beta) \E[\Var(X^\pi(T)|\Theta)] 
\end{align}   

From the second line of this representation we see that in order to penalize the model ambiguity it is reasonable to assume that $(\beta - \alpha)$ is negative, since $\Var(\E[X^\pi(T)|\Theta])$ measures the variability of the expected terminal wealth across the unknown parameter. Thus, we assume that $\beta<\alpha<0$. Hence the value in \eqref{eq:MV_bayesian_A} is bounded from above by 
\begin{align*}
   & \sup_{\pi \in \Pi(x_0)} \E[X^\pi(T)]+\alpha  \E[\Var(X^\pi(T)|\Theta)]+\alpha \Var(\E[X^\pi(T)|\Theta])\\
   =&   \sup_{\pi \in \Pi(x_0)} \E[X^\pi(T)]+\alpha  \Var(X^\pi(T))<\infty.
\end{align*} 
The last expression is the objective function in the Bayesian case from which we already know that it is finite for all $\alpha<0.$ To solve the optimization problem, we consider \eqref{eq:MVobjective_ambiguity}. Observe that due to the definition of the conditional expectation as a projection in $L^2$ and the tower property of conditional expectation, we have
\begin{align*}
  \E[\Var(X^\pi(T)|\Theta)]= \E[\E[(X^\pi(T)-\E[X^\pi(T)|\Theta])^2|\Theta]] =   \inf_{b \in L^2(\D)} \E[(X^\pi(T)-b(\Theta))^2]
\end{align*}
where 
\begin{align*}
    \arginf_{b \in L^2(\D)} \E[(X^\pi(T)-b(\Theta))^2] = \E[X^\pi(T)|\Theta].
\end{align*}
Thus, our objective function is
\begin{align*}
    \sup_{\pi \in \Pi(x_0)} \inf_{b \in L^2(\D)}  \E[X^\pi(T)]+\beta  \Var(X^\pi(T))+(\alpha-\beta) \E[(X^\pi(T)-b(\Theta))^2].
\end{align*}
Next, it is possible to interchange $\sup$ and $\inf.$ The proof is deferred to the appendix.

\begin{lem}\label{lem:interchange}
    In the above setting we obtain
    \begin{align*}
& \sup_{\pi \in \Pi(x_0)} \inf_{b \in L^2(\D)}  \E[X^\pi(T)]+\beta  \Var(X^\pi(T))+(\alpha-\beta) \E[(X^\pi(T)-b(\Theta))^2]\\
= &   \inf_{b \in L^2(\D)}  \sup_{\pi \in \Pi(x_0)} \E[X^\pi(T)]+\beta  \Var(X^\pi(T))+(\alpha-\beta) \E[(X^\pi(T)-b(\Theta))^2]
   \end{align*}  
\end{lem}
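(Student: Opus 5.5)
We want a minimax equality for
\[
J(\pi,b):=\E[X^\pi(T)]+\beta\Var(X^\pi(T))+(\alpha-\beta)\E[(X^\pi(T)-b(\Theta))^2].
\]
The inequality $\sup\inf\le\inf\sup$ is always true. For the reverse we exhibit a saddle point, and we organize the argument around a convex–concave structure.

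Since the market is complete (end of Section~\ref{sec:financial_market}), we first reduce from strategies to terminal wealth. The set $\{X^\pi(T):\pi\in\Pi(x_0)\}$ coincides with the affine closed subspace
\[
\mathcal A:=\{X\in\mathbb H:\E[\hat\Lambda(T)X]=x_0\}.
\]
This uses the martingale representation and the square-integrability of the hedging integrand; the latter should follow from $\hat\Lambda$ being a square-integrable martingale, otherwise we work on a dense subset. The problem thus becomes $\sup_{X\in\mathcal A}\inf_{b\in L^2(\D)}J(X,b)$.

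Next we check the convexity structure. In $b$, the map $J(X,\cdot)$ is a convex quadratic, since $\alpha-\beta>0$. In $X$, write
\[
J(X,b)=\E[X]+\alpha\E[X^2]-\beta(\E[X])^2-2(\alpha-\beta)\E[Xb(\Theta)]+(\alpha-\beta)\E[b^2].
\]
Its quadratic part is
\[
\alpha\E[X^2]-\beta(\E X)^2=\alpha\Var(X)+(\alpha-\beta)\bigl(-(\E X)^2\bigr)\cdot(-1)\ldots
\]
More carefully, it equals $\alpha\E[X^2]-\beta(\E X)^2$. Because $\E[X^2]\ge(\E X)^2$ and $\alpha<0$, this is at most $(\alpha-\beta)(\E X)^2$, which is not obviously negative. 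We therefore use the other form. With $b$ fixed,
\[
J=\E X+\alpha\Var(X)+(\beta-\alpha)\Var(\E[X|\Theta])+(\alpha-\beta)\E[(X-b)^2]-(\alpha-\beta)\E[\Var(X|\Theta)].
\]
This is messy, so instead we argue concavity directly. Decompose $X=X_1+X_2$ with $X_1=\E[X|\Theta]$ and $X_2=X-X_1$. The quadratic part is then
\[
\alpha\E[X_2^2]+\alpha\E[X_1^2]-\beta(\E X_1)^2-2(\alpha-\beta)\E[X_1b]+(\alpha-\beta)\E[X_1^2]+\ldots
\]
Combining, the coefficient on $\E[X_1^2]$ is $\beta$ and the term $-\beta(\E X_1)^2$ enters, giving $\beta\Var(X_1)+\alpha\E[X_2^2]\le0$. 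Hence $J(\cdot,b)$ is concave on $\mathbb H$.

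With this structure, the most robust route is to construct a saddle point explicitly rather than invoke a general minimax theorem such as Sion's, which needs compactness that $L^2$ balls lack in the norm topology. (Weak compactness plus weak upper semicontinuity of a concave continuous functional could also work: one could restrict $X$ to a large weakly compact ball, justified by coercivity.) Concretely:
\begin{enumerate}
\item Using the concavity and coercivity in $X$ (the $\beta\Var(X_1)+\alpha\E X_2^2$ term), show that the sup problem $\sup_{X\in\mathcal A}\inf_b J=\sup_{X\in\mathcal A}\bigl[\E X+\alpha\E\Var(X|\Theta)+\beta\Var(\E[X|\Theta])\bigr]$ has a maximizer $X^*$. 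The objective is concave and upper semicontinuous and bounded above, as shown before the lemma, and a maximizing sequence is bounded modulo constants fixed by the budget constraint.
\item Set $b^*(\Theta)=\E[X^*|\Theta]$, which minimizes $J(X^*,\cdot)$.
\item Verify that $X^*$ maximizes $J(\cdot,b^*)$ over $\mathcal A$. The first-order condition of the reduced concave problem at $X^*$ coincides with that of $J(\cdot,b^*)$: differentiating $\inf_bJ(X,b)$ by an envelope (Danskin) argument gives $\partial_XJ(X^*,b^*)$, since the inner minimizer is unique and depends continuously on $X$. Concavity of $J(\cdot,b^*)$ then makes the first-order condition sufficient.
\end{enumerate}
These steps yield $\inf\sup\le\sup_XJ(X,b^*)=J(X^*,b^*)=\inf_bJ(X^*,b)\le\sup\inf$.

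The main obstacle is step 1, the existence of a maximizer in infinite dimensions, or equivalently the justification of the envelope/first-order argument. We will try weak compactness. The reduced objective is concave and norm-continuous, hence weakly upper semicontinuous. Boundedness of maximizing sequences follows because, for fixed mean, the objective tends to $-\infty$ as $\E[X^2]\to\infty$ thanks to the negative-definite terms. Controlling the mean is the delicate point: the budget constraint alone does not bound $\E X$. Here we use the Bayesian upper bound noted before the lemma to rule out unbounded means. Specifically, $\E X+\alpha\Var X$ bounded above, together with $\Var X\ge(\E[\hat\Lambda X]-\E X)^2/\Var(\hat\Lambda)$ via Cauchy–Schwarz, forces $\Var X$ to grow with $|\E X|$.
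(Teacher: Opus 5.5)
Your step~3 depends on $J(\cdot,b)$ being concave for fixed $b$, and the computation you give for this is wrong. For fixed $b$, the term $(\alpha-\beta)\E[(X_1-b(\Theta))^2]$ still depends on $X$ and contributes $+(\alpha-\beta)\E[X_1^2]$. So the coefficient of $\E[X_1^2]$ is $\beta+(\alpha-\beta)=\alpha$, not $\beta$. The quadratic part of $J(\cdot,b)$ is therefore $\alpha\E[X^2]-\beta(\E X)^2=\alpha\Var(X)+(\alpha-\beta)(\E X)^2$, which is the indefinite form you first found. The expression $\beta\Var(X_1)+\alpha\E[X_2^2]$ is instead the quadratic part of the reduced functional $\Phi(X)=\inf_b J(X,b)$, i.e.\ with $b=X_1$ plugged in. You have shown concavity of $\Phi$, which is what step~1 needs, but not concavity of $J(\cdot,b^*)$. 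Restricting to tangent directions of $\mathcal A$ helps only partially. If $\E[\hat\Lambda(T)h]=0$, then $\E h=-\Cov(\hat\Lambda(T),h)$, hence $\alpha\Var(h)+(\alpha-\beta)(\E h)^2\le\bigl(\alpha+(\alpha-\beta)\Var(\hat\Lambda(T))\bigr)\Var(h)$. Equality holds for $h=\hat\Lambda(T)-\E[\hat\Lambda^2(T)]$. So $J(\cdot,b)$ is concave on $\mathcal A$ if and only if $(\alpha-\beta)\Var(\hat\Lambda(T))\le-\alpha$.

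This is not a removable technicality: the statement itself fails when $(\alpha-\beta)\Var(\hat\Lambda(T))>-\alpha$. Take $h$ as above. Since $X^B=x_0+\frac1{2\alpha}h$, the wealth $x_0+th$ is the terminal wealth of the admissible strategy $2\alpha t\,\pi^B$ for every $t\in\R$. For every $b\in L^2(\D)$ a direct computation gives
\begin{equation*}
J(x_0+th,b)=J(x_0,b)+\kappa_b\,t+\Var(\hat\Lambda(T))\bigl(\alpha+(\alpha-\beta)\Var(\hat\Lambda(T))\bigr)t^2
\end{equation*}
for some $\kappa_b\in\R$. Hence the right-hand side of the lemma is $+\infty$, while the left-hand side is bounded by the Bayesian value. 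This regime is not exotic. The row $\alpha=\beta=-1$ of Table~\ref{tab:summary_stats} gives $\Var(\hat\Lambda(T))=4\Var(X^B)\approx0.0225$, so $(\alpha,\beta)=(-1,-200)$ yields $(\alpha-\beta)\Var(\hat\Lambda(T))\approx4.5>1$.

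The paper's $\varepsilon$-argument does not supply the missing ingredient either. Its final inequality $\sup_\pi J(\pi,b^\pi)\ge\inf_b\sup_\pi J(\pi,b)$ is the nontrivial direction itself, and the elementary bound $J(\pi,b^\pi)\le J(\pi,b)$ points the other way. Your saddle-point route (maximizer of $\Phi$ on $\mathcal A$, then $b^*=\E[X^*|\Theta]$, then Danskin) is correct once you add the hypothesis $(\alpha-\beta)\Var(\hat\Lambda(T))\le-\alpha$. Your coercivity argument for step~1 is fine. Without that hypothesis, drop the interchange altogether. Your envelope computation already shows that the candidate of Theorem~\ref{theo:Ambiguity} is a critical point on $\mathcal A$ of the concave functional $\Phi$, hence optimal for the original problem.
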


Writing the second variance term also as an optimization problem we obtain the final form for our objective function,
\begin{align}\label{eq:MVobjective:Afinal}
    \inf_{b \in L^2(\D)}  \sup_{\bar b\in\R}\sup_{\pi \in \Pi(x_0)} \E[X^\pi(T)]+\beta  \E[(X^\pi(T)-\bar b)^2]+(\alpha-\beta) \E[(X^\pi(T)-b(\Theta))^2].
\end{align}
This representation will now be used to solve the problem. In the next subsections we consider both a discrete and a continuous prior.

\subsubsection{Discrete prior}

In order to explain the approach we first assume that the prior $\Q_0$ is concentrated on a finite number of $m$ points, i.e.\ $\D=\{\theta_1,\ldots ,\theta_m\}$ where $\theta_j\in\R^d.$




In this setup we obtain the solution presented in Theorem \ref{theo:Ambiguity} below.  
In order to formulate it, 
we define the linear equation
\begin{equation}\label{eq:bk}
    \left(I - \frac{\alpha - \beta}{\alpha} A\right) b = c,
\end{equation}
where $I$ is the $m$-dimensional identity matrix, $A \in \mathbb R^{m \times m}$ has elements
\begin{equation}\label{eq:aik}
    a_{ik} = \E_{\theta_i}\left[\Q_T(\theta_k)\right] - \Q_0(\theta_k)\E_{\theta_k}\left[\hat\Lambda(T)\right]
\end{equation}
and $c \in \mathbb R^m$ has elements
\begin{equation*}
    c_k = x_0 - \frac 1{2\alpha} \left(\E\left[\hat\Lambda^2(T)\right] - \E_{\theta_k}\left[\hat\Lambda(T)\right]\right).
\end{equation*}
We also write $A= A_1 - A_2$, where the elements of $A_1 \in \mathbb R^{m\times m}$ are given by the first term on the right-hand side of \eqref{eq:aik} 
and where $A_2 = \mathbf{1}_m \varsigma^\top$ is the outer product of $\mathbf{1}_m$ and $\varsigma \in \mathbb R^m$, with elements $\varsigma_k = \Q_0(\theta_k)\E_{\theta_k}\left[\hat\Lambda(T)\right]$.

The existence of a solution to \eqref{eq:bk} is discussed in Lemma \ref{lem:solution_bk} below.

\begin{lem}\label{lem:solution_bk}
    Assume that all the eigenvalues of $A_1$ are different from $\frac{\alpha}{\alpha - \beta}$. Then, there exists a solution $b = (b_1,\ldots,b_m)^T$ to \eqref{eq:bk} if
    \begin{equation}\label{eq:condition_bk}
        1 + \frac{\alpha-\beta}{\alpha} \varsigma^\top \left(I-\frac{\alpha-\beta}{\alpha} A_1\right)^{-1} \mathbf 1_m \neq 0.
    \end{equation}
\end{lem}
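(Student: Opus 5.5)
The plan is to treat \eqref{eq:bk} as a rank-one perturbation of an invertible matrix and use the Sherman--Morrison formula. Set $\gamma := \frac{\alpha-\beta}{\alpha}$, which is well defined since $\alpha<0$. The coefficient matrix can then be written as
\begin{align*}
I - \gamma A = I - \gamma A_1 + \gamma A_2 = \big(I - \gamma A_1\big) + \gamma\, \mathbf 1_m \varsigma^\top ,
\end{align*}
using $A = A_1 - A_2$ and $A_2 = \mathbf 1_m\varsigma^\top$.

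The first step is to check that $M := I - \gamma A_1$ is invertible. If $\gamma = 0$ this is trivial. Otherwise, $M$ is singular exactly when $1/\gamma = \frac{\alpha}{\alpha-\beta}$ is an eigenvalue of $A_1$, because $\det(I-\gamma A_1) = \gamma^m \det(\gamma^{-1} I - A_1)$. The hypothesis rules this out, so $M^{-1}$ exists.

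The second step applies the matrix determinant lemma to the rank-one update $M + u v^\top$ with $u = \gamma \mathbf 1_m$ and $v = \varsigma$:
\begin{align*}
\det\big(M + \gamma \mathbf 1_m \varsigma^\top\big) = \det(M)\,\Big(1 + \gamma\, \varsigma^\top M^{-1} \mathbf 1_m\Big).
\end{align*}
The first factor is nonzero by the first step. The second factor is exactly the left-hand side of \eqref{eq:condition_bk}, so it is nonzero by assumption. Hence $I - \gamma A$ is invertible, and \eqref{eq:bk} has a (unique) solution $b = (I-\gamma A)^{-1} c$.

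For completeness, the solution can be written explicitly via Sherman--Morrison:
\begin{align*}
b = M^{-1}c - \frac{\gamma\, M^{-1}\mathbf 1_m\, \varsigma^\top M^{-1} c}{1 + \gamma\, \varsigma^\top M^{-1}\mathbf 1_m}.
\end{align*}
One can verify directly that this satisfies \eqref{eq:bk}.

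No real obstacle is expected. The only points needing care are the sign bookkeeping ($-A_2$ becomes $+\gamma A_2$) and the eigenvalue reformulation of the invertibility of $M$. Finiteness of the entries of $A$ and $c$ (for instance $\E_{\theta_k}[\hat\Lambda(T)]$ and $\E[\hat\Lambda^2(T)]$) is implicitly assumed in the statement, since the problem is posed with $\hat\Lambda(T)\in L^2$.
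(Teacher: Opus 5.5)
Your proof is correct and follows the paper's argument: you write $I-\frac{\alpha-\beta}{\alpha}A$ as the rank-one update $\big(I-\frac{\alpha-\beta}{\alpha}A_1\big)+\frac{\alpha-\beta}{\alpha}\mathbf 1_m\varsigma^\top$, get invertibility of the first term from the eigenvalue hypothesis, and use the matrix determinant lemma to show the determinant is nonzero exactly when \eqref{eq:condition_bk} holds. Your plus sign in front of the rank-one term is the correct one; the paper's intermediate display has a minus there, although its determinant formula uses the right sign.
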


\begin{rem}
    A sufficient condition to ensure that all the eigenvalues of $A_1$ are different from $\frac{\alpha}{\alpha-\beta}$ is to have 
    \begin{equation}\label{eq:sufficient}
    2\alpha < \beta < \alpha,
    \end{equation}
    where the second inequality is already a condition of our model. 
    Indeed, since it holds that ${\sum_{k=1}^m \E_{\theta_i}[\Q_T(\theta_k)] = 1}$, the matrix  $A_1$ is a stochastic matrix, so all of its eigenvalues are contained in the interval $[-1,1]$.
    Then, when \eqref{eq:sufficient} is satisfied, $\left|\frac{\alpha}{\alpha-\beta}\right| > 1$ and thus cannot coincide with any of the eigenvalues of $A_1$.
    We emphasize however that this condition is not necessary for the existence of a solution to \eqref{eq:bk}, and may significantly restrict the range of attitudes towards risk and ambiguity.
\end{rem}

For the next theorem we use the abbreviation  $G_k(t,y):= F(t,y) L_k^{-1}(t,y)$ for $k=1,\ldots,m$ with
\begin{align*}
     L_k(t,y)&= \exp\big(\theta_k^\top y - \frac 12 \norm{\theta_k}^2 t\big), \quad k=1,\ldots,m.
\end{align*}
The proof is deferred to the appendix.

\begin{theo}\label{theo:Ambiguity}
    In the model of this section with ambiguity, and under the conditions of Lemma \ref{lem:solution_bk}, the optimal terminal wealth is given by
\begin{align}\nonumber
    X^*  = &\; x_0-\frac{1}{2\alpha}\Big(\E[\hat\Lambda^2(T)] - \hat\Lambda(T)\Big) \\ \nonumber
    & + \Big(\frac{\alpha-\beta}{\alpha}\Big) \sum_{k=1}^m b_k \Q_0(\theta_k) \left(\frac{L_k(T,Y(T))}{F(T,Y(T))}- \E_{\theta_k}[\hat\Lambda(T)] \right)\\ \nonumber
    = & \; X^B +
     \Big(\frac{\alpha-\beta}{\alpha}\Big) \sum_{k=1}^m b_k \Q_0(\theta_k) \left(\frac{L_k(T,Y(T))}{F(T,Y(T))}- \E_{\theta_k}[\hat\Lambda(T)] \right)\\ \label{eq:optwealth_ambiguity}
    = & \; X^B + X^H.
\end{align}
where $(b_1,\ldots,b_k)$ is the unique solution to \eqref{eq:bk} and $X^B$ is the optimal terminal wealth in the classical Bayesian case (Thm. \ref{theo:Bayes}).
The optimal investment strategy $\pi^*\in\Pi(x_0)$ for the risky assets  is given by
\begin{align}\nonumber
    \pi^*(t) =& -(\sigma^\top)^{-1} \frac{1}{2\alpha} \int_{\R^d} \frac{\nabla F(T,Y(t)+z)}{F^2(T,Y(t)+z)} \varphi_{T-t}(z) dz -\\ \nonumber
    &-\Big( \frac{\alpha-\beta}{\alpha}\Big) \sum_{k=1}^m b_k \Q_0(\theta_k) (\sigma^\top)^{-1} \int_{\R^d} \frac{\nabla G_k(T,Y(t)+z)}{G_k^2(T,Y(t)+z)} \varphi_{T-t}(z) dz \\ \nonumber
    = & \;\pi^B(t) -\Big( \frac{\alpha-\beta}{\alpha}\Big) \sum_{k=1}^m b_k \Q_0(\theta_k) (\sigma^\top)^{-1} \int_{\R^d} \frac{\nabla G_k(T,Y(t)+z)}{G_k^2(T,Y(t)+z)} \varphi_{T-t}(z) dz\\ \label{eq:optinvest_ambiguity}
    = &  \;\pi^B(t)  +\pi^H(t), \quad t\in[0,T]
\end{align}
where again $\pi^B$ is the optimal strategy in the classical Bayesian case (Thm. \ref{theo:Bayes}).
\end{theo}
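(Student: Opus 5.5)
Starting from the representation \eqref{eq:MVobjective:Afinal}, which is justified by Lemma \ref{lem:interchange}, we fix an outer $b=(b_1,\ldots,b_m)\in\R^m$; with a discrete prior, $L^2(\D)$ is just $\R^m$. We also fix $\bar b\in\R$ and solve the inner problem over terminal wealths, exactly as in the Bayesian case. By completeness of the filtered market, it suffices to maximize
\begin{align*}
\E\Big[X+\beta(X-\bar b)^2+(\alpha-\beta)(X-b(\Theta))^2+\lambda(\hat\Lambda X-x_0)\Big]
\end{align*}
over $\F^Y_T$-measurable $X\in L^2$. Since $\beta<0$ and $\alpha-\beta>0$ but $\beta+(\alpha-\beta)=\alpha<0$, this functional is strictly concave in $X$.

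The main subtlety is that $b(\Theta)$ is not $\F^Y_T$-measurable. We therefore first replace it by its projection $\E[b(\Theta)\mid\F^Y_T]=\sum_k b_k\Q_T(\theta_k)$. Pointwise maximization then gives
\[
1+2\beta(X-\bar b)+2(\alpha-\beta)\Big(X-\sum_k b_k\Q_T(\theta_k)\Big)+\lambda\hat\Lambda=0,
\]
so that
\[
X=\bar b\,\tfrac{\beta}{\alpha}+\tfrac{\alpha-\beta}{\alpha}\sum_k b_k\Q_T(\theta_k)-\tfrac{1}{2\alpha}(1+\lambda\hat\Lambda).
\]
We then impose the optimality condition in $\bar b$, namely $\bar b=\E[X]$, together with the budget constraint $\E[\hat\Lambda X]=x_0$. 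Using $\E[\hat\Lambda]=1$ and $\E[\Q_T(\theta_k)]=\Q_0(\theta_k)$, the condition $\bar b=\E X$ again forces $\lambda=-1$.

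The budget constraint then determines $\bar b$. Here we use $\E[\hat\Lambda\Q_T(\theta_k)]=\Q_0(\theta_k)\E_{\theta_k}[\hat\Lambda]$, which follows from $\Q_T(\theta_k)=\Q_0(\theta_k)L_k/F$ and $\hat\Lambda=1/F$. Substituting back yields the displayed formula $X^*=X^B+X^H$ with $X^H=\frac{\alpha-\beta}{\alpha}\sum_k b_k\Q_0(\theta_k)\big(L_k/F-\E_{\theta_k}[\hat\Lambda]\big)$, still for arbitrary $b$.

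Next we choose the outer $b$. The infimum over $b$ is attained at $b_k=\E_{\theta_k}[X^*]$, since the minimizer is $\E[X\mid\Theta]$, and at a saddle point the optimal $X$ must satisfy this. Computing $\E_{\theta_i}$ of $X^*$ should give the linear system \eqref{eq:bk}. One term is $\E_{\theta_i}[X^B]=c_i$, since the $x_0$ and $\E[\hat\Lambda^2]$ terms are constant and the remaining term gives $\E_{\theta_i}[\hat\Lambda]$. The other term is $\E_{\theta_i}[\Q_0(\theta_k)L_k/F]=\E_{\theta_i}[\Q_T(\theta_k)]$, which produces the entries $a_{ik}$. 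Lemma \ref{lem:solution_bk} gives the existence of a solution. For uniqueness, the matrix $I-\frac{\alpha-\beta}{\alpha}A=(I-\frac{\alpha-\beta}{\alpha}A_1)+\frac{\alpha-\beta}{\alpha}\mathbf 1_m\varsigma^\top$ is invertible under the stated conditions by the Sherman--Morrison formula. To confirm optimality rather than mere stationarity, we note that for this $b$ the maximizer $X^*$ satisfies $b=\E[X^*\mid\Theta]$. Hence the value of the sup-inf problem is at least the objective at $X^*$, which equals the inf-sup value at this $b$. By Lemma \ref{lem:interchange}, $X^*$ is therefore optimal.

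The main obstacle is the projection step and the saddle-point verification. We must check that optimizing over $\F^Y_T$-measurable $X$ with the non-observable $b(\Theta)$ is legitimate: the cross term satisfies $\E[Xb(\Theta)]=\E[X\E[b(\Theta)\mid\F^Y_T]]$, and the leftover term does not depend on $X$. We must also check that concavity makes the first-order conditions sufficient.

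Finally, the strategy is obtained by hedging each term, as in Theorem \ref{theo:Bayes}, following \cite{jouini_bayesian_2001}, Thm. 3.2. The claim $L_k/F=1/G_k$ has the same structure as $\hat\Lambda=1/F$ with $F$ replaced by $G_k$. Its price process is computed via the Gaussian kernel $\varphi_{T-t}$, because $Y$ is a Brownian motion under $\tilde\P$. Differentiating in $y$ and applying $(\sigma^\top)^{-1}$ gives $\pi^H$ term by term, and linearity of replication yields $\pi^*=\pi^B+\pi^H$.
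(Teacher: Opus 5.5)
Your construction of the candidate $X^*$ follows the paper's proof step for step: the projection of $b(\Theta)$ onto $\F^Y_T$, the first-order condition, $\lambda=-1$, the budget equation for $\bar b$, the consistency conditions $b_k=\E_{\theta_k}[X]$ giving \eqref{eq:bk}, and replication via $\tilde\E[\,\cdot\,|\F^Y_t]$. The genuine gap is in the optimality verification you add.

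Strict concavity holds for the Lagrangian with $\bar b$ \emph{fixed}. Your ``inf-sup value at this $b$'', however, is $\sup_X J(X,b)$ with $J(X,b)=\E[X]+\beta\Var(X)+(\alpha-\beta)\E[(X-b(\Theta))^2]$, i.e.\ after also maximizing over $\bar b$. The map $(X,\bar b)\mapsto\E[X+\beta(X-\bar b)^2+(\alpha-\beta)(X-b(\Theta))^2]$ is not jointly concave. On budget-neutral directions ($\E[\hat\Lambda(T)Y]=0$) the quadratic part of $J(\cdot,b)$ is $\alpha\E[Y^2]-\beta(\E[Y])^2$. For $Y=1-\hat\Lambda(T)/\E[\hat\Lambda^2(T)]$ this equals $\frac{V}{(1+V)^2}(\alpha+(\alpha-\beta)V)$, where $V=\Var(\hat\Lambda(T))$.

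Hence if $(\alpha-\beta)V>|\alpha|$, then $\sup_X J(X,b)=+\infty$ for every $b$. Three things follow:
\begin{enumerate}
\item The stationary $\bar b$ is a minimum, not a maximum, of $\bar b\mapsto\sup_X(\cdots)$.
\item Your identity ``objective at $X^*$ $=$ inf-sup value at this $b$'' fails.
\item Lemma \ref{lem:interchange} is itself false in this regime. The last inequality in its appendix proof goes the wrong way, since $\sup_\pi f(\pi,b^\pi)=\sup_\pi\inf_b f(\pi,b)\le\inf_b\sup_\pi f(\pi,b)$.
\end{enumerate}
This regime is not exotic. In the two-point example of Section \ref{sec:experiments}, the Sharpe ratio of buy-and-hold already gives $V\ge 0.022$, so $(\alpha,\beta)=(-1,-50)$ and $(-1,-200)$ fall into it.

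To fix this, bypass the interchange and verify directly on the original objective $\Phi(X)=\E[X]+\alpha\E[\Var(X|\Theta)]+\beta\Var(\E[X|\Theta])$. It is concave on $L^2(\Omega,\F^Y_T,\P)$ because both variance terms are nonnegative quadratic forms and $\alpha,\beta<0$. Set $b_k=\E_{\theta_k}[X^*]$, $\bar b=\E[X^*]$ and $B^*=\sum_k b_k\Q_T(\theta_k)=\E[b(\Theta)|\F^Y_T]$. Then for every $\F^Y_T$-measurable $Y\in L^2$,
\begin{align*}
\Phi(X^*+Y)=\Phi(X^*)+\E\big[(1+2\alpha(X^*-B^*)+2\beta(B^*-\bar b))\,Y\big]+\alpha\E[\Var(Y|\Theta)]+\beta\Var(\E[Y|\Theta]).
\end{align*}
Moreover, $1+2\alpha(X^*-B^*)+2\beta(B^*-\bar b)=1+2\beta(X^*-\bar b)+2(\alpha-\beta)(X^*-B^*)=\hat\Lambda(T)$ by your first-order condition with $\lambda=-1$. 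So the linear term vanishes for budget-neutral $Y$, and $\Phi(X^*+Y)\le\Phi(X^*)$, with no sup--inf interchange needed.

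A minor slip: $\lambda=-1$ does not hold ``for arbitrary $b$''. The condition $\E[X]=\bar b$ gives $1+\lambda=-2(\alpha-\beta)(\bar b-\sum_k b_k\Q_0(\theta_k))$, so you need $\bar b=\sum_k b_k\Q_0(\theta_k)$. That follows only once $b_k=\E_{\theta_k}[X]$ is imposed, as the paper does. It is consistent at the solution of \eqref{eq:bk}, so your final formula is unaffected.
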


We can see from this result that we can express both the optimal terminal wealth, as well as the optimal investment strategy in terms of the solution of the Bayesian case without ambiguity aversion. In particular the second term $\pi^H$ in \eqref{eq:optinvest_ambiguity} is the part which hedges against model ambiguity. It is possible to show that in the single stock case this hedging demand is always negative under a moderate assumption: We have to assume that the $b_k's$ (which are the conditional expectations $\E_{\theta_k}[X^*]$) are ordered in the same way as the market prices of risk  $\theta_k$. This was the case throughout our numerical experiments. 

\begin{co}\label{cor:Hedging_demand_Ambiguity}
    Assume that $d=1$ and w.l.o.g.\ that $\theta_1< \ldots <\theta_m$. Further assume that $\E_{\theta_1}[X^*]< \ldots <\E_{\theta_m}[X^*].$ Then $\pi^H$ from Theorem \ref{theo:Ambiguity} is always non-positive, i.e.\ we invest less in the stock under ambiguity.
\end{co}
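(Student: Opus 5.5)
The plan is to rewrite the hedging term $\pi^H$ from \eqref{eq:optinvest_ambiguity} in terms of the posterior weights. Then the integrand becomes a posterior covariance between $\theta$ and $b(\theta)$, and its sign follows from the assumed common ordering of the $\theta_k$ and the $b_k$.

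\emph{Step 1: identify the $b_k$.} I take from the proof of Theorem~\ref{theo:Ambiguity} (as stated in the text before the corollary) that the solution of \eqref{eq:bk} satisfies $b_k=\E_{\theta_k}[X^*]$. The hypothesis $\E_{\theta_1}[X^*]<\dots<\E_{\theta_m}[X^*]$ then says that $k\mapsto b_k$ is increasing, as is $k\mapsto\theta_k$.

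\emph{Step 2: rewrite the integrand.} With $d=1$, write $p_k(y):=\Q_0(\theta_k)L_k(T,y)/F(T,y)=\Q_0(\theta_k)/G_k(T,y)$. This is the posterior probability of $\theta_k$ when $Y(T)=y$, by \eqref{eq:filter}. Since $\partial_y G_k/G_k^2=-\partial_y(1/G_k)$, we get
\begin{align*}
\sum_{k=1}^m b_k\Q_0(\theta_k)\frac{\partial_y G_k(T,y)}{G_k^2(T,y)}=-\partial_y\sum_{k=1}^m b_k p_k(y).
\end{align*}
Differentiating $p_k=\Q_0(\theta_k)e^{\theta_k y-\frac12\theta_k^2T}/F(T,y)$ gives $\partial_y p_k(y)=p_k(y)\big(\theta_k-\hat\theta(y)\big)$, where $\hat\theta(y)=\sum_j\theta_j p_j(y)=\partial_yF/F$. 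Therefore
\begin{align*}
\partial_y\sum_{k} b_k p_k(y)=\sum_k b_k p_k(y)\big(\theta_k-\hat\theta(y)\big)=\operatorname{Cov}_{p(y)}\big(b(\Theta),\Theta\big),
\end{align*}
the covariance under the discrete posterior law $p(y)$.

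\emph{Step 3: the sign.} Because $b$ and $\theta$ are both increasing in $k$, Chebyshev's covariance (sum) inequality for comonotone sequences gives $\operatorname{Cov}_{p(y)}(b(\Theta),\Theta)\ge0$ for every $y$. One can see this from
\[
\sum_{j,k}p_jp_k(b_k-b_j)(\theta_k-\theta_j)\ge 0,
\]
which equals twice the covariance. Integrating against the density $\varphi_{T-t}$ preserves the sign, so
\[
\pi^H(t)=\frac{\alpha-\beta}{\alpha}\,\sigma^{-1}\int_\R \operatorname{Cov}_{p(Y(t)+z)}\big(b(\Theta),\Theta\big)\varphi_{T-t}(z)\,dz .
\]
Here $\beta<\alpha<0$ gives $\alpha-\beta>0$ and hence $(\alpha-\beta)/\alpha<0$, and $\sigma>0$ in the scalar case. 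Hence $\pi^H(t)\le0$.

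The main obstacle is Step 1, the identification $b_k=\E_{\theta_k}[X^*]$. It is needed so that the hypothesis of the corollary translates into monotonicity of $b$, and it relies on the first-order condition in $b(\cdot)$ from the min–max representation \eqref{eq:MVobjective:Afinal} evaluated at the optimum. Differentiation under the integral sign is harmless here, since $m$ is finite and $p_k\in[0,1]$ with bounded derivative.
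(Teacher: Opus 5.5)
Your proof is correct, and it is organized differently from the paper's. The paper computes $G_k'/G_k^2$ explicitly and completes the square twice. This rewrites the integrated sum as $\sum_{k,j}a_{kj}(\theta_j-\theta_k)b_k$, with explicitly computed weights $a_{kj}=a_{jk}>0$. It then antisymmetrizes to $\sum_{k>j}a_{kj}(\theta_j-\theta_k)(b_k-b_j)\le 0$.

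You instead recognize $\Q_0(\theta_k)/G_k(T,y)$ as the posterior weight $p_k(y)$. The integrand is then, pointwise in $y=Y(t)+z$, minus the posterior covariance of $\Theta$ and $b(\Theta)$, and you apply Chebyshev's sum inequality before integrating against $\varphi_{T-t}$. The underlying mechanism is the same: the paper's $a_{kj}$ equals $\int p_k(y+z)p_j(y+z)\varphi_{T-t}(z)\,dz$, so its Gaussian shifts are not actually needed.

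Your version buys three things. It is shorter. It gives the interpretation $\pi^H(t)=\frac{\alpha-\beta}{\alpha}\,\sigma^{-1}\,\tilde\E[\Cov_{\Q_T}(\Theta,b(\Theta))\mid\F^Y_t]$, using that $Y$ is a Brownian motion under $\tilde\P$. And it extends essentially verbatim to the continuous prior of Theorem~\ref{theo:Ambiguity_cts} whenever $x\mapsto b(x)$ is nondecreasing, since there too $q_0(x)\,\partial_yG_x/G_x^2=-\partial_y q_T(y,x)$.

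Two small remarks. First, your Step 1 is not a real obstacle. Substituting \eqref{eq:optimalwealth_a} into $\E_{\theta_i}[\cdot]$ reproduces the $i$-th row of \eqref{eq:bk}, so $b_k=\E_{\theta_k}[X^*]$ holds by construction for any solution of the linear system. This is condition (iii) in the proof of Theorem~\ref{theo:Ambiguity}. Second, no differentiation under an integral sign occurs: $\Q_0(\theta_k)\,\partial_yG_k/G_k^2=-\partial_y p_k$ is a pointwise identity used before integrating.
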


The proof can be found in the appendix and we will discuss this hedging demand further in the numerical examples section. Another interesting observation is the following.



\begin{co}\label{co:EXH}
The expectation of the second term of \eqref{eq:optwealth_ambiguity}, $X^H$, can be written as
\begin{align*}
    \E[X^H] = \Cov\left(\E_\Theta[\hat\Lambda(T)],\E_\Theta[X^*]\right).
\end{align*}
\end{co}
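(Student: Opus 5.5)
The plan is to compute $\E[X^H]$ directly from the explicit formula \eqref{eq:optwealth_ambiguity} in Theorem \ref{theo:Ambiguity}. Then I will rewrite the outcome as a covariance over the prior $\Q_0$, using the identification $b_k=\E_{\theta_k}[X^*]$ mentioned before Corollary \ref{cor:Hedging_demand_Ambiguity}. Throughout, $\E_\Theta[\cdot]$ is read as the random variable $\theta\mapsto\E_\theta[\cdot]$ evaluated at $\Theta$. So $\Cov(\E_\Theta[\hat\Lambda(T)],\E_\Theta[X^*])$ is a covariance under $\Q_0$ of the two functions $k\mapsto \E_{\theta_k}[\hat\Lambda(T)]$ and $k\mapsto b_k$.

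\emph{Step 1 (the two elementary expectations).} First I show that $\E\big[L_k(T,Y(T))/F(T,Y(T))\big]=1$ for each $k$. Under $\P$, the law of $Y(T)$ has density $F(T,\cdot)$ with respect to the law of $W(T)$. Hence $\E[L_k/F]=\E^W[L_k(T,W(T))]=1$, because $L_k$ is an exponential martingale evaluated at $T$. Equivalently, $\Q_0(\theta_k)L_k/F=\Q_T(\theta_k)$, and $\E[\Q_T(\theta_k)]=\Q_0(\theta_k)$ by the tower property. Second, I record that $\sum_k\Q_0(\theta_k)\E_{\theta_k}[\hat\Lambda(T)]=\E[\hat\Lambda(T)]=1$. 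Together these give
\begin{align*}
\E[X^H]=\Big(\frac{\alpha-\beta}{\alpha}\Big)\sum_{k=1}^m b_k\Q_0(\theta_k)\big(\E[\hat\Lambda(T)]-\E_{\theta_k}[\hat\Lambda(T)]\big),
\end{align*}
which is a multiple of $\E[b(\Theta)]\,\E[\E_\Theta\hat\Lambda(T)]-\E[b(\Theta)\E_\Theta\hat\Lambda(T)]$.

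\emph{Step 2 (identify $b$ and form the covariance).} Next I substitute $b_k=\E_{\theta_k}[X^*]$ from the first-order optimality in $b$ in \eqref{eq:MVobjective:Afinal}, where the infimum over $b$ is attained at $\E[X^*\mid\Theta]$. This turns the bracket into a covariance under $\Q_0$ of $\E_\Theta[\hat\Lambda(T)]$ and $\E_\Theta[X^*]$. As a cross-check, I will rederive the same quantity from the budget constraint $\E[\hat\Lambda(T)X^*]=x_0$. This uses $\E[\hat\Lambda(T)\Q_T(\theta_k)]=\Q_0(\theta_k)\E_{\theta_k}[\hat\Lambda(T)]$ and $\E[X^B]=x_0-\frac1{2\alpha}(\E[\hat\Lambda^2(T)]-1)$ from Theorem \ref{theo:Bayes}. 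It also uses that $\E[X^*]=\sum_k \Q_0(\theta_k)b_k$, which follows by summing the rows of \eqref{eq:bk}.

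\emph{Main obstacle.} The delicate step is the bookkeeping of the scalar prefactor and the sign in front of the covariance. Both derivations above naturally produce the factor $-\frac{\alpha-\beta}{\alpha}$ multiplying $\Cov(\E_\Theta[\hat\Lambda(T)],\E_\Theta[X^*])$, and not the coefficient $1$ as the corollary is stated. So the remaining task is to find a convention or normalisation, in the definition of $X^H$ or of the covariance, under which the prefactor is absorbed and the stated coefficient-$1$ identity is exact. The first place I would look is the exact normalisation of the $b_k$ in \eqref{eq:bk}, i.e.\ whether $b_k$ equals $\E_{\theta_k}[X^*]$ or a rescaled version of it. If no such convention exists, the computation in Steps 1 and 2 establishes the identity only with this factor, and then the coefficient-$1$ form of the statement cannot be proved as written.
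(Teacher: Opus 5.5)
Your Steps 1 and 2 are exactly the paper's proof. The paper also uses $\E[\Q_T(\theta_k)]=\Q_0(\theta_k)$ and $\E[\hat\Lambda(T)]=1$, substitutes $b_k=\E_{\theta_k}[X^*]$, and ends with
\begin{align*}
\E[X^H]=\Big(\frac{\alpha-\beta}{\alpha}\Big)\Big(\E[X^*]\,\E[\hat\Lambda(T)]-\E\big[\E_\Theta[X^*]\,\E_\Theta[\hat\Lambda(T)]\big]\Big)
=\frac{\beta-\alpha}{\alpha}\,\Cov\big(\E_\Theta[\hat\Lambda(T)],\E_\Theta[X^*]\big).
\end{align*}
So the factor you found is already in the paper's own last line and was dropped from the printed statement.

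There is no normalisation to search for. The identity $b_k=\E_{\theta_k}[X^*]$ holds exactly, since condition (iii) in the proof of Theorem~\ref{theo:Ambiguity} is what produces \eqref{eq:bk} with no rescaling. The coefficient-one version is false in general. As $\beta\uparrow\alpha$ one has $b\to c$, so $\E[X^H]\to 0$. At the same time $\Cov(\E_\Theta[\hat\Lambda(T)],\E_\Theta[X^*])\to\frac{1}{2\alpha}\Var(\E_\Theta[\hat\Lambda(T)])$, which is nonzero whenever $\E_\Theta[\hat\Lambda(T)]$ is not $\Q_0$-a.s.\ constant (as in the two-point example of Section~\ref{sec:experiments}).

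What your argument correctly proves is $\E[X^H]=\gamma\,\Cov(\E_\Theta[\hat\Lambda(T)],\E_\Theta[X^*])$ with $\gamma=\frac{\beta-\alpha}{\alpha}>0$. This reduces to the printed formula when $\beta=2\alpha$.

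Two small remarks. First, your budget-constraint cross-check cannot pin down $\E[X^H]$, because $\E[\hat\Lambda(T)X^H]=0$ holds for every choice of $b$. Second, $\E[X^*]=\sum_k\Q_0(\theta_k)b_k$ is immediate from $b_k=\E_{\theta_k}[X^*]$. If you derive it from \eqref{eq:bk} instead, row $i$ must be weighted by $\Q_0(\theta_i)$ rather than summed plainly.
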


Note that the structure of $\E[X^H]$ resembles the structure of $\E[X^*]$ on the level of model ambiguity. More precisely we obviously have
\begin{align*}
    \E[X^*] &= \E[X^*] \E[\hat\Lambda(T)] +\E[X^*\hat\Lambda(T)]-\E[X^*\hat\Lambda(T)] \\
    &= x_0- \Cov\left(\hat\Lambda(T),X^*\right).
\end{align*}
Thus, $\E[X^H]$ is exactly the same covariance, but for the respective expected quantities conditioned on $\Theta$.

\subsubsection{Continuous prior}

In this section, we assume that the prior $\Q_0$ is absolutely continuous with respect to the Lebesgue measure on $\R^d$ and denote its density by $q_0$.  Further we denote the (random) density of $\Q_T$ by $q_T(Y,\cdot)$; its explicit formula is given in \eqref{eq:filter}.

Throughout this section, we impose an additional assumption on $q_0$ to ensure the existence of a solution to the optimization problem.

\begin{assumption}\label{assum:q0}
    The prior density $q_0$ satisfies
    \begin{equation*}
        \int_{\R^d} \int_{\R^d}  (\E_x[q_T(Y,y)])^2 \frac{q_0(y)}{q_0(x)} \, \d x \, \d y< \infty.
    \end{equation*}
\end{assumption}

In this context, the optimal terminal wealth and investment strategy depend on the solution $b(x) \in L^2(q_0)$ to the Fredholm integral equation 
\begin{align}
    b(x) + \gamma \int_{\D} b(y) (K_2(x,y) - K_1(y)) \d y = 
    f(x),
    \label{eq:Fredholm}
\end{align}
where $\gamma = \frac{\beta - \alpha}{\alpha}$, $K_1(y) = \E_y[\hat \Lambda(T)] q_0(y)$, $K_2(x,y) = \E_x[q_T(Y,y)]$ and
\begin{equation*}\label{eq:fx}
    f(x) = x_0 - \frac 1{2\alpha} \left(\E[\hat \Lambda^2(T)] - \E_x[\hat\Lambda(T)]\right).
\end{equation*}

Lemma \ref{lem:bx} below discusses the existence of a solution to this equation.

\begin{lem}\label{lem:bx}
    For $\phi \in L^2(q_0)$, let $T_2 \phi = \int_{\mathbb R} \phi(y) K_2(x,y) \d y$ with spectrum $\sigma(T_2)$.\footnote{The resolvent set of an operator $A$ is the set of all its regular values, that is, all $\lambda \in \mathbb C$ such that $(\lambda I - A)^{-1}$ exists and is bounded. The spectrum of $A$ is the complement in $\mathcal C$ of the resolvent set.} If $-\frac{1}{\gamma} \notin \sigma(T_2)$, a solution to \eqref{eq:Fredholm} is given by
    \begin{align}\label{eq:bx}
        b(x) = \psi_0(x) + \gamma \int_{\D} K_1(y) \psi_0(y) \d y,
    \end{align}
    where $\psi_0 = (I + \gamma T_2)^{-1} f$. Furthermore, $T_2$ is self-adjoint on $H$, that is, 
    \begin{align*}
        K(x,y) q_0(x) = K(y,x) q_0(y)
    \end{align*}
    for a.e. $(x,y) \in R^d \times R^d$.
\end{lem}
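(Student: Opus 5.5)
The plan is to reduce \eqref{eq:Fredholm} to an equation of the form $(I+\gamma T_2)b = \text{(known function)}$ by treating the rank-one part $-\gamma\int_{\D} b(y)K_1(y)\,\d y$ as an unknown scalar. I would then fix that scalar by a consistency condition, and finally prove the symmetry of the kernel through a change of measure to $\tilde\P$.

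\emph{Step 1 (reduction).} Put $c:=\int_{\D} K_1(y)b(y)\,\d y$. Then \eqref{eq:Fredholm} reads $(I+\gamma T_2)b = f+\gamma c\,\mathbf 1$. Two identities are needed.
\begin{itemize}
\item $T_2\mathbf 1=\mathbf 1$. This holds because $\int q_T(Y,y)\,\d y=1$ pathwise, so $\int K_2(x,y)\,\d y=\E_x[1]=1$.
\item $\int_{\D} K_1(y)\,\d y=\E[\hat\Lambda(T)]=1$.
\end{itemize}
The first identity shows that $1\in\sigma(T_2)$. Hence the hypothesis $-1/\gamma\notin\sigma(T_2)$ forces $\gamma\neq -1$, and we get $(I+\gamma T_2)^{-1}\mathbf 1=\mathbf 1/(1+\gamma)$. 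Therefore $b=\psi_0+\frac{\gamma c}{1+\gamma}\mathbf 1$. Integrating against $K_1$ gives the consistency condition
\[
c=\int K_1\psi_0+\frac{\gamma c}{1+\gamma},
\]
that is, $c/(1+\gamma)=\int K_1\psi_0$. Substituting back yields exactly \eqref{eq:bx}. Conversely, I would check directly that \eqref{eq:bx} solves \eqref{eq:Fredholm}: apply $I+\gamma T_2$ and use $T_2\mathbf 1=\mathbf 1$ and $\int K_1=1$.

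\emph{Step 2 (well-posedness).} I need $T_2$ to be a bounded operator on $L^2(q_0)$, so that the spectral hypothesis and $\psi_0=(I+\gamma T_2)^{-1}f$ make sense. Writing $T_2\phi(x)=\int\phi(y)\frac{K_2(x,y)}{q_0(y)}\,q_0(y)\,\d y$, I would bound the operator via Cauchy--Schwarz or a Hilbert--Schmidt estimate using Assumption~\ref{assum:q0}.

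I also need $f\in L^2(q_0)$ and $\int K_1\psi_0<\infty$. Both follow from Cauchy--Schwarz, because $K_1/q_0=\E_y[\hat\Lambda(T)]$ and
\[
\int(\E_y[\hat\Lambda(T)])^2q_0(y)\,\d y\le\E[\hat\Lambda^2(T)].
\]
This last quantity is finite; finiteness is already implicit in Theorem~\ref{theo:Bayes}. Matching the exact weighting in Assumption~\ref{assum:q0} to the Hilbert--Schmidt norm is the step I expect to be fiddly. If the weights do not line up, I would fall back on a Schur-test bound.

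\emph{Step 3 (self-adjointness).} Under $\tilde\P$ the process $Y$ is a Brownian motion, and $\frac{\d\P_x}{\d\tilde\P}\big|_{\F^Y_T}=L_x(T,Y(T))$, where $L_x$ is defined as $L_k$ with $\theta_k$ replaced by $x$. Since $q_T(Y,y)=L_y(T,Y(T))\,q_0(y)/F(T,Y(T))$, it follows that
\[
K_2(x,y)q_0(x)=q_0(x)q_0(y)\,\tilde\E\!\left[\frac{L_x(T,Y(T))\,L_y(T,Y(T))}{F(T,Y(T))}\right].
\]
The right-hand side is symmetric in $(x,y)$. This gives the stated kernel identity; here $K$ in the statement should be read as $K_2$. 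Then $\langle T_2\phi,\psi\rangle_{L^2(q_0)}=\langle\phi,T_2\psi\rangle_{L^2(q_0)}$ follows by Fubini, which is justified by the integrability from Step~2.
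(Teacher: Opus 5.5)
Your proposal is correct, and Steps 1 and 3 follow the paper's argument. The paper also freezes $C[b]=\int_{\D}K_1 b$ and uses $(I+\gamma T_2)^{-1}\mathbf 1=(1+\gamma)^{-1}\mathbf 1$ and $\int_{\D}K_1=1$ to get $C[b]=(1+\gamma)\int_{\D}K_1\psi_0$. It proves the kernel symmetry by Bayes' rule with the densities $p_{Y(T)|\Theta}$ and $p_{Y(T)}$. Since $p_{Y(T)|\Theta}(s|x)=L_x(T,s)\varphi_T(s)$ and $p_{Y(T)}(s)=F(T,s)\varphi_T(s)$, that is exactly your $\tilde\P$ computation. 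Your explicit converse check is a small bonus, since the paper only derives the form of the solution.

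The genuine difference is Step 2. The paper shows that $T_2$ is Hilbert--Schmidt under Assumption~\ref{assum:q0}, then invokes compactness, the Fredholm alternative and the open mapping theorem. You instead observe that, with the paper's own definition of the resolvent set, $-1/\gamma\notin\sigma(T_2)$ already means $(I+\gamma T_2)^{-1}$ exists and is bounded, so only boundedness of $T_2$ is needed.

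Your worry about the weights is well founded. The squared Hilbert--Schmidt norm on $L^2(q_0)$ is $\iint K_2(x,y)^2\,q_0(x)/q_0(y)\,\d x\,\d y$, whereas the assumption as printed carries the reciprocal ratio. With the printed ratio, the integral diverges even for the Gaussian prior of Section~\ref{ssec:gaussian}; the correctly weighted integral there equals $1/(1-\rho^2)$.

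So commit to the Schur route, and make explicit the ingredient it needs: the invariance $\int K_2(x,y)q_0(x)\,\d x=q_0(y)$. This follows from the pointwise symmetry of Step 3 together with $T_2\mathbf 1=\mathbf 1$, or directly from $\E[q_T(Y,y)]=q_0(y)$. Cauchy--Schwarz then gives $\|T_2\phi\|_{L^2(q_0)}^2\le\iint K_2(x,y)\phi(y)^2q_0(x)\,\d y\,\d x=\|\phi\|_{L^2(q_0)}^2$, so $\|T_2\|\le 1$ with no integrability assumption on $q_0$. This bound also justifies your Fubini step. The kernel identity should therefore come before the bound, not after it.

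Your route proves the lemma under the spectral hypothesis alone, together with $\E[\hat\Lambda^2(T)]<\infty$, which both proofs use. Combined with self-adjointness, it even gives $\sigma(T_2)\subset[-1,1]$, which makes the hypothesis automatic for $0<\gamma<1$. What the paper's compactness adds is the reduction of the spectral condition to an eigenvalue check, and the Krein--Rutman discussion, in the subsequent remarks.
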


\begin{rem}
    Assumption \ref{assum:q0} ensures that $T_2$ is a Hilbert-Schmidt operator on $L^2(q_0)$, and therefore compact. The Hilbert-Schmidt property is sufficient but not necessary for compactness of $T_2$. For example, if $\sup_{x \in \R^d} \int_{R^d} \frac{p_{Y(T)|\Theta}(s|x)}{p_{Y(T)}} \, \d s < \infty$, $T_2$ is also compact. Nonetheless, we choose to present the result of this section under Assumption \ref{assum:q0}, which can be shown to hold in the context of the Gaussian prior example presented in Section \ref{ssec:gaussian}.
\end{rem}

\begin{rem}
    Compactness of $T_2$ ensures that its spectrum $\sigma(T_2)$, contains $\lambda = 0$ and $\sigma(T_2)\setminus \{0\}$ is at most countable with no accumulation points; its only elements are the eigenvalues of $T_2$ (see Theorem 3.9 of \cite{kress1989linear}). Thus, \eqref{eq:bx} is a solution to \eqref{eq:Fredholm} if none of the eigenvalues of $T_2$ is equal to $-\frac{1}{\gamma}$.
    Since $K_2$ is a stochastic kernel, we have from the Krein-Rutman theorem (see Theorem 1.2 of \cite{du2006order}) that 
\begin{enumerate}
    \item[(i)] the largest eigenvalue of $T_2$ is simple and equal to $1$, and
    \item[(ii)] all the other eigenvalues $\mu_n$ of $T_2$ satisfy $|\mu_n| < 1$.
\end{enumerate}
It follows that if $\gamma \in (0,1]$, $-\frac{1}{\gamma} \leq -1$, the operator  $(I + \gamma T_2)$ is always invertible, since every eigenvalue of $T_2$ is strictly above $-1$.
On the other hand, if $\gamma > 1$, it is necessary to check numerically that all the eigenvalues of $T_2$ are different from $-\frac{1}{\gamma}$.

For our main theorem of this subsection we denote $G_x(t,y):=F(t,y) \exp(-x^\top y+\frac12 \|x\|^2t)$ for all $x\in\D, y\in\R^d, t\in [0,T]$.
\end{rem}

\begin{theo}\label{theo:Ambiguity_cts}
    Let $\Q_0$ be absolutely continuous with respect to the Lebesgue measure and denote its density by $q_0$. Then, under the conditions of Lemma \ref{lem:bx}, the optimal terminal wealth is given by
    \begin{align} \nonumber
        X^* =&\; x_0-\frac{1}{2\alpha}\Big(\E[\hat\Lambda^2(T)] - \hat\Lambda(T)\Big) \\ \nonumber
    & + \Big(\frac{\alpha-\beta}{\alpha}\Big) \int_{\D} b(x) \left(q_T(Y,x) - \E_x[\hat\Lambda(T)] q_0(x) \right) \d x\\ \label{eq:optwealth_ambiguity_cont}
    = & \; X^B +
     \Big(\frac{\alpha-\beta}{\alpha}\Big) \int_{\D} b(x) \left(q_T(Y,x) - \E_x[\hat\Lambda(T)] q_0(x) \right) \d x,
    \end{align}
    with $b(x)$ defined by \eqref{eq:bx}.
    The optimal investment strategy for the stocks is given by
        \begin{align}\nonumber
    \pi^*(t) =& -(\sigma^\top)^{-1} \frac{1}{2\alpha} \int_{\R^d} \frac{\nabla F(T,Y(t)+z)}{F^2(T,Y(t)+z)} \varphi_{T-t}(z) dz -\\ \nonumber
    &-\Big( \frac{\alpha-\beta}{\alpha}\Big) \int_{\D} b(x) q_0(x) (\sigma^\top)^{-1} \int_{\R^d} \frac{\nabla G_x(T,Y(t)+z)}{G_x^2(T,Y(t)+z)} \varphi_{T-t}(z) dzdx \\ \nonumber
    = & \;\pi^B(t) -\Big( \frac{\alpha-\beta}{\alpha}\Big) \int_{\D} b(x) q_0(x) (\sigma^\top)^{-1} \int_{\R^d} \frac{\nabla G_x(T,Y(t)+z)}{G_x^2(T,Y(t)+z)} \varphi_{T-t}(z) dzdx\\ \label{eq:optinvest_ambiguity_cont}
    = &  \;\pi^B(t)  +\pi^H(t), \quad t\in[0,T]
\end{align}
    
\end{theo}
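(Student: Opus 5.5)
We follow the route of the discrete prior (Theorem \ref{theo:Ambiguity}): use the representation \eqref{eq:MVobjective:Afinal}, justified by Lemma \ref{lem:interchange}, with the finite sum over $\theta_k$ replaced by an integral against $q_0$.

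\textbf{Inner problem.} Fix $b\in L^2(q_0)$ and $\bar b\in\R$. Completeness of the $\F^Y$-market (end of Section \ref{sec:financial_market}) lets us optimise over terminal wealths $X\in\mathbb H$ subject to the budget constraint $\E[\hat\Lambda X]=x_0$. The Lagrangian is
\[
\E\big[X+\beta(X-\bar b)^2+(\alpha-\beta)(X-b(\Theta))^2+\lambda(\hat\Lambda X-x_0)\big].
\]
Because $X$ is $\F^Y_T$-measurable, we replace $b(\Theta)$ by its projection $\E[b(\Theta)|\F^Y_T]=\int_\D b(x)q_T(Y,x)\,\d x$. Setting the Fr\'echet derivative to zero gives
\[
1+2\beta(X-\bar b)+2(\alpha-\beta)\int_\D b(x)q_T(Y,x)\,\d x+\lambda\hat\Lambda=0 .
\]
Since $\beta<0$, $\alpha-\beta>0$ and $\alpha<0$, the objective is concave in $X$, so this stationary point is the maximiser.

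\textbf{Determining $\lambda$ and $\bar b$.} The optimality condition for $\bar b$ is $\bar b=\E[X]$. Taking expectations of the first-order condition, and using $\E[\hat\Lambda]=1$ together with $\E\big[\int_\D b\,q_T\,\d x\big]=\E[b(\Theta)]$, yields one linear relation. The budget constraint yields a second. For the latter we need $\E\big[\hat\Lambda\int_\D b\,q_T\,\d x\big]=\int_\D b(x)\E_x[\hat\Lambda(T)]q_0(x)\,\d x$; this holds by Fubini and $\hat\Lambda q_T(Y,x)=L_x/F^2$. Eliminating $\lambda$ and $\bar b$ should give
\[
X=X^B+\frac{\alpha-\beta}{\alpha}\int_\D b(x)\big(q_T(Y,x)-\E_x[\hat\Lambda]q_0(x)\big)\,\d x ,
\]
plus a deterministic constant. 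We will check that this constant vanishes, exactly as in the finite case.

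\textbf{Outer infimum over $b$.} The outer infimum is attained at $b(x)=\E_x[X^*]$, i.e. $b(\Theta)=\E[X^*|\Theta]$. We impose this fixed point by taking $\E_x$ of the expression above. Using $\E_x[X^B]=f(x)$ and $\E_x[q_T(Y,y)]=K_2(x,y)$, this becomes exactly the Fredholm equation \eqref{eq:Fredholm} with $\gamma=(\beta-\alpha)/\alpha$. Lemma \ref{lem:bx} then provides a solution $b\in L^2(q_0)$ whenever $-1/\gamma\notin\sigma(T_2)$. To verify optimality we plan a saddle-point argument: with $b$ fixed at this value, $X^*$ maximises the inner problem and $b$ attains the infimum for $X^*$, so by Lemma \ref{lem:interchange} the value is attained.

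\textbf{Strategy.} The part $X^B$ is replicated by $\pi^B$ (Theorem \ref{theo:Bayes}). For $X^H$ we write $q_T(Y,x)=q_0(x)L_x(T,Y(T))/F(T,Y(T))=q_0(x)/G_x(T,Y(T))$. The replicating strategy for the claim $1/G_x(T,Y(T))$ is obtained as in \cite{jouini_bayesian_2001}, Thm 3.2 (equivalently, as in Theorem \ref{theo:Ambiguity} for fixed $k$). Under $\tilde\P$, $Y$ is a Brownian motion, so the discounted price is the heat-kernel average $\int_{\R^d} G_x^{-1}(T,y+z)\varphi_{T-t}(z)\,\d z$. Differentiating in $y$ gives the integrand $-\nabla G_x/G_x^2$ convolved with $\varphi_{T-t}$. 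Integrating against $b(x)q_0(x)\,\d x$ by linearity produces $\pi^H$.

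\textbf{Main obstacle.} The hardest part is analytic rather than algebraic. We must show that $X^*\in L^2$, that the Fr\'echet-derivative and projection steps are legitimate in infinite dimensions, and that the Fubini interchanges in both the budget computation and the $x$-integral of the hedging strategy are justified. For this we will rely on Assumption \ref{assum:q0} (Hilbert--Schmidt, hence $b\in L^2(q_0)$) together with Cauchy--Schwarz bounds. The remaining verification of the fixed-point (saddle-point) property is a direct adaptation of the finite case.
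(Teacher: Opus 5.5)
Your derivation of the candidate $X^*$ and of $\pi^*$ follows the paper's route: the Lagrangian for fixed $(\bar b,b)$, the conditions $\E[\hat\Lambda X]=x_0$, $\E[X]=\bar b$, $\E_x[X]=b(x)$, the Fredholm equation \eqref{eq:Fredholm} with Lemma \ref{lem:bx}, and replication of $q_0(x)/G_x(T,Y(T))$ via the heat-kernel average under $\tilde\P$. There are two small slips.

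\begin{itemize}
\item The first-order condition must contain $2(\alpha-\beta)\bigl(X-\int_\D b(x)q_T(Y,x)\,\d x\bigr)$; your version drops $X$.
\item If you eliminate $\lambda,\bar b$ before imposing $b(x)=\E_x[X]$, the leftover term is $\frac{\alpha-\beta}{\alpha}\bigl(\bar b-\E[b(\Theta)]\bigr)\bigl(\hat\Lambda-\E[\hat\Lambda^2]\bigr)$. This is random, not a deterministic constant. It vanishes only because the fixed point forces $\E[b(\Theta)]=\E[X]=\bar b$, which is how the paper obtains $\lambda=-1$.
\end{itemize}

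The genuine gap is the optimality verification. Your concavity remark holds for fixed $(\bar b,b)$. But in \eqref{eq:MVobjective:Afinal} the supremum over $\bar b$ sits inside the infimum over $b$. So for fixed $b$ you need $X^*$ to maximise $\Phi_b(X)=\E[X]+\beta\Var(X)+(\alpha-\beta)\E[(X-b(\Theta))^2]$ on the budget set.

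The second variation of $\Phi_b$ is $\alpha\E[h^2]-\beta(\E[h])^2$, which is indefinite. Take the budget-neutral direction $h=1-\hat\Lambda/\E[\hat\Lambda^2]$, and set $p:=1-1/\E[\hat\Lambda^2]$. Then $\E[h]=\E[h^2]=p$, so the $s^2$-coefficient of $\Phi_b(X^*+sh)$ is $p(\alpha-\beta p)$. This is positive whenever $p>\alpha/\beta$. Your argument does go through when $p\le\alpha/\beta$, because $p$ is the maximum of $(\E[h])^2/\E[h^2]$ over budget-neutral $h$.

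The bad regime is not exotic. For the Gaussian calibration of Section \ref{sec:experiments} ($\theta=0.3$, $v=0.15$, $T=1$) one has $\E[\hat\Lambda^2]\approx 1.09$, so $p\approx 0.08>0.02=\alpha/\beta$ for $\alpha=-1$, $\beta=-50$.

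In that regime $\sup_\pi\Phi_b=+\infty$ for every $b$. Consequences:
\begin{itemize}
\item No saddle point exists, and $X^*$ does not maximise the inner problem.
\item $\inf_b\sup_\pi=+\infty$ while the original value is finite.
\item Citing Lemma \ref{lem:interchange} does not rescue the step. Its appendix proof uses $\sup_\pi\Phi(\pi,b^\pi)\ge\inf_b\sup_\pi\Phi(\pi,b)$ without justification, and that is the nontrivial direction. The paper's own argument leans on the same step.
\end{itemize}

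The repair is to bypass the minimax entirely. The functional $J(X)=\E[X]+\alpha\E[\Var(X|\Theta)]+\beta\Var(\E[X|\Theta])$ is concave, since both variance terms are nonnegative quadratic forms in $X$ and $\alpha,\beta<0$. Its G\^ateaux derivative is
\begin{equation*}
DJ(X)[h]=\E\Bigl[h\Bigl(1+2\alpha X-2\beta\E[X]+2(\beta-\alpha)\int_\D \E_x[X]\,q_T(Y,x)\,\d x\Bigr)\Bigr].
\end{equation*}
This vanishes for all $h$ with $\E[\hat\Lambda h]=0$ if and only if the bracket equals $-\lambda\hat\Lambda$ for some $\lambda$. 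That is exactly your first-order condition with $\bar b=\E[X]$ and $b(x)=\E_x[X]$.

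Hence the $X^*$ built from the Fredholm solution is a stationary point of a concave functional on the affine budget set, and is therefore optimal. The analytic issues you list as the main obstacle ($L^2$ bounds, Fubini) are secondary to this point.
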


\begin{rem}
    Numerical methods may be necessary to check that the eigenvalues of $T_2$ do not coincide with $\frac{\alpha}{\alpha-\beta}$, and to invert $(I + \gamma T_2)^{-1}$.
    To compute the eigenvalues of $T_2$, the integral can be discretized using quadrature rules; the eigenvalues of the resulting matrix can be computed numerically. 

\end{rem}

\subsubsection{Example: Gaussian prior}\label{ssec:gaussian}

In this section, we assume that we have one risky asset with a prior distribution of $\Theta$ which is normal $\mathcal{N}(\theta,v^2)$, $\theta\in\R, v>0$ that is $$q_0(z)  = \frac{1}{\sqrt{(2\pi) v^2}} \exp(-\frac{1}{2 v^2} (z-\theta)^2 ), \quad z\in\R.$$ 
To simplify computations, we consider a time horizon $T=1$. Moreover, we set  $\lambda =-\gamma := \frac{\alpha-\beta}{\alpha} < 0$. In this case the function $F: [0,T]\times \mathbb{R}\to\mathbb{R}_+$ is given by (see \cite{jouini_bayesian_2001}, Remark 5.3): 
$$ F(t,y)= (1+tv^2)^{-1/2} \exp\Big( \frac{(\theta +v^2 y)^2}{2v^2 (1+tv^2)}-\frac{\theta^2}{2v^2}\Big).$$
And we  immediately compute
\begin{align*}
l(x) &:= \E_x[\hat\Lambda]= \frac{(1+v^2 )^{}}{(1+2v^2)^{1/2}}  \exp\Big(\frac{2 \theta^2-2\theta x-v^2x^2}{2(1+2v^2)}\Big)\\
q(x,y) &:=K_2(x,y) = \E_x[q_1(Y,y)]=\frac{1+v^2}{\sqrt{2\pi}v(1+2v^2)^{1/2}}  \exp\Big(-\frac{((x-y)v^2+\theta -y)^2}{2v^2(1+2v^2)}\Big) \\
L &:= \E[\hat\Lambda^2] 
= \frac{1+v^2}{\sqrt{2v^2+1}}  \exp\Big(\frac{\theta^2}{1+2v^2}\Big)
\end{align*}
Note that $q$ is indeed a stochastic kernel.
Thus, we can write the Fredholm equation \eqref{eq:Fredholm} in the form
$$ b(x) = x_0-\frac{1}{2\alpha}\Big( L-l(x)\Big) +\lambda \int_{-\infty}^\infty [q(x,y) -\varphi_0(y) l(y)] b(y)dy.$$
Using the probabilistic Hermite polynomials $(H_n)$ which satisfy
\begin{align}\label{eq:Hermite_gen}
\sum_{n=0}^\infty H_n(x) \frac{t^n}{n!}= e^{xt-\frac12 t^2}, \quad t,x\in\R    
\end{align}  we obtain the following solution where we set
\begin{equation}\label{eq:notation0}
\rho=\frac{v^2}{1+v^2}\in(0,1),
  \qquad
  \eta=-\frac{\theta v}{1+v^2}.
\end{equation}
A detailed derivation can be found in the appendix.

\begin{theo}\label{thm:Fredhol_sol_Gaussian}
    The solution of the Fredholm integral equation \eqref{eq:Fredholm} is given by
    $$ b(x) = x_0+\frac{1}{2\alpha} \sum_{n=1}^\infty \frac{c_n}{1-\lambda \rho^n}\Big[ H_n\Big(\frac{x-\theta}{v}\Big)-  H_n\Big(-\frac{\theta}{v}\Big)\Big], \quad x\in\R$$ with
   \begin{equation*}\label{eq:cn_explicit2}
  c_n=
  \sum_{j=0}^{\lfloor n/2\rfloor}
  \frac{(-1)^j\rho^{2j}\eta^{\,n-2j}}
       {2^j j!(n-2j)!},\quad n\in\mathbb{N}.
\end{equation*}    
\end{theo}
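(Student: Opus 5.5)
The plan is to diagonalise the kernel $q(x,y)$ in the Hermite basis and then solve the Fredholm equation \eqref{eq:Fredholm} coefficient by coefficient. Throughout I write $z=(x-\theta)/v$, so that under $q_0$ the variable $z$ is standard normal, and $\{H_n(z)\}$ is an orthogonal basis of $L^2(q_0)$. The derivation is formal at first; convergence of the resulting series in $L^2(q_0)$ is checked at the end.

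\emph{Step 1 (diagonalising $q$).} Completing the square in the formula for $q$ shows that $y\mapsto q(x,y)$ is the normal density with mean $(v^2x+\theta)/(1+v^2)$ and variance $v^2(1+2v^2)/(1+v^2)^2$. In the standardised variable $u=(y-\theta)/v$ this is $\mathcal N(\rho z,\,1-\rho^2)$ with $\rho$ as in \eqref{eq:notation0}. Mehler's identity $\E[H_n(\rho z+\sqrt{1-\rho^2}\,\xi)]=\rho^n H_n(z)$, for $\xi\sim\mathcal N(0,1)$, then gives
\[
\int q(x,y)\,H_n\Big(\frac{y-\theta}{v}\Big)\,dy=\rho^n H_n(z).
\]
I would prove this identity directly from the generating function \eqref{eq:Hermite_gen}: apply the integral to $e^{ut-t^2/2}$, use the Gaussian moment generating function, and compare coefficients of $t^n$. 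So $T_2$ acts diagonally with eigenvalues $\rho^n\in(0,1]$, and $1-\lambda\rho^n\neq 0$ because $\lambda<0$.

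\emph{Step 2 (expanding $l$).} Write $l(x)$ as $\exp$ of a quadratic in $z$. Using the generating function together with a Gaussian-integral (or Mehler-type) representation of $e^{-az^2+bz}$, expand
\[
l(x)=\kappa\sum_{n\ge0}c_nH_n(z),
\]
where $\kappa$ is a constant. I expect the coefficients to come out in the closed form stated, $c_n=\sum_j(-1)^j\rho^{2j}\eta^{n-2j}/(2^jj!(n-2j)!)$: this is the $t^n$ coefficient of $\exp(\eta t-\tfrac12\rho^2t^2)$ multiplied by the Hermite generating function. Matching $\kappa$ against $l$ (and hence against $L$) is a bookkeeping check that should fix the normalisation so that the prefactor in front of $\sum_n c_n$ is exactly $1/(2\alpha)$.

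\emph{Step 3 (solving the equation).} Make the ansatz $b=\beta_0+\sum_{n\ge1}\beta_nH_n(z)$. The rank-one term $\lambda\int\varphi_0(y)l(y)b(y)\,dy$ is a constant $C$, so it only affects the constant mode. Equating the $H_n$ coefficients for $n\ge1$ gives
\[
\beta_n(1-\lambda\rho^n)=\frac{c_n}{2\alpha},
\]
which produces the stated denominators. For the constant mode, $\beta_0$ is determined from the $n=0$ equation together with the self-consistency condition defining $C$.

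\emph{Main obstacle.} I expect the main difficulty to be showing that this constant collapses to $x_0-\frac1{2\alpha}\sum_n\frac{c_n}{1-\lambda\rho^n}H_n(-\theta/v)$. For this, observe that $q_0(y)l(y)$ is itself a multiple of a Gaussian density. Hence $\int q_0\,l\,H_n$ is a Hermite polynomial evaluated at a shifted mean, and this should reduce to the value at $z=-\theta/v$, i.e.\ at $x=0$. A consistency check is the identity $L=\E[\hat\Lambda^2]=\int q_0\,l$, which the expansion from Step 2 must reproduce. Finally, $|c_n|$ is bounded by the coefficients of an entire generating function and $\|H_n\|^2_{L^2(q_0)}=n!$, so the series converges in $L^2(q_0)$. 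Together with Lemma \ref{lem:bx}, which applies since $-1/\gamma=1/\lambda<0$ lies outside the spectrum $\{\rho^n\}\cup\{0\}$, this identifies the series as the solution.
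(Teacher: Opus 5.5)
Your route is essentially the paper's (Hermite diagonalisation of $q$, Hermite expansion of $l$, coefficient matching). Steps 1--2 and the equations $(1-\lambda\rho^n)\beta_n=c_n/(2\alpha)$ for $n\ge1$ are correct, and in fact $\kappa=1$. The gap is the constant mode. You rightly flag it as the crux, but you do not resolve it, and the one concrete identity you offer there is false. One has $\int q_0\,l=\int q_0(y)\,\E_y[\hat\Lambda(T)]\,\d y=\E[\hat\Lambda(T)]=1=c_0$, whereas $L=\E[\hat\Lambda^2(T)]=\frac{1+v^2}{\sqrt{1+2v^2}}e^{\theta^2/(1+2v^2)}>1$. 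Normalising $\kappa$ through your identity would give $\kappa=L$ instead of the correct $\kappa=1$.

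The two facts actually needed are the ones the paper's proof opens with:
(i) $L=l(0)$, i.e.\ $L=\sum_{n\ge0}c_nH_n(-\theta/v)$;
(ii) $q_0(y)\,l(y)=q(0,y)$ exactly, so that $\int q_0\,l\,H_n(\frac{\cdot-\theta}{v})=\rho^nH_n(-\theta/v)$. Note the factor $\rho^n$: the result is not simply ``the value at $z=-\theta/v$''.

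With (i)--(ii), the two $\lambda\beta_0$ contributions in the $H_0$-equation cancel (this uses $\int q_0\,l=1$), and
\begin{equation*}
\beta_0=x_0-\frac{L-c_0}{2\alpha}-\lambda\sum_{n\ge1}\rho^n\beta_nH_n\Big(-\frac{\theta}{v}\Big)=x_0-\sum_{n\ge1}\Big(\frac{c_n}{2\alpha}+\lambda\rho^n\beta_n\Big)H_n\Big(-\frac{\theta}{v}\Big)=x_0-\sum_{n\ge1}\beta_nH_n\Big(-\frac{\theta}{v}\Big),
\end{equation*}
which is the stated constant. Without (i), the term $-(L-c_0)/(2\alpha)$ inherited from $f$ cannot be put in Hermite form. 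With your identity it would vanish, and you would get the wrong constant $x_0-\lambda\sum_{n\ge1}\rho^n\beta_nH_n(-\theta/v)$.

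The paper packages (i)--(ii) more efficiently. The equation becomes $b(x)=f(x)+\lambda[(Qb)(x)-(Qb)(0)]$ with $f(0)=x_0$, so $b(0)=x_0$ at once, and one expands $b-x_0$ in $\psi_n=H_n(\frac{\cdot-\theta}{v})-H_n(-\frac{\theta}{v})$.

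Separately, your convergence argument is too weak as stated. An entire generating function does not by itself give $\sum_n n!\,c_n^2<\infty$. Use instead $l\in L^2(q_0)$ and Parseval, $\sum_n n!\,c_n^2=\|l\|^2_{L^2(q_0)}<\infty$, together with $1-\lambda\rho^n\ge1$.
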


\begin{rem} As long as $ |\lambda| < 1/\rho$ we can represent the solution explicitly as a Neumann series, that is
\begin{align*}
    \label{eq:neumann_solution}
  b(x)=x_0+\frac{1}{2\alpha}\sum_{j=0}^{\infty}\lambda^j
       \bigl(F_j(x)-F_j(0)\bigr),
\end{align*}
with
\begin{equation}\label{eq:Fj}
  F_j(x)=\frac{1}{\sqrt{1-r_j^2}}
  \exp\!\left(
  -\frac{r_j^2z^2-2\eta_j z+\eta_j^2}{2(1-r_j^2)}
  \right),
  \qquad z=\frac{x-\theta}{v}
\end{equation}
where
\begin{equation}\label{eq:rj_etaj}
  r_j=\rho^{j+1},
  \qquad
  \eta_j=\eta\rho^j.
\end{equation}
\end{rem}

Having the solution $b$ of the integral equation, we can now use the general main Theorem \ref{theo:Ambiguity_cts} to obtain the optimal terminal wealth. This involves computing two integrals and putting terms together. The optimal investment strategy may also be derived from Theorem \ref{theo:Ambiguity_cts}.
\begin{theo}\label{theo:X_Gaussian}
    In the special Gaussian case of this subsection, the optimal terminal wealth is given by
    \begin{align*}
        X^* &= C_1 + C_2(Y(1)) \hat \Lambda
    \end{align*}
    with
    \begin{align*}
        &C_1 = (1-\lambda) x_0-\frac{L}{2\alpha}- \frac{\lambda}{2\alpha}    \sum_{n=1}^\infty \frac{c_n(\rho^n-1)}{1-\lambda \rho^n} H_n\Big(-\frac{\theta}{v}\Big),\\
        &C_2(Y(1)) = \frac 1{2\alpha} + \frac{\lambda}{\sqrt{v^2+1}}\exp\left(\frac{Y(1)^2}{2} - \frac{(Y(1)-\theta)^2}{2(v^2+1)}\right) \\
    &\qquad\times \left(x_0 + \frac 1{2\alpha} \sum_{n=1}^\infty \frac{c_n}{1-\lambda\rho^n} \left[\left(\frac{v}{\sqrt{1+v^2}}\right)^n H_n\left(\frac{Y(1)-\theta}{\sqrt{1+v^2}}\right) - H_n\left(-\frac{\theta}{v}\right)\right]\right). 
    \end{align*}

\end{theo}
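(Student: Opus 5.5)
The plan is to substitute the explicit solution $b$ of Theorem \ref{thm:Fredhol_sol_Gaussian} into the general formula \eqref{eq:optwealth_ambiguity_cont} of Theorem \ref{theo:Ambiguity_cts}, with $T=1$ and $\lambda=\frac{\alpha-\beta}{\alpha}$. This gives
\[
X^* = X^B + \lambda \int_{\R} b(x)\, q_1(Y,x)\,\d x \;-\; \lambda \int_{\R} b(x)\, l(x) q_0(x)\,\d x ,
\]
where $X^B = x_0-\frac{1}{2\alpha}(L-\hat\Lambda)$ by Theorem \ref{theo:Bayes}. The proof then splits into identifying the two integrals. The first integral is random and will produce the $Y(1)$-dependent factor in $C_2$. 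The second is a deterministic constant and goes into $C_1$. The $\frac{1}{2\alpha}\hat\Lambda$ piece of $X^B$ supplies the leading term $\frac1{2\alpha}$ of $C_2$.

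\emph{Random term.} The integral $\int b(x)q_1(Y,x)\,\d x=\E[b(\Theta)\mid \F^Y_1]$ is a posterior mean. For the Gaussian prior, the posterior of $\Theta$ given $Y(1)$ is $\mathcal N\big(\frac{\theta+v^2Y(1)}{1+v^2},\frac{v^2}{1+v^2}\big)$. Hence $Z=(\Theta-\theta)/v$ is conditionally distributed as $s\,z+\sqrt{1-s^2}\,G$, with $s=v/\sqrt{1+v^2}$, $z=(Y(1)-\theta)/\sqrt{1+v^2}$ and $G$ standard normal. I will use the Appell/Mehler property of the probabilists' Hermite polynomials,
\[
\E\big[H_n(sz+\sqrt{1-s^2}G)\big]=s^nH_n(z),
\]
which follows at once from the generating function \eqref{eq:Hermite_gen} by taking expectations in $t$. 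Applied term by term to the series for $b$, this turns $H_n\big(\frac{x-\theta}{v}\big)$ into $\big(\frac{v}{\sqrt{1+v^2}}\big)^n H_n\big(\frac{Y(1)-\theta}{\sqrt{1+v^2}}\big)$. The constant terms $x_0$ and $-H_n(-\theta/v)$ pass through unchanged, which reproduces the bracket in $C_2$.

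To exhibit this as (something)$\times\hat\Lambda$, I multiply and divide by $\hat\Lambda=F(1,Y(1))^{-1}$. A direct check of exponents shows
\[
F(1,Y(1))=\frac{1}{\sqrt{1+v^2}}\exp\Big(\frac{Y(1)^2}{2}-\frac{(Y(1)-\theta)^2}{2(1+v^2)}\Big).
\]
Multiplying by $2v^2(1+v^2)$, all terms in the difference of the exponents cancel. This gives exactly the prefactor in $C_2$.

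\emph{Constant term.} I write $\int b\,l\,q_0=\E_{\Q_0}[b(\Theta)\,l(\Theta)]$ and use $\E_{\Q_0}[l(\Theta)]=\E[\hat\Lambda]=1$. The $x_0$ part then gives $-\lambda x_0$, hence the $(1-\lambda)x_0$ in $C_1$. For the Hermite part I need the moments $\E_{\Q_0}\big[H_n\big(\frac{\Theta-\theta}{v}\big) l(\Theta)\big]$. Since $l(x)q_0(x)$ is an (unnormalised) Gaussian density in $x$, I compute its generating function $\E_{\Q_0}\big[e^{tZ-t^2/2}\,l(\Theta)\big]$ in closed form, completing the square, and compare coefficients of $t^n$. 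I expect this to yield $\rho^n H_n(-\theta/v)$. The same identity should already be behind the derivation of Theorem \ref{thm:Fredhol_sol_Gaussian}, since the rank-one part of the kernel involves $\varphi_0 l$. Combined with the $-H_n(-\theta/v)$ term it gives the coefficient $c_n(\rho^n-1)H_n(-\theta/v)/(1-\lambda\rho^n)$ in $C_1$, with the sign fixed by the $-\lambda$ in front.

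\emph{Main obstacle.} The main obstacle is this moment computation, getting the exact scaling $\rho^n$ and argument $-\theta/v$. A second, technical point is justifying the term-by-term interchange of the series with the conditional and prior expectations. For this I would use the bound $|c_n|/(1-\lambda\rho^n)\lesssim |c_n|$ (recall $\lambda<0$), the factorial decay of $c_n$ visible from its explicit formula, and the growth bounds on $H_n$ integrated against Gaussians. Together these give dominated convergence.
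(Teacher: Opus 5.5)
Your proposal is correct and follows the paper's proof. You substitute $b$ from Theorem \ref{thm:Fredhol_sol_Gaussian} into \eqref{eq:optwealth_ambiguity_cont}, put the deterministic part into $C_1$, and write the $Y(1)$-dependent part as $F(1,Y(1))\,\hat\Lambda$ times the posterior mean of $b(\Theta)$; this is what the paper's prefactor in $C_2$ encodes. The moment you flag as the main obstacle is already available from the Fredholm proof: $q_0(y)l(y)=K_1(y)=q(0,y)$, so \eqref{eq:QtoH} at $x=0$ gives $\E_{\Q_0}\big[H_n\big(\tfrac{\Theta-\theta}{v}\big)l(\Theta)\big]=\rho^nH_n(-\tfrac{\theta}{v})$, which is how the paper obtains $C_1$. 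Your Mehler-identity treatment of the random term is a cleaner version of the paper's direct Gaussian integration, and your dominated-convergence remark supplies a justification the paper leaves implicit.
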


\section{Numerical examples}\label{sec:experiments}

We now present examples illustrating the results of Section \ref{ssec:main_ambiguity}. 
We consider two different priors for $\Theta$: a discrete one with two mass points, which provides a minimal analytically tractable setting, and a Gaussian prior, the natural continuous benchmark.
For each, we examine how the distribution of the final payoff responds to varying degrees of risk and ambiguity aversion. 
In the calibrations considered below, we find that the investor's attitude towards market risk affects the distribution of the final payoff more strongly than her ambiguity aversion. 
Throughout the section, we consider that the market contains a single risky asset ($d=1$). Unless otherwise stated, we set $S(0)=x_0=1$ and $\sigma = 0.2$, with the remaining parameters reported in each subsection.

\subsection{Discrete prior}

In this section, we consider a fixed time horizon $T=0.25$ and a two-point prior; that is $\Q_0(\{0.15\}) = \Q_0(\{0.45\}) = 0.5$, unless otherwise indicated.
This prior can be interpreted as two equally likely economic scenarios: a favorable one ($\theta = 0.45$) and an unfavorable one ($\theta = 0.15$).
The two possible values for $\Theta$ correspond to a drift $\mu \in \{0.03,0.09\}$.
We illustrate the impact of the risk and ambiguity aversion parameters $\alpha$ and $\beta$ on the optimal terminal wealth $X^*_T$ and the associated investment strategy $\pi^*$. The figures of this section were obtained using Lemma \ref{lem:solution_bk} and Theorem \ref{theo:Ambiguity}. 

We also studied a five-point prior, with the same mean but a wider support, as well as a longer time horizon $T=1$. Neither materially changed the results, suggesting that our findings are driven by the prior's mean and dispersion rather than its finer structure or the specific horizon. We therefore report only the two-point, $T=0.25$ case.

\subsubsection{Optimal terminal wealth $X^*$}

Figure \ref{fig:Xdist_discrete_prior} presents the distribution of the terminal wealth $X^*$ for various values of $\alpha$ and $\beta$, with the case $\alpha = \beta = -1$ representing the ambiguity neutral setting of Section \ref{ssec:bayesian} with parameter $\alpha = -1$. We recall that $\alpha$ and $\beta$ must satisfy $\beta < \alpha < 0$, and that $\alpha$ represents the level of risk aversion while $\beta$ or $\alpha-\beta$ is linked to ambiguity preferences. The mean and the variance of each distributions are reported in Table \ref{tab:summary_stats}.

\begin{figure}[H]
	\includegraphics[scale=0.8]{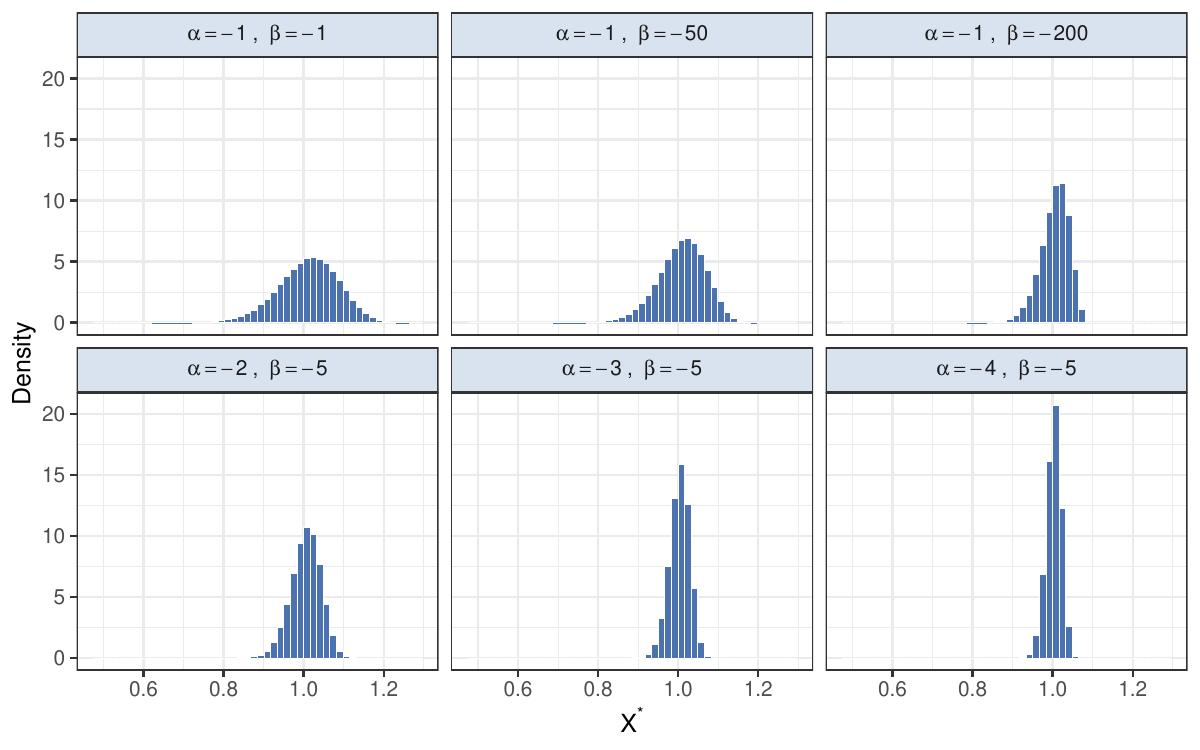}
	\caption{Density of the terminal wealth $X^*$ for different values of $\alpha$ and $\beta$, discrete prior, $T=0.25$. The case $\alpha = \beta$ represents the ambiguity neutral setting of Section \ref{ssec:bayesian} with parameter $\alpha$.}\label{fig:Xdist_discrete_prior}
\end{figure}

\begin{table}[H]
\centering
\caption{Empirical mean and variance of $X^*$ for various $(\alpha, \beta)$ couples.}
\label{tab:summary_stats}
\begin{tabular}{cccc}
\toprule
$\alpha$ & $\beta$ & Mean & Variance \\
\midrule
$-1$ & $-1$   & 1.01 & $5.63 \times 10^{-3}$  \\
$-1$ & $-50$  & 1.01 & $3.48 \times 10^{-3}$\\
$-1$ & $-200$ & 1.01 & $1.29 \times 10^{-3}$\\
$-2$ & $-5$   & 1.01 & $1.39 \times 10^{-3}$\\
$-3$ & $-5$   & 1.00 & $6.22 \times 10^{-4}$\\
$-4$ & $-5$   & 1.00 & $3.51 \times 10^{-4}$\\
\bottomrule
\end{tabular}
\end{table}

A visual inspection of Figure \ref{fig:Xdist_discrete_prior} allows use to conclude that in all cases, the distribution of $X^*$ has a slight left skew; the left tail appears to be heavier. Table \ref{tab:summary_stats} shows that both the empirical mean and variance decrease as both risk and ambiguity aversion increase (that is, as the absolute value of $\alpha$ and $\alpha-\beta$ increase). This was expected; higher aversion towards market risk or model ambiguity result in a preference for a ``less random'' payoff, at the expense of higher returns.

A variation in $\alpha$ has a much stronger impact on the distribution of $X^*$ than one in $\beta$. For example, the distributions obtained using $(\alpha,\beta) = (-1,-200)$ and $(\alpha,\beta) = (-2,-5)$ are very similar. Thus, in this problem, the optimal payoff for an investor with a very high level of ambiguity aversion ($\beta = -200$) is also (almost) optimal for an investor with a slightly increased level of risk and ambiguity aversion ($\alpha = -2$, $\beta = -5$).

Figure \ref{fig:XST_discrete_prior} plots $X^*$ as a function of $S(T)$, for different levels of risk and ambiguity aversion parameters $\alpha$ and $\beta$. Unsurprisingly, lower levels of aversion to either type of risk leads to more variation in $X^*$ as the value of the risky asset changes. For most parameters studied, $X^*$ is an increasing function of $S(T)$. However, for very high levels of ambiguity aversion ($\beta = -200$), $X^*$ is non-monotone and decreases for very large values of $S(T)$. The reluctance of very ambiguity averse investors to invest in the risky asset may explain this payoff (see also Figure \ref{figAmbfromSt}). 

\begin{figure}[H]
	\includegraphics[scale=0.8]{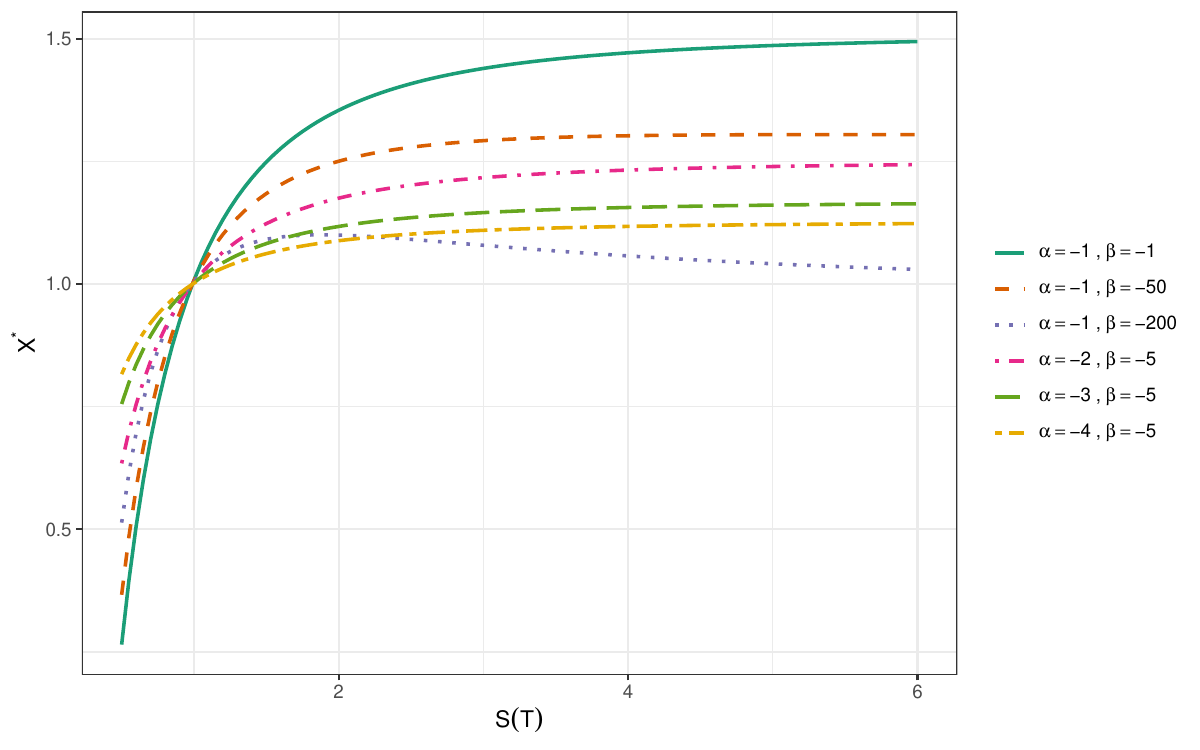}
	\caption{Optimal final wealth $X^*$ as a function of $S(T)$, $T=0.25$, discrete prior.}\label{fig:XST_discrete_prior}
\end{figure}

\subsubsection{Investment strategy}

Figure \ref{figpaths} shows two sample paths of the risky asset and the resulting optimal investment strategies $\pi^B$ and $\pi^*$, representing the optimal strategies of ambiguity neutral and ambiguity averse investors, respectively. In both examples, the value of the risky asset $S$ stays in the interval $(0.9,1.15)$. For these values, the difference between both strategies stays relatively constant, with the ambiguity averse investment strategy being about 0.15 points below the ambiguity neutral one. This lower exposure to the risky asset explains the lower variance observed when comparing the second to the first row of Table \ref{tab:summary_stats}.

\begin{figure}[H]
	\centering
	\begin{tabular}{cc}
		\includegraphics[scale=0.3]{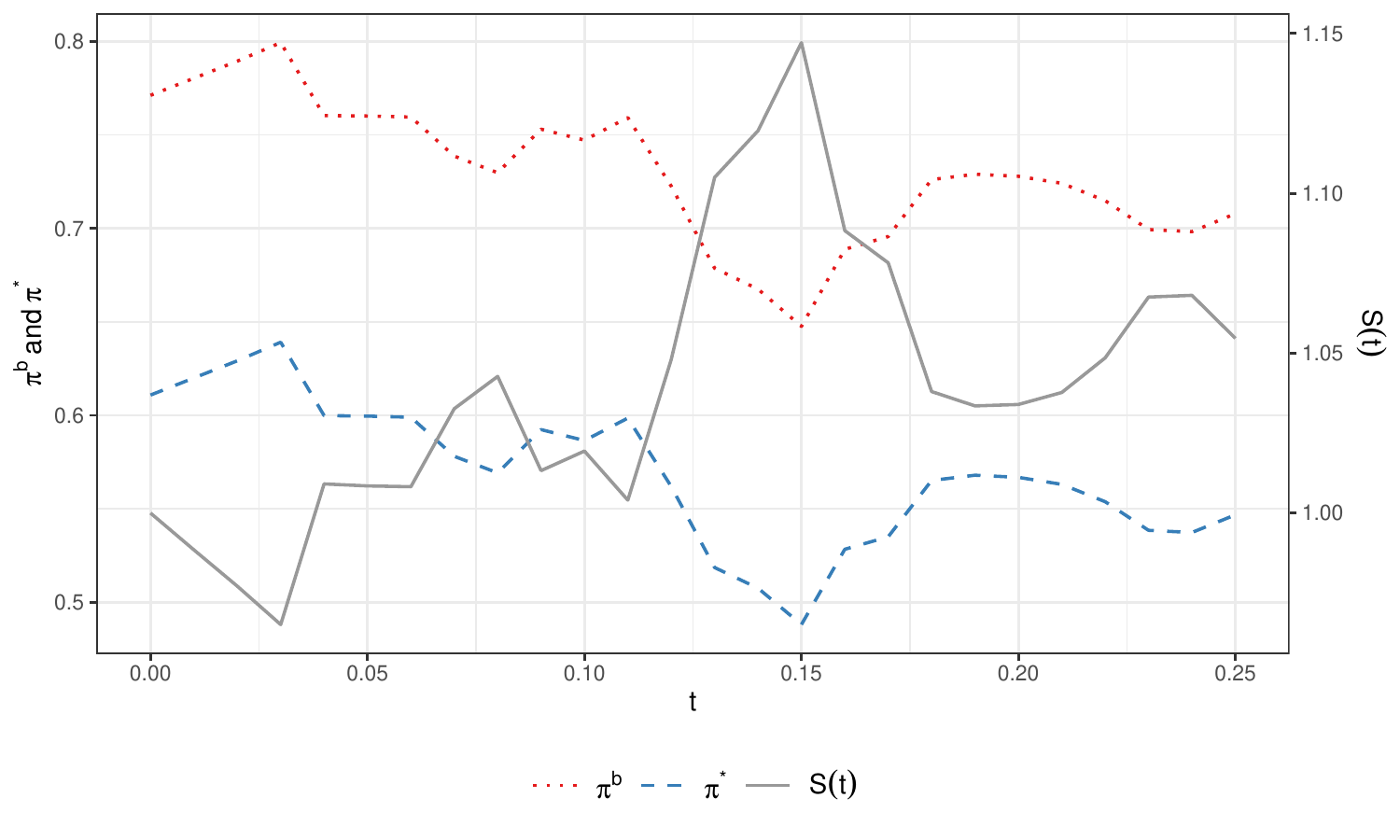}&
		\includegraphics[scale=0.3]{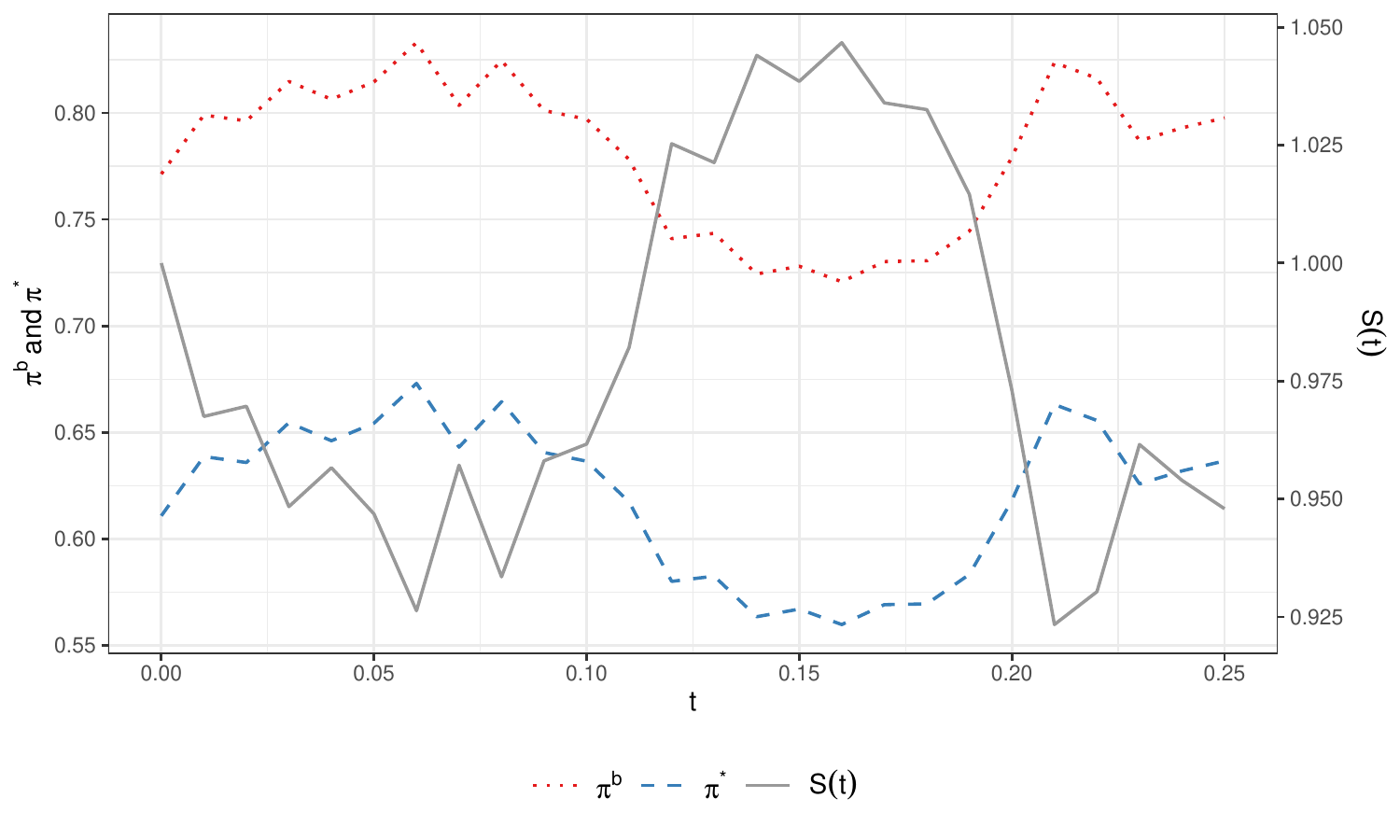}\\
	\end{tabular}
	\caption{Sample paths of the risky asset and the associated optimal investment strategies $\pi^B$ and $\pi^*$, $\alpha = -1$ and $\beta = -50$.}\label{figpaths}
\end{figure}

Figures \ref{figAmbfromSt} and \ref{figAmbfromSt_small} show $\pi^H(t)$, the difference between the proportion invested in the risky asset by the ambiguity averse and the ambiguity neutral investor, at $t=0.1$ and for different values $\alpha$ and $\beta$. Figure \ref{figAmbfromSt} uses the same risk and ambiguity aversion levels as the previous figures, while Figure \ref{figAmbfromSt_small} uses smaller (absolute) values for $\beta$, to better illustrate the smaller values of $\pi^H(t)$. 

\begin{figure}[h!]
	\centering
		\includegraphics[scale=0.75]{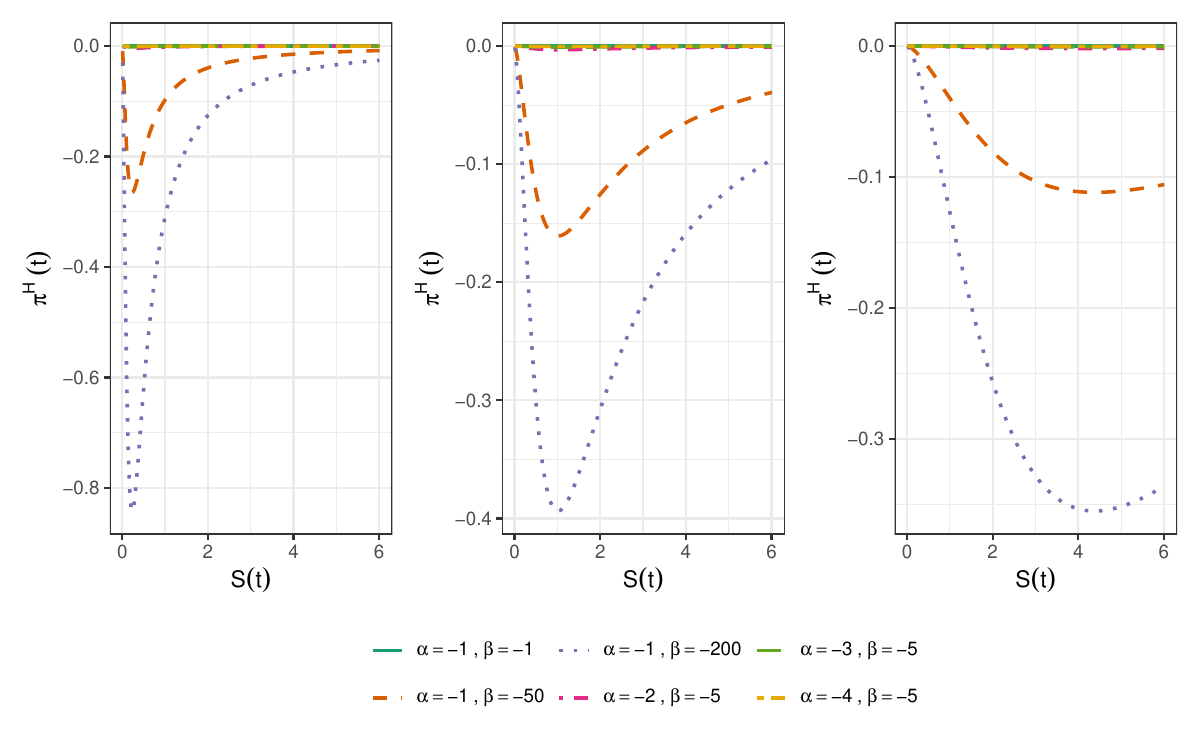}
	\caption{Difference $\pi^H(t) = \pi^*(t)-\pi^B(t)$, with $t=0.1$. Left: $\Q_0(0.15)=0.1$. Middle: $\Q_0(0.15)=0.5$. Right: $\Q_0(0.15)=0.9$.}\label{figAmbfromSt}
\end{figure}

\begin{figure}[h!]
	\centering
		\includegraphics[scale=0.75]{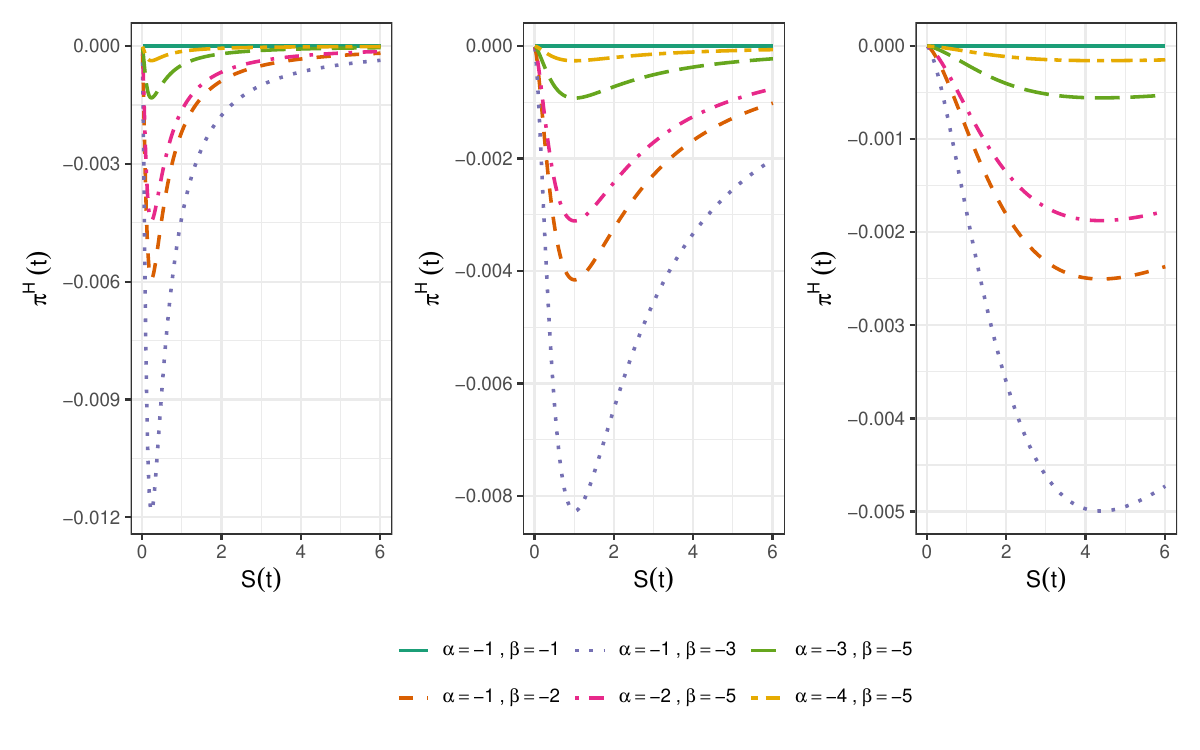}
	\caption{Difference $\pi^H(t) = \pi^*(t)-\pi^B(t)$, with $t=0.1$. Left: $\Q_0(0.15)=0.1$. Middle: $\Q_0(0.15)=0.5$. Right: $\Q_0(0.15)=0.9$.}\label{figAmbfromSt_small}
\end{figure}

In both figures, $\pi^H(t)$ is more sensitive to $\alpha$ than to $\beta$, which may be counterintuitive when comparing these results to Figure \ref{fig:Xdist_discrete_prior}. 
However, it is important to note that Figures \ref{figAmbfromSt} and \ref{figAmbfromSt_small} isolate the ambiguity hedging part of the strategy, while the distribution in Figure \ref{fig:Xdist_discrete_prior} concerns the full payoff, which relies heavily on the Bayesian payoff $X_T^B$.
Thus, the sensitivity of the distribution of $X^*_T$ to $\beta$ comes from the ``ambiguity neutral'' part of the payoff, $X^B_T$, which dominates it.
Considering only $\pi^H$, allows to demonstrate the impact of $\alpha-\beta$ on the part of the hedging strategy that is specifically concerned with ambiguity. 
Nonetheless, for moderate levels of ambiguity aversion (Figure \ref{figAmbfromSt_small}), the impact of $\alpha$ is still modest, with the largest difference reaching only 1.2\% when $\beta = -3$.
Interestingly, for the same level of ambiguity aversion, larger (absolute) values of $\alpha$ reduce the difference between the ambiguity averse and the ambiguity neutral investment strategy.
This phenomenon can be traced back to the factor $\frac{\alpha-\beta}{\alpha}$ in $\pi^H$.

For this analysis, we also considered two other priors, with the same support but difference probabilities. The left-most plots of Figures \ref{figAmbfromSt} and \ref{figAmbfromSt_small} use $\Q_0(0.15) = 0.1 = 1-\Q_0(0.45)$, while the right-most ones use $\Q_0(0.15) = 0.9 = 1-\Q_0(0.45)$. That is, on the left, the scenario with a higher drift is more likely, while on the right, the lower drift one has a higher probability. 
The prior distribution has a significant impact on the investment strategy; the ambiguity hedge $\pi^H(t)$ is larger when the market is not behaving as predicted by the most likely scenario, as described by the prior distribution.

\subsection{Gaussian prior}

The analysis in the present subsection assumes a Gaussian prior (see Section \ref{ssec:gaussian}) with parameters $\theta = 0.3$ and $v = 0.15$, so that the expectation and the variance of the distribution match those of the discrete prior studied in the previous section.
We also consider a time horizon $T=1$, as in Section \ref{ssec:gaussian}.

Figure \ref{fig:Xdist_gaussian} presents the distribution of the optimal payoff at $T$ for different values of $\alpha$ and $\beta$. Similarly to the case of the discrete prior, the distribution is significantly more sensitive to changes in $\alpha$ than in $\beta$. Again, this is due to the optimal terminal wealth being dominated by the Bayesian optimal payoff $X^B$, which is scaled by $\frac{1}{2\alpha}$, see Theorem \ref{theo:Ambiguity_cts}.
In all cases, the distributions are skewed negatively, more so than in the discrete prior case. For our parametrization, higher (absolute) values of $\beta$ appear to lead to an upper bounded distribution (see also Figure \ref{fig:XstarGaussian}).

Our numerical examples suggest that in the Gaussian prior case, the distribution of $X^*$ is slightly more sensitive to changes in $\beta$ compared to the case of the discrete prior. 
This might be explained by the wider support of the Gaussian prior (the whole real line, compared to two isolated points for the discrete prior studied), which makes it harder for the investor to ``learn'' the true value of $\Theta$ as the market develops.
Therefore, ambiguity remains and has a bigger impact on the optimal terminal wealth.

\begin{figure}[h!]
	\centering
		\includegraphics[scale=0.78]{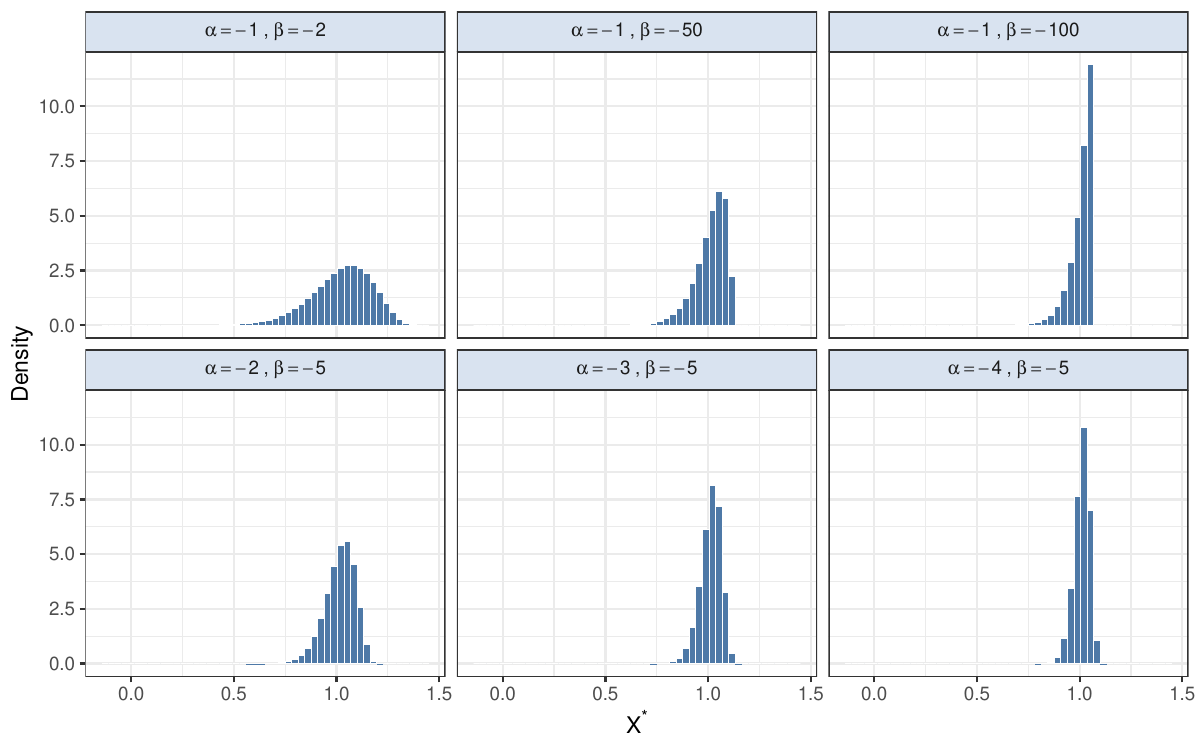}
	\caption{Distribution of the terminal wealth $X^*$ for different values of $\alpha$ and $\beta$, Gaussian prior, $T=1$. The case $\alpha = \beta$ represents the ambiguity neutral setting of Section \ref{ssec:bayesian} with parameter $\alpha$.}\label{fig:Xdist_gaussian}
\end{figure}

Figure \ref{fig:XstarGaussian} expresses $X^*$ as a function of the terminal value of the risk asset $S(T)$. 
As expected, a higher level of risk and/or ambiguity aversion leads to a smaller range for the payoff, which is also observed in the empirical distribution of Figure \ref{fig:Xdist_gaussian}.
As in the discrete prior case, non-monotonicity of $X^*$ as a function of $S(T)$ is observed for higher levels of risk aversion. 
In the Gaussian prior case, with our parametrization, this phenomenon occurs for lower (absolute) values of $\beta$. 
It explains the boundedness of the empirical distributions of Figure \ref{fig:Xdist_gaussian}.
Such a bounded payoff significantly reduces the variance of $X^*$ and contributes to maximizing the objective function \eqref{eq:CX2}.
Although the optimal investment strategy has not been explicitly derived for the Gaussian case, we can infer from Figure \ref{fig:XstarGaussian} that a strongly ambiguity averse investor avoids a large long position in the risky asset when it is performing well.

\begin{figure}[h!]
	\centering
		\includegraphics[scale=0.78]{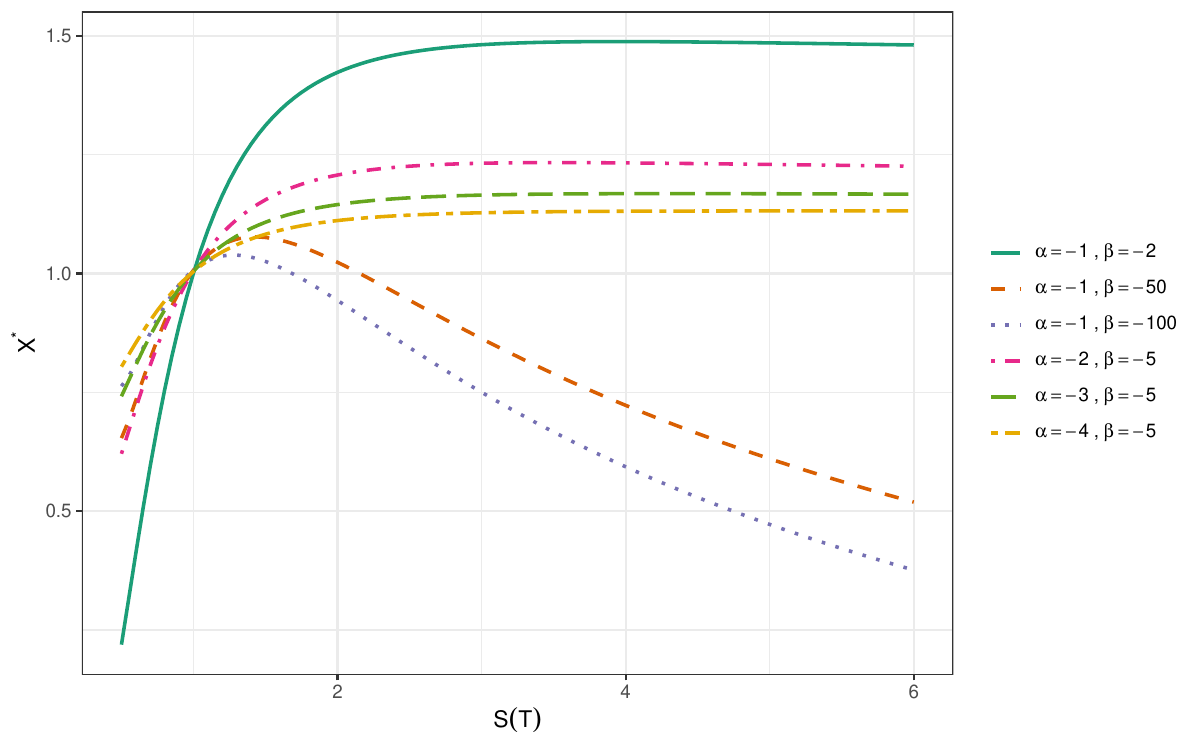}
	\caption{Optimal final wealth $X^*$ as a function of $S(T)$, $T=1$, Gaussian prior.}\label{fig:XstarGaussian}
\end{figure}

\section{Conclusions}
\label{sec:conclusions}

We considered a portfolio allocation problem under model ambiguity using the mean-variance criterion of \cite{maccheroni_alpha_2013}, which separates market risk and model ambiguity aversion. We derived the optimal portfolio when the prior distribution of the (unknown) drift term is discrete and continuous. In the discrete prior case, the optimal terminal wealth relies on the solution of a system of linear equations; the equivalent in the continuous prior case is a Fredholm integral equation. In both cases, under mild conditions on the parameters, the optimal terminal wealth exists. From this result, we also derived the associated investment strategy. In the continuous case, we gave a specific example by assuming a Gaussian prior distribution.

Numerical results show that the final payoff is significantly more sensitive to market risk aversion. However, the part of the investment strategy that hedges against model ambiguity depends not only on the level of ambiguity aversion, but also on the initial beliefs on the distribution of the drift term (its prior distribution).

\section{Declaration}
The authors used ChatGPT Pro in April/May 2026 to solve the Fredholm integral equation in the special case of a  normal prior and to compute the integrals to obtain $X^*$, as well as Claude Pro to assist with the production of numerical results. All resulting calculations have been checked by hand. The authors assume responsibility for all content.

\section{Appendix}
\subsection{Proof of Lemma \ref{lem:interchange}}
To ease notation instead of $X^\pi(T)$ we simply write $X^\pi$. The inequality
   \begin{align*}
& \sup_{\pi \in \Pi(x_0)} \inf_{b \in L^2(\D)}  \E[X^\pi]+\beta  \Var(X^\pi) +(\alpha-\beta) \E[(X^\pi-b(\Theta))^2]\\
\le  &   \inf_{b \in L^2(\D)}  \sup_{\pi \in \Pi(x_0)} \E[X^\pi]+\beta  \Var(X^\pi)+(\alpha-\beta) \E[(X^\pi-b(\Theta))^2]
   \end{align*} 
is obvious. Now we show '$\ge$'.  The value of the optimization problem is finite.
Thus, for arbitrary $\varepsilon>0$ there exists an admissible $X^*$ s.t.
\begin{align*}
    & \E[X^*]+\beta  \Var(X^*)+(\alpha-\beta) \E[\Var(X^*|\Theta)] \\
   \ge \;&  \sup_\pi \E[X^\pi]+\beta  \Var(X^\pi)+(\alpha-\beta) \E[\Var(X^\pi|\Theta)] -\varepsilon
\end{align*}
For the first expression we obtain
\begin{align*}
    & \E[X^*]+\beta  \Var(X^*)+(\alpha-\beta) \E[\Var(X^*|\Theta)] \\
    =\; &  \inf_b \E[X^*]+\beta  \Var(X^*)+(\alpha-\beta) \E[(X^*-b(\Theta))^2] \\
    \le \;& \sup_\pi \inf _b \E[X^\pi]+\beta  \Var(X^\pi)+(\alpha-\beta) \E[(X^\pi-b(\Theta))^2]
\end{align*}
For the second expression we obtain with $b^\pi(\Theta) = \E[X^\pi|\Theta]$
\begin{align*}
    & \sup_\pi \E[X^\pi]+\beta  \Var(X^\pi)+(\alpha-\beta) \E[\Var(X^\pi|\Theta)]-\varepsilon\\
   =\;  & \sup_\pi \E[X^\pi]+\beta  \Var(X^\pi)+(\alpha-\beta) \E[(X^\pi-b^\pi(\Theta))^2]-\varepsilon\\
    \ge \;  & \inf_b \sup_\pi \E[X^\pi]+\beta  \Var(X^\pi)+(\alpha-\beta) \E[(X^\pi-b(\Theta))^2]-\varepsilon
\end{align*}
Letting $\varepsilon\downarrow 0$ implies the result.

\subsection{Proof of Lemma \ref{lem:solution_bk}}

To verify the conditions under which $(I-\frac{\alpha-\beta}{\alpha} A)$ is invertible, we re-write the expression as 
\begin{equation}
    \tilde A_1 - \frac{\alpha-\beta}{\alpha} \mathbf 1_m \varsigma^\top,
\end{equation}
with $\tilde A_1 = I - \frac{\alpha-\beta}{\alpha} A_1$ and we observe that if $\tilde A_1$ is invertible, we can use the so-called matrix determinant lemma (see Lemma 1.1 of \cite{ding2007eigenvalues}) to obtain
\begin{equation*}
    \det\left(I-\frac{\alpha-\beta}{\alpha} A\right) = 
    \det\left( \tilde A_1\right) \left(1+ \frac{\alpha-\beta}{\alpha} \varsigma^\top \tilde A_1^{-1} \mathbf 1_m\right).
\end{equation*}
Thus, when $\tilde A_1$ is invertible, $\det\left(I-\frac{\alpha-\beta}{\alpha} A\right) \neq 0$ if and only if \eqref{eq:condition_bk}.
To conclude, we observe that if $\lambda$ is an eigenvalue of $A_1$, then $\left(1-\frac{\alpha-\beta}{\alpha} \lambda\right)$ is an eigenvalue of $\tilde A_1$ and is non-zero as long as $\lambda \neq \frac{\alpha}{\alpha-\beta}$.

\subsection{Proof of Theorem \ref{theo:Ambiguity}}
We use the representation of the objective function derived in \eqref{eq:MVobjective:Afinal}:
\begin{align*}
   & \E[X^\pi(T)]+\alpha  \E[\Var(X^\pi(T)|\Theta)]+\beta \Var(\E[X^\pi(T)|\Theta]) \\ =
 & \E[X^\pi(T)]+\beta  \Var(X^\pi(T))+(\alpha-\beta) \E[\Var(X^\pi(T)|\Theta)] 
   \\ = & \E[X^\pi(T)]+\beta \inf_{\bar b} \E[(X^\pi(T)-\bar b)^2 ]+(\alpha-\beta) \sum_{k=1}^d  \inf_{b_k}\E_{\theta_k} [(X^\pi(T)-b_k)^2] \Q_0(\theta_k)
\end{align*} 
where $\bar b= \E[X^\pi(T)]$ and $b_k = \E_{\theta_k} [(X^\pi(T)].$
In particular note that $\alpha-\beta>0.$ 
The objective function can further be written as 
   \begin{align*}
    & \E[X^\pi(T)]+\beta \E[(X^\pi(T)-\bar b)^2 ]+(\alpha-\beta) \sum_{k=1}^d  \E \big[ (X^\pi(T)-b_k)^2 1_{[\Theta=\theta_k]}\big]\\
   = & \E[X^\pi(T)]+\beta \E[(X^\pi(T)-\bar b)^2 ]+(\alpha-\beta) \sum_{k=1}^d  \E\Big[\E \big[ (X^\pi(T)-b_k)^2 1_{[\Theta=\theta_k]} | \F_T^Y\big]\Big]\\
   = & \E[X^\pi(T)]+\beta \E[(X^\pi(T)-\bar b)^2 ]+(\alpha-\beta) \sum_{k=1}^d  \E \big[ (X^\pi(T)-b_k)^2 \Q_T(\theta_k) \big].
\end{align*}
For the last equation note that $X^\pi(T)$ is $\F_T^Y$-measurable. Using Lemma \ref{lem:interchange} we first maximize this expression w.r.t.\ $X^\pi(T)$, respecting the budget constraint. We use the dual or martingale approach. Once we have the optimal terminal wealth we can get the optimal strategy by replicating it, since the market is complete. Note that we have to make sure that in the end our solution satisfies  $\E[X^\pi(T)]=\bar b$ and $\E_{\theta_k}[X^\pi(T)]=b_k.$ I.e. among all solutions that we get when maximizing this expression under the constraint $\E[\hat \Lambda(T) X^\pi(T)] = x_0$ only those are feasible which satisfy the additional constraints.  Thus, the Lagrange function (with multiplier $\lambda\in\R$) has the form (from now on we skip the time index $T$ and the strategy $\pi$ in the notation of $X^\pi(T)$ and $\hat\Lambda(T)$)
\begin{align*}
    L(X,\lambda) &= \E\Big[X+\beta(X-\bar b)^2 +(\alpha-\beta)\sum_{k=1}^d (X-b_k)^2 \Q_T(\theta_k)+\lambda(\hat\Lambda X-x_0)  \Big].
\end{align*}
Taking  the Fr\'echet derivative w.r.t. $X$ and setting it to zero yields
\begin{align*}
    1+2\beta(X-\bar b) +2(\alpha-\beta)\sum_{k=1}^d (X-b_k) \Q_T(\theta_k)+\lambda\hat\Lambda =0. 
\end{align*}
Since $\sum_{k=1}^d  \Q_T(\theta_k)=1$ we obtain
\begin{align*}
    X= \frac{1}{2\alpha} \Big( 2\beta \bar b-1 +2 (\alpha-\beta)\sum_{k=1}^d b_k \Q_T(\theta_k)-\lambda\hat\Lambda \Big).
\end{align*}
The Lagrange multiplier $\lambda$ and the parameters $\bar b,b_k$ can be obtained from the conditions 
\begin{itemize}
    \item[(i)] $\E[\hat\Lambda X]=x_0.$
     \item[(ii)] $\E[ X]=\bar b.$
      \item[(iii)] $\E_{\theta_k}[ X]=b_k, \; \mbox{ for }\; k=1,\ldots,m.$
\end{itemize}
Consider now the second condition. Note that $\E[\Q_T(\theta_k)]=\Q_0(\theta_k)$, since $(\Q_t(\theta_k))$ is an $(\P,\F^Y)$ martingale and that $\sum_{k=1}^d b_k \Q_0(\theta_k)=\bar b.$ Then (ii) again implies that $\lambda =-1.$
Next we obtain that
$$ \E[\hat \Lambda \Q_T(\theta_k)]= \E\Big[\hat \Lambda \E[1_{[\Theta=\theta_k]}| \F_T^Y]\Big] = \E[\hat\Lambda 1_{[\Theta=\theta_k]}]=\Q_0(\theta_k) \E_{\theta_k}[\hat\Lambda]$$
With this observation condition (i) yields
\begin{align*}
    \bar b = \frac{1}{2\beta} \Big(2\alpha x_0 - \Var(\hat\Lambda) -2(\alpha-\beta) \sum_{k=1}^d \Q_0(\theta_k) b_k \E_{\theta_k}[\hat\Lambda]  \Big)
\end{align*}
Plugging $\lambda$ and $\bar b$ into the equation for $X$ yields
\begin{align}\label{eq:optimalwealth_a}
    X= x_0 - \frac{1}{2\alpha} \Big(\E[\hat\Lambda^2]-\hat\Lambda  - 2 (\alpha-\beta)\sum_{k=1}^d b_k (\Q_T(\theta_k)-\Q_0(\theta_k)\E_{\theta_k}[\hat\Lambda]) \Big).
\end{align} 
Finally, (iii) leads to the linear equation for $b=(b_k):$

$$\Big(I-\frac{\alpha-\beta}{\alpha} A\Big) b = c $$
where $A\in\R^{m\times m}$ has elements
$$ a_{ik} = 
\Big(\E_{\theta_i}[\Q_T(\theta_k)] - \Q_0(\theta_k)\E_{\theta_k}[\hat\Lambda]\Big)$$
and $c\in\R^m$ has elements
$$ c_k = x_0-\frac{1}{2\alpha} (\E[\hat\Lambda^2] - \E_{\theta_k}[\hat\Lambda]) . $$
From Lemma \ref{lem:solution_bk} we know that under our assumptions there is a solution $b$. Plugging in the expressions for $\hat\Lambda$ and $\Q_T(\theta_k)$ from Section \ref{sec:financial_market} into \eqref{eq:optimalwealth_a} yields the representation for $X$.

In order to determine the optimal investment strategy we have to derive the optimal wealth process under $\tilde \P.$ We can then determine the strategy $\pi^*$ from identifying
\begin{align*}
    \hat X_t = \tilde\E[X|\F_t^Y] = x_0+ \int_0^t \pi_s^* \sigma dY_s.
\end{align*}
Using the expression \eqref{eq:optimalwealth_a} we obtain
\begin{align*}
   \tilde\E[X|\F_t^Y] &=  x_0 - \frac{1}{2\alpha} \Big(\E[\hat\Lambda^2]-\tilde\E[\hat\Lambda|\F_t^Y]  - 2 (\alpha-\beta)\sum_{k=1}^d b_k (\tilde\E[\Q_T(\theta_k)|\F_t^Y]-\Q_0(\theta_k)\E_{\theta_k}[\hat\Lambda]) \Big).
\end{align*}
And we have
\begin{align*}
    \tilde\E[\hat\Lambda|\F_t^Y] &= \tilde\E[F^{-1}(T,Y(T))|\F_t^Y] = \int_{\R^d} \frac{1}{F(T,Y(t)+z)} \varphi_{T-t}(z) dz\\
    \tilde\E[\Q_T(\theta_k)|\F_t^Y] &= \Q_0(\theta_k)  \int_{\R^d} \frac{L_k(T,Y(t)+z)}{F(T,Y(t)+z)} \varphi_{T-t}(z) dz
\end{align*}
Inserting these expressions above and using It\^{o}'s lemma to derive the $\ldots dY_s$ term we end up with the stated strategy. Note that this strategy also satisfies $ \int_0^T \E\|\pi^*(t)\|^2 dt< \infty$ due to the properties of the normal distribution.

\subsection{Proof of Corollary \ref{cor:Hedging_demand_Ambiguity}}
We assume here that $d=1,$ i.e.\ there is only one stock. Recall from Theorem \ref{theo:Ambiguity} that in this case the hedging demand is given by
\begin{align}\label{eq:piH}
  \pi^H(t)=  -\frac1\sigma \Big( \frac{\alpha-\beta}{\alpha}\Big) \sum_{k=1}^m b_k \Q_0(\theta_k)  \int_{\R} \frac{ G'_k(T,Y(t)+z)}{G_k^2(T,Y(t)+z)} \varphi_{T-t}(z) dz
\end{align}
Since $\beta<\alpha$ we show that the sum is non-positive. In order to ease notation we set $Y(t)=:y$. Computing the derivative of $G_k$ we obtain
\begin{align*}
 \frac{ G'_k(T,y+z)}{G_k^2(T,y+z)} &=    \frac{\sum_{j=1}^m \Q_0(\theta_j)  \theta_j \exp\big(\theta_j(z+y)-\frac12 \theta_j^2 T\big)-\theta_k F(T,y+z)}{F^2(T,y+z)} \exp\big(\theta_k(z+y)-\frac12 \theta_k^2 T\big)
\end{align*}
Next, note that
\begin{align*}
   & \exp\big(\theta_k (z+y)-\frac12 \theta_k^2T\big) \varphi_{T-t}(z) = \exp\big(\theta_k y  -\frac12 \theta_k^2t\big) \varphi_{\theta_k(T-t),T-t}(z),
\end{align*}
where $\varphi_{\mu,\sigma^2}$ is the normal density with mean $\mu$ and variance $\sigma^2.$
Thus, the integral part in $\pi^H(t)$ can be written as
\begin{align*}
 &  \int_{\R} \frac{ G'_k(T,Y(t)+z)}{G_k^2(T,Y(t)+z)} \varphi_{T-t}(z) dz =  \\
 =& \exp\big(\theta_k y  -\frac12 \theta_k^2t\big)  \int_{\R} \frac{\sum_{j=1}^m \Q_0(\theta_j) \exp\big(\theta_j(z+y)-\frac12 \theta_j^2 T\big)(\theta_j-\theta_k)}{F^2(T,y+z)}\varphi_{\theta_k(T-t),T-t}(z) dz
\end{align*}
Plugging this expression into the sum in \eqref{eq:piH} and rearranging terms yields
\begin{align}\label{eq:piH2}
   & \sum_{k=1}^m \sum_{j=1}^m \Q_0(\theta_k)\Q_0(\theta_j) (\theta_j-\theta_k) b_k\exp\big(\theta_k y  -\frac12 \theta_k^2t\big) \int \frac{ \exp\big(\theta_j(z+y)-\frac12 \theta_j^2 T\big)}{F^2(T,z+y)}  \varphi_{\theta_k(T-t),T-t}(z)dz
\end{align}
Next, note that
\begin{align*}
  & \exp\big(\theta_j(z+y)-\frac12 \theta_j^2 T\big) \varphi_{\theta_k(T-t),T-t}(z)
   = \exp\big(\theta_jy-\frac12 \theta_j^2 t+\theta_j\theta_k (T-t)\big) \varphi_{(\theta_j+\theta_k)(T-t),T-t}(z)
\end{align*}
Using this in \eqref{eq:piH2} yields
\begin{align*}
   & \sum_{k=1}^m \sum_{j=1}^m \Q_0(\theta_k)\Q_0(\theta_j) (\theta_j-\theta_k) b_k\exp\big((\theta_k+\theta_j) y  -\frac12 (\theta_k^2+\theta_j^2)t+\theta_j\theta_k(T-t)\big) \\
   & \hspace{2cm} \cdot \int \frac{\varphi_{(\theta_j+\theta_k)(T-t),T-t}(z) }{F^2(T,z+y)} dz
\end{align*}
Let us define for $j,k=1,\ldots,m$
\begin{align*}
    a_{kj} := \Q_0(\theta_k)\Q_0(\theta_j) \exp\big((\theta_k+\theta_j) y  -\frac12 (\theta_k^2+\theta_j^2)t+\theta_j\theta_k(T-t)\big) \int \frac{\varphi_{(\theta_j+\theta_k)(T-t),T-t}(z) }{F^2(T,z+y)} dz
\end{align*}
Obviously $a_{j,k} >0$ for all $k,j$ and $a_{j,k}=a_{k,j}.$ Thus, we can continue working on our sum
\begin{align*}
 &   \sum_{k=1}^m \sum_{j=1}^m a_{k,j}(\theta_j-\theta_k) b_k =  \sum_{k=2}^m \sum_{j=1}^{k-1} a_{k,j}(\theta_j-\theta_k) b_k +  \sum_{k=1}^{m-1} \sum_{j=k+1}^m a_{k,j}(\theta_j-\theta_k) b_k\\
 =&  \sum_{k=2}^m \sum_{j=1}^{k-1} a_{k,j} (\theta_j-\theta_k) (b_k-b_j).
\end{align*}
With the assumed ordering of the $b_k's$ the statement follows.

\subsection{Proof of Corollary \ref{co:EXH}}

Since $b_k = \E_{\theta_k}[X^*]$ and $\E[\hat\Lambda(T)] = 1$, we have
\begin{align*}
    \E[X^H]&= \left(\frac{\alpha-\beta}{\alpha}\right)
    \sum_{k=1}^m \left(b_k \E[\Q_T(\theta_k)] - b_k \Q_0(\theta_k) \E_{\theta_k}[\hat\Lambda(T)]\right)\\
    &= \left(\frac{\alpha-\beta}{\alpha}\right)
    \left(\sum_{k=1}^m b_k \Q_0(\theta_k) - \E\left[\E_{\theta_k}[X^*] \, \E_{\theta_k}[\hat\Lambda(T)]\right]\right)\\
    &= \left(\frac{\alpha-\beta}{\alpha}\right)
    \left(\E[X^*] \E[\hat\Lambda(T)] - \E\left[\E_{\Theta}[X^*] \, \E_{\Theta}[\hat\Lambda(T)]\right]\right).
\end{align*}
The second equality makes use of the martingale property of $\{\Q_t(\theta_k)\}_{0 \leq t \leq T}$.

\subsection{Proof of Lemma \ref{lem:bx}}

This proof calls upon the lemma presented below. Throughout this section, $H=L^2(q_0)$ and $T_2$ denotes the integral operator defined by $T_2 \phi = \int_\D K_2(x,y) \phi(y) \,\d y$ for $\phi \in H$.

\begin{lem}\label{lem:compact}
    $T_2$ is continuous and compact on $H$.
\end{lem}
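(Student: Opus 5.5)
To prove Lemma \ref{lem:compact}, I would show that $T_2$ is a Hilbert--Schmidt operator on $H=L^2(q_0)$. Continuity and compactness then follow from standard Hilbert--Schmidt theory, e.g.\ Theorem 2.22 of \cite{kress1989linear} transferred to the weighted space.

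\textbf{Step 1: pass to an unweighted space.} The unitary map $U:H\to L^2(\D,\d x)$, $U\phi=\sqrt{q_0}\,\phi$, lets me replace $T_2$ by the integral operator $\tilde T_2 = U T_2 U^{-1}$ on the unweighted space $L^2(\D)$. For $\phi\in H$ and $\psi=U\phi$,
\[
(U T_2\phi)(x) = \sqrt{q_0(x)}\int_\D K_2(x,y)\frac{\psi(y)}{\sqrt{q_0(y)}}\,\d y ,
\]
so $\tilde T_2$ has kernel
\[
\tilde K(x,y)=K_2(x,y)\,\frac{\sqrt{q_0(x)}}{\sqrt{q_0(y)}} .
\]

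\textbf{Step 2: the Hilbert--Schmidt condition.} $\tilde T_2$ is Hilbert--Schmidt iff $\iint \tilde K(x,y)^2\,\d x\,\d y<\infty$, that is,
\[
\iint (\E_x[q_T(Y,y)])^2\,\frac{q_0(x)}{q_0(y)}\,\d x\,\d y<\infty .
\]
Assumption \ref{assum:q0} has the ratio in the opposite orientation, $q_0(y)/q_0(x)$. I would close this gap with the detailed-balance identity
\[
K_2(x,y)\,q_0(x)=K_2(y,x)\,q_0(y).
\]
To prove it, note that $q_T(Y,y)=q_0(y)\,L_y(T,Y(T))/F(T,Y(T))$, where $L_y$ is the likelihood. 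Under $\P_x$, the variable $Y(T)$ has density $\varphi_T(s-xT)$, and $\varphi_T(s-xT)=\varphi_T(s)\,L_x(T,s)$. Hence
\[
K_2(x,y)=q_0(y)\int \frac{L_x(T,s)\,L_y(T,s)}{F(T,s)}\,\varphi_T(s)\,\d s ,
\]
and the integral is symmetric in $(x,y)$. This identity is also exactly the self-adjointness claim of Lemma \ref{lem:bx}. Substituting it gives
\[
K_2(x,y)^2\,\frac{q_0(x)}{q_0(y)} = K_2(y,x)^2\,\frac{q_0(y)}{q_0(x)} .
\]
Swapping the names of the integration variables then turns the required integral into precisely the one in Assumption \ref{assum:q0}, which is finite.

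\textbf{Step 3: conclude.} A Hilbert--Schmidt operator is bounded, with operator norm at most the $L^2$ norm of its kernel, and it is a norm limit of finite-rank operators (truncate an orthonormal expansion of the kernel), hence compact. Unitary equivalence carries both properties back to $T_2$ on $H$.

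The main obstacle is the orientation mismatch in Step 2. Tonelli's theorem is harmless here because the integrands are nonnegative. What needs care is verifying the symmetry identity: the exponential tilting $\varphi_T(s-xT)=\varphi_T(s)L_x(T,s)$ and the measurability and finiteness of $F$. If that identity failed, I would instead attempt a Schur-test bound using that $K_2$ is a stochastic kernel.
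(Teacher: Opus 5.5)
There is a genuine gap, and it is in Step 2. Your Steps 1 and 3 match the paper's proof. The paper writes $T_2\phi(x)=\int \frac{K_2(x,y)}{q_0(y)}\,\phi(y)\,q_0(y)\,\d y$ and calls $K_2(x,y)/q_0(y)$ a Hilbert--Schmidt kernel on $L^2(q_0)$, which is your unitary conjugation in other clothes.

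You correctly find that the requirement is $\iint g\,\d x\,\d y<\infty$, where $g(x,y):=K_2(x,y)^2\,q_0(x)/q_0(y)$. Assumption \ref{assum:q0} instead controls $h(x,y):=K_2(x,y)^2\,q_0(y)/q_0(x)$. The paper's proof asserts the Hilbert--Schmidt property ``under Assumption \ref{assum:q0}'' without addressing this mismatch, so you have located the right issue.

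Your bridge is wrong, though. The identity you derive, $K_2(x,y)^2\frac{q_0(x)}{q_0(y)}=K_2(y,x)^2\frac{q_0(y)}{q_0(x)}$, says exactly $g(x,y)=g(y,x)$. Its right-hand side has the assumption's ratio, but the arguments of $K_2$ are reversed. Renaming the variables to restore $K_2(x,y)$ flips the ratio back, so the swap returns $\iint g$ itself, not $\iint h$.

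The two integrals are genuinely different. For the Gaussian prior of Section \ref{ssec:gaussian}, the quadratic form in the exponent of $g$ is negative definite but that of $h$ is indefinite. Hence $\iint g<\infty$ while $\iint h=\infty$, and no identity between them can hold.

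The missing step is an inequality rather than an identity, and it uses your own detailed-balance relation. Put $r=q_0(y)/q_0(x)$. Detailed balance gives $K_2(y,x)=K_2(x,y)/r$, hence
\begin{equation*}
h(x,y)=r^{2}\,g(x,y),\qquad h(y,x)=r^{-2}\,g(x,y).
\end{equation*}
Since $r^2+r^{-2}\ge 2$, Tonelli gives
\begin{equation*}
\iint g(x,y)\,\d x\,\d y\le \frac12\iint \bigl(h(x,y)+h(y,x)\bigr)\,\d x\,\d y=\iint h(x,y)\,\d x\,\d y<\infty .
\end{equation*}
Replace the last sentence of your Step 2 with this and the proof is complete, coinciding with the paper's argument. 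The printed assumption is in fact stronger than what the Hilbert--Schmidt argument needs.

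Your fallback would not have sufficed. A Schur test yields only boundedness. So does the simpler route of Jensen for the stochastic kernel plus detailed balance, which gives $\|T_2\|\le 1$ with no integrability assumption at all. Compactness is precisely what the Hilbert--Schmidt bound is needed for.
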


\begin{proof}
    Recall that $T:H \mapsto H$ is defined by 
    \begin{align*}
        (T\phi)(x) &= \int_{\R^d} K_2(x,y) \, \phi(y) \, \d y\\
        &\int_{\R^d} \frac{K_2(x,y)}{q_0(y)} \, \phi(y) \, q_0(y) \, \d y,
    \end{align*}
    where $\tilde K_2(x,y) = \frac{K_2(x,y)}{q_0(y)}$ is the $L^2(q_0)$ kernel of $T_2$.
    Under Assumption \ref{assum:q0}, $\tilde K_2(x,y)$ is a Hilbert-Schmidt kernel and thus $T_2$ is a Hilbert-Schmidt operator in $L^2(q_0)$; continuity and compactness follow directly (see Chapter XI of \cite{dunford1963linear}). 
    \qed
\end{proof}




    
    

We can now prove Lemma \ref{lem:bx}.
We first remark that since $K_1$ is only a function of $y$, $\int_{\D} b(y) K_1(y) \d y$ is a constant once $b$ is fixed. 
Therefore, we write 
\begin{equation}\label{eq:Cb}
    C[b] = \int_{\D} b(y) K_1(y) \d y
\end{equation}
and $\tilde f(x) = f(x) + \gamma C[b]$, so that \eqref{eq:Fredholm} can be written as a modified Fredholm equation, that is
\begin{align}
    b(x) + \gamma \int_{\D} b(y) K_2(x,y) \d y = 
    \tilde f(x).
    \label{eq:Fredholm_modif}
\end{align}
Then, it is possible to first solve \eqref{eq:Fredholm_modif} for a fixed $C[b]$, and to find $C[b]$ as a second step.

\textbf{Step 1: Solve \eqref{eq:Fredholm_modif} in terms of $C[b]$}

We can re-write \eqref{eq:Fredholm_modif} in compact form as
\begin{equation*}
    (I+\gamma T_2)b = f(x) + \gamma C[b].
\end{equation*}

Under the assumption that $\frac{-1}{\gamma} \notin \sigma(T_2)$ and since $T_2$ is compact (see Lemma \ref{lem:compact}), by the Fredholm alternative, the resolvent $(I + \gamma T_2)$ is bijective and by the open mapping theorem, its inverse is bounded (see Section 4.3 and Theorem 10.8 of \cite{kress1989linear} for more details).
We also note that $\tilde f \in H$ since $\E[\hat\Lambda^2(T)]<\infty$.
Therefore, $b = (I + \gamma T_2)^{-1} \tilde f$ exists and is unique.

Using the definition of $\tilde f$ gives
\begin{align}
    b &= (I + \gamma T_2)^{-1} f + \gamma C[b](I + \gamma T_2)^{-1} \mathbf{1}\nonumber\\
    &= \psi_0 + \gamma C[b] \psi_1, \label{eq:bx_psi}
\end{align}
where $\mathbf{1}$ is the constant function 1, $\psi_0 \coloneq (I + \gamma T_2)^{-1} f$ and $\psi_1 \coloneq (I + \gamma T_2)^{-1} \mathbf{1}$.


Since $\int_{\R} K_2(x,y) \d y = 1$, we can solve for $\psi_1$ explicitly. Indeed, we have
\begin{align*}
    1 = \left(1 +\gamma \int_{\D} K_2(x,y) \d y \right) \psi_1(x)
    = (1+\gamma) \psi_1(x),
\end{align*}
so that 
\begin{align}\label{eq:psi1}
    \psi_1(x) = (1 + \gamma)^{-1}
\end{align}
for all $x \in \R$.

\textbf{Step 2: Solve for C[b]}

Using \eqref{eq:bx_psi} in \eqref{eq:Cb} yields
\begin{align*}
    C[b] &= \int_{\D} K_1(y) \psi_0(y) \d y + \gamma C[b] \int_{\D} K_1(y) \psi_1(y) \d y
\end{align*}
with 
\begin{align}\label{eq:B_explicit}
    \int_{\D} K_1(y) \psi_1(y) \d y = (1+\gamma)^{-1} \int_{\D} \E_y[\hat \Lambda]q_0(y) \d y 
    = (1+\gamma)^{-1}.
\end{align}
Thus, 
\begin{align*}
    C[b] = (1+\gamma) \int_{\D} K_1(y) \psi_0(y) \d y,
\end{align*}
and since $\gamma > 0$, this solution is valid for all possible values of $\gamma$.

The final form for $b(x)$ follows from \eqref{eq:bx_psi}, \eqref{eq:psi1} and \eqref{eq:B_explicit}.

Finally, to show that $T_2$ is self-adjoint on $H$, we rely on the definition of $q_T$ as the posterior distribution of $\Theta$ given $\F^Y_T$, rather than on its explicit expression \eqref{eq:filter}.

We denote by $p_{Y(T)|\Theta}$ the distribution of $Y(T)$ conditional on $\Theta$ and by $p_{Y(T)}$, its unconditional distribution. 
Using Bayes rule, we observe that
\begin{align*}
    K(x,y) \, q_0(x) &= \E_x[q_T(Y,y)] \, q_0(x) \\
    &= \int_{\R^d} \frac{p_{Y(T)|\Theta}(s|y) \,q_0(y)}{p_{Y(T)}(s)} \, p_{Y(T)|\Theta}(s|x) \, \d s \, q_0(x)\\
    &= \int_{\R^d} \frac{p_{Y(T)|\Theta}(s|x) \,q_0(x)}{p_{Y(T)}(s)} \, p_{Y(T)|\Theta}(s|y) \, \d s \, q_0(y)\\
    &= K(y,x) \, q_0(y).
\end{align*}
    
It follows that for $\phi$ and $\psi \in H$, 
\begin{align*}
    \langle T_2 \phi, \psi\rangle &= \int_{\R^d} \int_{\R^d} K(x,y) \, \phi(y) \, \d y \, \psi(x) \, q_0(x) \, \d x\\  
    &= \int_{\R^d} \int_{\R^d} K(y,x) \, \psi(x) \, \d x \, \phi(y) \, q_0(y) \, \d y = \langle \phi, T_2 \psi\rangle.
\end{align*}

\subsection{Proof of Theorem \ref{theo:Ambiguity_cts}}




We again skip the time index $T$ and the strategy $\pi$ in the notation of $X^\pi(T)$ and $\hat\Lambda(T)$.
Again, since the market is complete, we use the dual approach and solve first for the optimal terminal. 
Then, the inner optimization problem is 
\begin{align*}
    \sup_{X \in \mathcal \F_T^Y} \E[X]+\beta \E[(X-\bar b)^2] + (\alpha-\beta) \E[(X-b(\Theta))^2], \qquad \text{s.t. } \E[\hat \Lambda X] = x_0
\end{align*}
for fixed $\bar b \in \R$ and $b \in L^2(\D)$.

Since we optimize over $\F_T^Y$ measurable random variables, we follow the same idea as in the discrete case and write
\begin{align*}
\E[(X-b(\Theta))^2] 
&= \int_\D \E_x[(X-b(x))^2] q_0(x) \d x\\
&= \int_\D \E[(X-b(x))^2 q_T(Y,x)] \d x\\
&= \E\left[\int_\D (X-b(x))^2 q_T(Y,x) \d x\right],
\end{align*}
where the last equality is obtained using Fubini's theorem, since the integrand is strictly positive.
Then, the Lagrangian associated with this problem is
\begin{align*}
    L(X, \Lambda) = \E\left[X + \beta(X-\bar b)^2 + (\alpha-\beta) \int_\D (X-b(x))^2 q_T(Y,x) \,\d x + \lambda(\hat \Lambda X - x_0)\right].
\end{align*}
A first order condition yields
\begin{align*}
    X^*(\lambda,\bar b, b) = \frac 1{2\alpha} \left(2\beta \bar b - 1-\lambda \hat\Lambda + 2(\alpha - \beta) \int_\D b(x) q_T(Y,x) \,\d x\right).
\end{align*}

The next step is equivalent to solving for the Lagrange multiplier $\lambda$ and solving the outer optimization. We consider the conditions
\begin{enumerate}[(i)]
    \item\label{cond:budget} $\E[\hat \Lambda X] = x_0$;
    \item\label{cond:b} $\E[X] = \bar b$;
    \item\label{cond:bx} $\E_x[X] = b(x)$ for all $x \in \D$.
\end{enumerate}
From conditions \eqref{cond:budget} and \eqref{cond:b}, we obtain that $\lambda = -1$ and 
\begin{align*}
 \bar b = \frac{1}{2\beta} \left(2\alpha x_0 - \Var(\hat\Lambda) - 2(\alpha - \beta)\int_\D b(x) E_x[\hat\Lambda] q_0(x) \,\d x\right),
\end{align*}
so that the optimizer is given by
\begin{align*}
    X^* = x_0 - \frac 1{2\alpha} \left(\E[\hat\Lambda^2] - \hat\Lambda\right) + \frac{\alpha-\beta}{\alpha} \int_\D b(x) \left(q_T(Y,x) - \E_x[\hat \Lambda] q_0(x)\right) \,\d x.
\end{align*}

From condition \eqref{cond:bx}, we get that $b$ solves \eqref{eq:Fredholm}, and it follows from Lemma \ref{lem:bx} that under the condition of this theorem, \eqref{eq:bx} is a solution.

In order to determine the optimal investment strategy we can again proceed as in the discrete case. In particular we have to derive the optimal wealth process under $\tilde \P$ and  determine the strategy $\pi^*$ from identifying
\begin{align*}
    \hat X^*_t = \tilde\E[X^*|\F_t^Y] = x_0+ \int_0^t \pi_s^* \sigma dY_s.
\end{align*}
Using the preceding expression  we obtain
\begin{align*}
   \tilde\E[X^*|\F_t^Y] &=  x_0 - \frac{1}{2\alpha} \Big(\E[\hat\Lambda^2]-\tilde\E[\hat\Lambda|\F_t^Y]\Big)  - \frac{\alpha-\beta}{\alpha} \int_{\D} b(x) \Big(\tilde\E[q_T(Y,x)|\F_t^Y]-\E_{x}[\hat\Lambda]q_0(x)\Big)dx .
\end{align*}
The expression $ \tilde\E[\hat\Lambda|\F_t^Y]$ is as before  and we have
\begin{align*}
    \tilde\E[q_T(Y,x)|\F_t^Y] &= q_0(x)  \int_{\R^d} \frac{\exp(x'(Y(t)+z))-\frac12 \|x\|^2t)}{F(T,Y(t)+z)} \varphi_{T-t}(z) dz
\end{align*}
Inserting these expressions  and using It\^{o}'s lemma to derive the term in $dY_s$  we end up with the stated strategy. Note that this strategy also satisfies $ \int_0^T \E\|\pi^*(t)\|^2 dt< \infty$ due to the properties of the normal distribution.

\subsection{Proof of Theorem \ref{thm:Fredhol_sol_Gaussian}}
Step 1:
    First, it is possible to rewrite the integral equation as follows. Note that it can be checked that
    \begin{itemize}
        \item[(i)] $L=l(0)$
        \item[(ii)] $K_1(y)=\varphi_0(y) l(y) = q(0,y).$
    \end{itemize}
    Hence, $K(x,y)=q(x,y)-q(0,y).$ In what follows define the operator
    $$ (Qh)(x):= \int_\R q(x,y)h(y)dy.$$
Thus, the integral equation can be written as
\begin{align}
    b(x)= f(x) +\lambda [(Qb)(x)-(Qb)(0)].
\end{align}
Further, $b(0)=f(0)=x_0.$ Thus, when we let $b(x)=x_0+h(x),$ with $ h(0)=0$ then the remaining function solves
\begin{align}\label{eq:h-fixedpoint}
    h(x)= \frac{l(x)-l(0)}{2\alpha} + \lambda [(Qh)(x)-(Qh)(0)].
\end{align}

   Step 2: The solution of this  equation can be represented with the help of an Hermite polynomial expansion since $l$ can be written as
   \begin{align}\label{eq:lHermite}
       l(x) &= \sum_{n=0}^\infty c_n H_n\Big(\frac{x-\theta}{v}\Big)
   \end{align}
   with
    \begin{equation}\label{eq:cn_explicit}
  c_n=
  \sum_{j=0}^{\lfloor n/2\rfloor}
  \frac{(-1)^j\rho^{2j}\eta^{\,n-2j}}
       {2^j j!(n-2j)!},\quad n\in\mathbb{N}_0.
\end{equation}
In order to see this, let $z=(x-\theta)/v$ and note that
\begin{align*}
    l(\theta +vz)=  \frac{1}{\sqrt{1-\rho^2}}
   \exp\!\left(-\frac{\rho^2z^2-2\eta z+\eta^2}{2(1-\rho^2)}\right) \stackrel{!}{=} \sum_{n=0}^\infty c_n H_n(z) .
\end{align*}
Let $\varphi_{0,1}$ be the density of the standard normal. From the orthogonality of the Hermite polynomials we obtain for $k\in\mathbb{N}_0$
\begin{align*}
    \int_\R \Big( \sum_{n=0}^\infty c_n H_n(z) \Big) H_k(z) \varphi_{0,1}(z)dz= c_k k!.
\end{align*}
Multiplying with $t^k/k!$ and summing from $k=0$ to $\infty$ yields:
\begin{align*}
    &\int_\R   l(\theta +vz) \sum_{k=0}^\infty \frac{t^k}{k!} H_k(z) \varphi_{0,1}(z)dz = \sum_{k=0}^\infty c_k t^k.
   \end{align*}
Using \eqref{eq:Hermite_gen}  yields
    \begin{align}
   & \int_\R   l(\theta +vz) \exp\Big(tz-\frac{t^2}{2}\Big) \varphi_{0,1}(z)dz
   = \exp\Big( t\eta -\frac12 t^2\rho^2\Big)
    =\sum_{k=0}^\infty c_k t^k.
    \end{align}
Using the Taylor series for the exponential and equating the coefficients of the power series yields  \eqref{eq:cn_explicit}.

Step 3: We investigate what happens when we apply the $Q$-operator to a Hermite polynomial. We obtain for $n\in\mathbb{N}_0$
\begin{align}\label{eq:QtoH}
    \left(Q H_n \left(\frac{\cdot -\theta}{v}\right)\right)(x)= \rho^n H_n\Big(\frac{x-\theta}{v}\Big).
\end{align}
The proof is as follows: Using the definition of the operator this is equivalent to
\begin{align*}
    \int_\R q(x,y) H_n\Big(\frac{y-\theta}{v}\Big)dy = \rho^n H_n\Big(\frac{x-\theta}{v}\Big).
\end{align*}
Multiplying both sides by $t^n/n!$, summing over $n\in\mathbb{N}_0$ and using again \eqref{eq:cn_explicit} yields the equivalent formulation
\begin{align*}
    \int_\R q(x,y) \exp\Big(t\Big( \frac{y-\theta}{v}\Big)-\frac{t^2}{2}\Big)dy = \exp\Big(t\rho \Big( \frac{x-\theta}{v}\Big)-\frac{(t\rho)^2}{2}\Big)
\end{align*}
which can be checked to be true.

Step 4: Define for $n\in\mathbb{N}:$
\begin{align*}
    \psi_n(x) := H_n\Big(\frac{x-\theta}{v}\Big)-H_n\Big(-\frac{\theta}{v}\Big)
\end{align*}
For $h(x)=b(x)-x_0$ we consider the Ansatz: 
\begin{align*}
    h(x) = \sum_{n=1}^\infty b_n \psi_n(x).
\end{align*}
Inserting this into equation \eqref{eq:h-fixedpoint} and using \eqref{eq:lHermite} as well as \eqref{eq:QtoH} yields
\begin{align*}
    \sum_{n=1}^\infty b_n \psi_n(x) &= \frac{1}{2\alpha} \sum_{n=1}^\infty c_n \psi_n(x) + \lambda \Big[\sum_{n=1}^\infty b_n \rho^n  H_n\Big(\frac{x-\theta}{v}\Big) - \sum_{n=1}^\infty  b_n \rho^n H_n\Big(-\frac{\theta}{v}\Big)\Big]\\
    &= \frac{1}{2\alpha} \sum_{n=1}^\infty c_n \psi_n(x) + \lambda \sum_{n=1}^\infty b_n \rho^n  \psi_n(x).
\end{align*}
Comparing the coefficients yields: $(1-\lambda \rho^n)b_n=c_n/2\alpha.$ The case $\lambda \rho^n =1$ cannot occur in our case, since $\lambda <0.$ This concludes the proof.

\subsection{Proof of Theorem \ref{theo:X_Gaussian}}

Recall that 
\begin{equation*}
    \begin{split}
    X^* &=\; x_0-\frac{1}{2\alpha}\Big(\E[\hat\Lambda^2(T)] - \hat\Lambda(T)\Big) \\ 
    & + \Big(\frac{\alpha-\beta}{\alpha}\Big) \int_{\D} b(x) \left(q_T(Y,x) - \E_x[\hat\Lambda(T)] q_0(x) \right) \d x,
    \end{split}
\end{equation*}
and we use 
\begin{equation*}
    b(x) = x_0+\frac{1}{2\alpha} \sum_{n=1}^\infty \frac{c_n}{1-\lambda \rho^n}\Big[ H_n\Big(\frac{x-\theta}{v}\Big)-  H_n\Big(-\frac{\theta}{v}\Big)\Big].
\end{equation*}

Using the notation of the proof of Theorem \ref{thm:Fredhol_sol_Gaussian}, the terms that do not depend on $Y(1)$ are
\begin{align*}
    C_1&=x_0 - \frac{L}{2\alpha} - \lambda \int_\R q(0,x) b(x) \,\d x\\
    &=x_0 (1-\lambda) - \frac{L}{2\alpha} + \frac{\lambda}{2\alpha} \sum_{n=1}^\infty \frac{c_n}{1-\lambda \rho^n} H_n \left(-\frac{\theta}{v}\right)\\
    &\quad - \frac{\lambda}{2\alpha} \sum_{n=1}^\infty \frac{c_n}{1-\lambda \rho^n} \left(QH_n\left(\frac{\cdot -\theta}{v}\right)\right)(0)\\
    &= x_0 (1-\lambda) - \frac{L}{2\alpha} + \frac{\lambda}{2\alpha} \sum_{n=1}^\infty \frac{c_n}{1-\lambda \rho^n} \left((1-\rho^n) H_n \left(-\frac{\theta}{v}\right)\right),
\end{align*}
where \eqref{eq:QtoH} is used to obtain the last equality.

The terms that depend on $Y(1)$ are 
\begin{align*}
    &\frac 1{2\alpha} \hat\Lambda(1) + \lambda\int_\R b(x) q_1(Y,x) \,\d x\nonumber \\
    \begin{split}
     &=\frac 1{2\alpha} \hat\Lambda(1) + \lambda\left(x_0 - \frac 1{2\alpha} \sum_{n=1}^\infty \frac{c_n}{1-\lambda \rho^n} H_n\left(\frac{-\theta}{v}\right)\right) \int_\R q_1(Y,x) \,\d x \\
    &\quad + \frac{\lambda}{2\alpha}\int_\R \sum_{n=1}^\infty \frac{c_n}{1-\lambda \rho^n} H_n\left(\frac{x-\theta}{v}\right) q_1(Y,x) \, \d x.
    \end{split}
\end{align*}
Careful integration yields
\begin{align*}
    \int_\R q_1(Y,x) \,\d x = \frac{\hat\Lambda(1)}{\sqrt{v^2+1}}\exp\left(\frac{Y(1)^2}{2} - \frac{(Y(1)-\theta)^2}{2(v^2+1)}\right)
\end{align*}
and 
\begin{align*}
    &\frac{\lambda}{2\alpha}\int_\R \sum_{n=1}^\infty \frac{c_n}{1-\lambda \rho^n} H_n\left(\frac{x-\theta}{v}\right) q_1(Y,x) \, \d x\\
    &= \frac{\lambda}{2\alpha} \hat\Lambda(1) \int_\R \sum_{n=1}^\infty \frac{c_n}{1-\lambda \rho^n} H_n\left(\frac{x-\theta}{v}\right) \frac 1{\sqrt{2\pi v^2}} e^{xY(1) - \frac{x^2}{2} - \frac{(x-\theta)^2}{2v^2}} \, \d x\\ 
    &= \frac{\lambda}{2\alpha} \frac{\hat\Lambda(1)}{\sqrt{v^2+1}}\exp\left(\frac{Y(1)^2}{2} - \frac{(Y(1)-\theta)^2}{2(v^2+1)}\right)\sum_{n=1}^{\infty}\frac{c_n\,\rho^{n/2}}{1-\lambda\rho^n}H_n\!\left(\frac{(Y(1)-\theta)\sqrt{\rho}}{v}\right),
\end{align*}
and the solution follows.


\bibliography{MVvsUtility}

\end{document}